\pgfplotsset{compat=newest}
\algnewcommand{\Initialize}{%
  \State \textbf{Initialize:}
}
\algnewcommand{\Output}{%
  \State \textbf{Output:}
}
\DeclareRobustCommand{\extstart}{%
  \@bsphack
  \leavevmode
  \normalcolor %
  \@esphack
}
\DeclareRobustCommand{\extend}{%
  \@bsphack
  \normalcolor
  \@esphack
}
\newtheorem{theorem}{Theorem}
\newtheorem{lemma}{Lemma}
\newtheorem{proposition}{Proposition}
\newtheorem{definition}{Definition}
\newtheorem{claim}{Claim}
\newtheorem{assumption}{Assumption}
\newtheorem{remark}{Remark}
\definecolor{bleudefrance}{rgb}{0.19, 0.55, 0.91}
\newcommand{\cF}{\mathcal{F}}
\newcommand{\cH}{\mathcal{H}}
\newcommand{\cO}{\mathcal{O}}
\newcommand{\cP}{\mathcal{P}}
\newcommand{\cX}{\mathcal{X}}
\newcommand{\cY}{\mathcal{Y}}
\newcommand{\cZ}{\mathcal{Z}}
\newcommand{\rvX}{\mathsf{X}}
\newcommand{\rvY}{\mathsf{Y}}
\newcommand{\rvZ}{\mathsf{Z}}
\newcommand{\inpara}[1]{\ensuremath{\left(#1\right)}}
\newcommand{\inbracelets}[1]{\ensuremath{\left\{#1\right\}}}
\newcommand{\EE}{\mathbb{E}}
\newcommand{\entropy}[2]{\ensuremath{\mathrm{H}_{#1}\left( #2 \right)}}
\newcommand{\mutinf}[2]{\ensuremath{\mathrm{I}_{#1}\left( #2 \right)}}
\newcommand{\kl}[2] {\ensuremath{\mathrm{D}_q \left( #1 \| #2 \right)}}
\definecolor{my_blue}{rgb}{0,0,0.8}
\definecolor{my_darkblue}{rgb}{0.2,0.2,0.6}
\definecolor{my_lightblue}{rgb}{0,0,0.8}
\definecolor{shadecolor}{gray}{.65} 
\definecolor{my_red}{rgb}{0.63922,0.14902,0.21961}
\definecolor{my_redgray}{rgb}{0.5,0.14902,0.21961}
\definecolor{my_lightred}{rgb}{0.63922,0.14902,0.21961}
\definecolor{my_green}{rgb}{0.2,0.6,0.25}
\definecolor{my_lightgreen}{rgb}{0.5,0.9,0.55}
\definecolor{blond}{rgb}{0.98, 0.94, 0.75}
\definecolor{caucasian}{rgb}{1,0.8,0.6}
\definecolor{richelectricblue}{rgb}{0.03, 0.57, 0.82}
\definecolor{bleudefrance}{rgb}{0.19, 0.55, 0.91}
\definecolor{ogreen}{rgb}{0,0.5,0}
\definecolor{xgray}{rgb}{0.9, 0.9, 0.9}
\definecolor{TUMBlue}{RGB}{0,101,189} %
\definecolor{TUMBlueDark}{RGB}{0,82,147} %
\definecolor{TUMBlueLight}{RGB}{152,198,234} %
\definecolor{TUMBlueMedium}{RGB}{100,160,200} %
\definecolor{TUMIvory}{RGB}{218,215,203} %
\definecolor{TUMGreen}{RGB}{162,173,0} %
\definecolor{TUMGray}{gray}{0.6} %
\definecolor{TUMGreenDark}{RGB}{0,124,48} %
\definecolor{TUMRed}{RGB}{196,7,27} %
\newcommand{\define}{\ensuremath{:=}}
\DeclareMathOperator*{\argmax}{arg\,max}
\newcommand{\timeidx}{\ensuremath{n}}
\newcommand{\nchannels}{\ensuremath{k}}
\newcommand{\symtx}{\ensuremath{\mathcal{X}}}
\newcommand{\symrx}{\ensuremath{\mathcal{Y}}}
\newcommand{\channelidx}{\ensuremath{j}}
\newcommand{\channel}[1][\channelidx]{\ensuremath{W_{#1}}}
\newcommand{\channellemma}{\ensuremath{V}}
\renewcommand{\mutinf}[1]{\mathrm{I}\left(#1\right)}
\renewcommand{\entropy}[1]{\mathrm{H}\left(#1\right)}
\newcommand{\nsamples}{\ensuremath{n}}
\newcommand{\confidence}{\ensuremath{\delta}}
\newcommand{\roundidx}{\ensuremath{r}}
\newcommand{\roundchannels}[1][\roundidx]{\ensuremath{\mathcal{C}_{#1}}}
\newcommand{\epsci}{\ensuremath{\varepsilon}}
\newcommand{\confci}{\ensuremath{\alpha}}
\newcommand{\epsround}[1][\roundidx]{\ensuremath{\varepsilon_{#1}}}
\newcommand{\deltaround}[1][\roundidx]{\ensuremath{\confidence_{#1}}}
\newcommand{\pullsround}[1][\roundidx]{\ensuremath{\timeidx_{#1}}}
\newcommand{\pacchannelround}{\ensuremath{\channelidx_\roundidx}}
\newcommand{\suboptimalitygap}[1][\channelidx]{\Delta_{#1}}
\newcommand{\capacitychannel}[1][\channelidx]{\ensuremath{\mathrm{C}\left(W_{#1}\right)}}
\newcommand{\capacitychannelest}[1][\channelidx]{\ensuremath{\hat{\mathrm{C}}^{n_{#1}}(W_{#1})}}
\newcommand{\capacitychannelestn}[1][\channelidx]{\ensuremath{\hat{\mathrm{C}}^{n}(W_{#1})}}
\newcommand{\capacitychannelestr}[1][\channelidx]{\ensuremath{\hat{\mathrm{C}}^{\pullsround}(W_{#1})}}
\newcommand{\history}{\cH}
\newcommand{\outputchannel}{\hat{\channelidx^\star}}
\newcommand{\pacoutputchannel}{\hat{\channelidx}_{\epsci}}
\newcommand{\ourpolicy}{\texttt{BestChanID}\xspace}
\newcommand{\pacpolicy}{\texttt{NaiveChanSel}\xspace}
\newcommand{\medianpac}{\texttt{MedianChanEl}\xspace}
\newcommand{\timeroundstart}[1][\roundidx]{\ensuremath{\tau_{#1}}\xspace}
\newcommand{\maxrounds}{\ensuremath{R}}
\newcommand{\channelpart}{\ensuremath{\mathcal{A}_s}}
\newcommand{\channelpartround}[1][\roundidx]{\ensuremath{\mathcal{C}_{#1, s}}}
\newcommand{\card}[1]{\ensuremath{\vert #1 \vert}}
\renewcommand{\kl}[2]{\ensuremath{\mathrm{D}\left(#1 \|#2\right)}}
\newcommand{\bentropy}[1]{\ensuremath{h_\textrm{b}\left(#1\right)}}
\newcommand{\dtv}[2]{\ensuremath{d_{\textrm{TV}}\left(#1,#2\right)}}
\newcommand{\dist}{\ensuremath{w}}
\newcommand{\change}{\ensuremath{a}}
\newcommand{\prmeasure}{\ensuremath{\mathcal{W}}}
\newcommand{\kldiff}{\ensuremath{\zeta}}
\newcommand{\sampleidx}{\ensuremath{l}}
\newcommand{\nsenses}[1][\channelidx]{\ensuremath{N_{x, #1}}}
\newcommand{\obssenses}{\ensuremath{Z_{x, \channelidx, l}}}
\newcommand{\linfac}{\ensuremath{c_1}}
\newcommand{\logfac}{\ensuremath{c_2}}
\newcommand{\tmpcrs}{\ensuremath{c_3}}
\newcommand{\logfactor}{\ensuremath{\zeta}} %
\newcommand{\tmpx}{\ensuremath{\nu}}
\newcommand{\logrcubed}[1][\roundidx]{\ensuremath{f(#1)}}
\newcommand{\rootfactor}{\ensuremath{\gamma}} %
\newcommand{\rootfactorsplit}{\ensuremath{2\rootfactor}} %
\newcommand{\nfactor}[1][\alpha]{\ensuremath{g\left(#1\right)}}
\newcommand{\nfactorprime}[1][\deltaround]{\ensuremath{\Tilde{g}(\alpha)}}%
\newcommand{\logfacinst}[1][\alpha]{\ensuremath{t(#1)}}
\newcommand{\logfacvar}[1][\alpha]{\ensuremath{\Tilde{t}(#1)}}
\newcommand{\tmpconst}{\ensuremath{\rho}}
\newcommand{\logsq}{\ensuremath{b(\deltaround, \epsround})}
\begin{document}

\title{Maximal-Capacity Discrete Memoryless\\ Channel Identification} %

\author{%
    \IEEEauthorblockN{Maximilian Egger, %
                    Rawad Bitar, %
                    Antonia Wachter-Zeh, %
                    Deniz Gündüz and %
                    Nir Weinberger%
                    } \vspace{-.5cm}

    \thanks{M.E., R.B. and A.W-Z. are with the Technical University of Munich. Emails: \{maximilian.egger, rawad.bitar, antonia.wachter-zeh\}@tum.de. D.G. is with Imperial College London. Email: d.gunduz@imperial.ac.uk. N.W. is at Technion --- Israel Institute of Technology. Email: nirwein@technion.ac.il.}
    \thanks{This project has received funding from the German Research Foundation (DFG) under Grant Agreement Nos. BI 2492/1-1 and WA 3907/7-1. The work of N.W. was partly supported by the Israel Science Foundation (ISF), grant no. 1782/22. \extstart Parts of the results were presented at IEEE International Symposium on Information Theory (ISIT), 2023 \cite{egger2023maximal}. \extend}}

\maketitle

\begin{abstract}
  The problem of identifying the channel with the highest capacity among several discrete memoryless channels (DMCs) is considered. The problem is cast as a pure-exploration multi-armed bandit problem, which follows the practical use of training sequences to sense
  the communication channel statistics. A capacity estimator is proposed and tight confidence bounds on the estimator error are derived. Based on this capacity estimator, a gap-elimination algorithm termed \ourpolicy is proposed, which is oblivious to the capacity-achieving input distribution and is guaranteed to output the DMC with the largest capacity, with a desired confidence. \extstart Furthermore, two additional algorithms \pacpolicy and \medianpac, that output with certain confidence a DMC with capacity close to the maximal, are introduced. Each of those algorithms is beneficial in a different regime \extend and can be used as a subroutine in \ourpolicy. The sample complexity of \extstart all \extend algorithms %
  is analyzed as a function of the desired confidence parameter, the number of channels, and the channels' input and output alphabet sizes. The cost of best channel identification is shown to scale \extstart quadratically with the alphabet size, and a fundamental lower bound for the required number of channel senses to identify the best channel with a certain confidence is derived. \extend
\end{abstract}

\begin{IEEEkeywords}
Best-arm identification, Capacity estimation, Channel identification, Discrete Memoryless Channels, Multi-Armed Bandits, Pure-exploration, Sample complexity
\end{IEEEkeywords}

\section{Introduction}

We consider an information-theoretic instance of a pure exploration problem in the setting of noisy channel coding. From a communication system perspective, exploration \extstart of the channel state \extend is an important task, which must be routinely performed in order to maintain the highest possible communication rate based on the current channel conditions \cite[Ch. 10]{proakis2001digital}\cite[Ch. 9]{barry2012digital}. Typically, in such systems, before actual data is being transmitted, or whenever an uncertainty of the channel conditions arises, the transmitter sends a training sequence that is agreed upon in advance. Based on the received signal and the knowledge of the training sequence, the receiver estimates certain statistical characteristics of the channel, and then may adapt its various components accordingly, such as equalizers and decoders. In many cases, the receiver can also communicate its acquired knowledge back to the transmitter (e.g., in a two-way link). In this case, the transmitter can also adapt its operation, e.g., its encoding rate, its codebook, physical parameters of the modulator, and so on. %
The duration of the training phase is a natural performance criterion in wireless systems, along with the quality of the obtained learned components and the reliability of the decision. 

A prototypical and simple instance of this general channel estimation/learning problem is that of channel \textit{selection}. Consider a transmitter that can communicate to the receiver using one of $\nchannels$ parallel channels, whose statistical properties are unknown in advance. In this scenario, the adaptation problem becomes a simple decision problem between $\nchannels$ options. For instance, orthogonal-frequency-division-multiplexing (OFDM) systems use $\nchannels$ different frequency bands. The system may choose which of the available frequency bands to allocate to a user, typically those with the strongest frequency response and hence, the largest capacity \cite{liu2014channel}. After choosing the active sub-channels, the system may optimize its power allocation via the water-pouring rule \cite[Ch. 9]{cover2012elements}, or even both rate- and power-allocation \cite[Ch. 9]{goldsmith2005wireless}. Similarly, in multiple-input multiple-output (MIMO) systems \cite{tse2005fundamentals,biguesh2006training}, the receiver may select the transmit-antenna with the maximal fading gain.  

In the case of Gaussian channels, the objective reduces to parameter estimation, i.e., estimating the channel gain or the channel matrix in the case of MIMO; since the same Gaussian input distribution maximizes the capacity of all the channels. Instead, we consider this problem in the case of discrete memoryless channels (DMCs). To the best of our knowledge, and to some extent, to our surprise, this problem has not been studied in the literature. 
Specifically, we assume that communication may take place over one of $\nchannels$ possible DMCs with the same finite input and output alphabets. During the training phase, the transmitter may transmit training symbols over each channel and sample their output. Based on these samples, the receiver's goal is to decide which channels have the maximal capacity, which can be achieved upon optimizing the input distribution. From the perspective of multi-armed bandit (MAB) problems, this falls into the category of \textit{pure-exploration}  \cite[Ch. 33]{lattimore2020bandit}. In the standard MAB problem, the agent sequentially chooses (pulls) one of $\nchannels$ possible arms to maximize its cumulative reward. To achieve this without knowing the reward of each arm in advance, the agent has to balance exploration and exploitation. In contrast, in the pure-exploration problem of best-arm identification in the fixed-confidence setting \cite{jamieson2014best}, the only goal is to find the arm with maximal mean reward, with a desired reliability. In contrast to the standard MAB problem, the potential loss during exploration is not accumulated, and the agent's decision only has to comply with the target reliability level while exploiting a minimal number of training symbols.
\extstart
The fundamental difficulty of such a problem is quantified by lower bounds that provide insights on the minimum number of required arm pulls to determine the best arm with a certain confidence, e.g., \cite{audibert2010best,kaufmann2016complexity,garivier2016optimal}.

One variant of the best-arm identification problem relaxes the objective from outputting the maximal-reward arm to only outputting an arm whose reward is $\varepsilon$-close to the maximal expected reward under a given target reliability level, say, with probability at least $1-\delta$. This naturally reduces the number of rounds required for a given reliability. Such algorithms are termed $(\varepsilon, \delta)$-PAC, abbreviated for probably approximately correct (PAC) \cite{evendar2006action}.
\extend

Exploration in wireless communication networks has been formulated as a MAB problem
by various authors, e.g., \cite{coucheney2015multi,lunden2015spectrum,blasco2015multi,oksanen2016machine,maghsudi2016multi,nikfar2017relay,yang2018multi,boldrini2018mumab,besson2018aggregation,bonnefoi2018multi}. The problem of estimating Shannon-theoretic measures is well-explored from various perspectives, starting from \cite{Antos2001,Paninski2003}, and then extended to large alphabets \cite{Jiao2015,Jiao2017,wu2016minimax},
to the rate-distortion function \cite{harrison2008estimation}, to analog
sources \cite{wang2009universal}\extstart, to mutual information \cite{shamir2010learning,stefani2014confidence} \extend and more. A MAB problem, in which
the instantaneous reward is based on self-information, was recently
considered in \cite{weinberger2022upper,weinberger2022multi}. A DMC input distribution adaptation method was proposed and analyzed in~\cite{tridenski2019channel}. \extstart The problem of best channel identification through MABs with pure exploration was first introduced in an earlier version of this paper \cite{egger2023maximal}.
\extend

\extstart
\paragraph{Contributions}
We formulate the problem of identifying the DMC with the maximal capacity among a set of $\nchannels$ DMCs as a MAB problem in the pure-exploration regime with fixed confidence. We propose a method to estimate the capacity of a DMC by sampling its output with a given sequence of known input symbols. We provide confidence intervals for the error of the estimator, of a sub-Gaussian nature, that is, an interval length whose size is roughly $O(\sqrt{{\log({1}/{\confidence})}/{\nsamples}})$, for $\nsamples$ samples and confidence level $\confidence$. To obtain this confidence interval for the capacity estimation error, we invoke the minimax characterization of capacity \cite{csiszar1972class,kemperman1974shannon}, which expresses the capacity as the minimization over output distributions, cf.~\eqref{eq: Csiszar minimax capacity}, rather than maximization over input distributions. This allows us to define a \textit{pseudo-capacity}, which replaces the minimization of the output alphabet to a strict interior of the simplex; concretely, the probability of each output letter is restricted to be larger than $\eta >0$. The resulting pseudo-capacity approximates the true capacity, and the value of $\eta$ controls the bias-variance trade-off in the estimation error analysis. We utilize the confidence interval bound on the capacity estimation error and propose a gap-elimination algorithm for identifying the DMC with the maximal capacity from $\nchannels$ possible candidates, with high probability. We introduce two PAC algorithms that are optimal in different parameter regimes and can be used as standalone or in conjunction with the proposed gap-elimination algorithm. These algorithms identify an $\epsci$-best DMC whose capacity is, at a high level, at most $\epsci$ far from that of the best channel. We analyze the sample complexity of the proposed algorithms both theoretically and experimentally. Given a set of candidate DMCs parameterized by their transition matrices, we derive a fundamental lower bound on the number of channel senses required to identify the one with the highest capacity with a probability of at least $1-\delta$.

\paragraph{Outline}
In \cref{sec:prelim}, we introduce the system model for exploring DMCs with unknown capacities through sensing. In \cref{sec:confidence}, we establish a confidence bound for capacity estimation under deterministic sensing. In \cref{sec:best_channel_id}, we introduce \ourpolicy, a round-based elimination algorithm for best channel identification in the context of MABs with fixed confidence, which relies on the existence of algorithms that output a channel with capacity being close to the capacity of the best channel. In \cref{sec:pac}, we investigate such algorithms. Lastly,  in \cref{sec:converse}, we establish a fundamental converse result of a lower bound on the number of required channel senses for best channel identification. We conclude the paper in \cref{sec:conclusion}. \extstart For a smooth flow of the paper, the majority of the proofs are deferred to the appendix. \extend
\extend

\section{Preliminaries and System Model}\label{sec:prelim}
Throughout the paper, random variables are denoted by typesetter letters, e.g., $\rvZ$, and sets by calligraphic letters, e.g., $\cZ$. The probability distribution of a random variable $\rvX \in \cX$ is denoted by $P_\rvX$, and the probability simplex over the alphabet $\cX$ is denoted by $\cP(\cX)$. For integers $r_1,r_2\in \mathbb{N}_{+}$ such that $r_1<r_2$,  $[r_1] \define \{1,\dots,r_1\}$ and $[r_1,r_2] \define \{r_1,\dots,r_2\}$. All logarithms are taken to the natural base unless stated otherwise. The Kullback-Leibler (KL) divergence between two distributions $P_\rvX$ and $Q_\rvX$ is denoted by $\kl{P_\rvX}{Q_\rvX}$, the binary entropy function by $\bentropy{t}:=-t\log(t)-(1-t)\log(1-t)$, and the total variation distance by $\dtv{P_\rvX}{Q_\rvX}:=\sum_{x\in\mathcal{X}}|P_\rvX(x)-Q_\rvX(x)|$. Let alphabets $\cX$ and ${\cY}$ be given. The mutual information functional between the input distribution $P_{\rvX}$ over $\cX$ and the output distribution over ${\cY}$ induced by channel $W_\channelidx$ is denoted by $\mutinf{P_{\rvX};W_\channelidx}$, and the capacity of the DMC $W_{\channelidx}$ is denoted by $\capacitychannel$. 

Let $\{W_{\channelidx}\}_{\channelidx\in[k]}$
be a collection of $k$ DMCs, $W_{\channelidx}\colon{\cX}\to{\cY}$. 
The goal of the \emph{learner} is to identify the index  $\channelidx^\star$ of the DMC with the largest capacity, i.e., $\channelidx^{\star}\define\argmax_{\channelidx \in [k]} \capacitychannel$, which we assume to be unique (otherwise, identifying
the best channel is impossible), and that the maximal channel capacity is finite.
The \emph{suboptimality gap} $\suboptimalitygap$ of a certain channel $\channelidx$ is defined as the difference between its capacity and the capacity of the best channel, i.e.,
\begin{align*}
    \suboptimalitygap \define \capacitychannel[\channelidx^{\star}] - \capacitychannel.
\end{align*}

To identify the best channel, the learner is given limited access to sense the DMCs, such that for each channel \extstart $\channelidx$ she can obtain a set of input-output samples $\{X_{i},Y_{i}\}$, \extend where $Y_{i}\sim W_\channelidx(\cdot\mid X_{i})$ are independently and identically distributed (i.i.d).

Specifically, at time $t\in\mathbb{N}_{+}$, the learner chooses
a symbol $X_{t}\in{\cX}$ and a channel index $\channelidx_{t}$ and observes
the output $Y_{t}\sim W_{\channelidx_{t}}(\cdot\mid X_{t})$. The action of the
learner is thus the pair $A_{t}=(X_{t},\channelidx_{t})$, and is a function
of the past actions and observations $\history_{t}:=\{(X_{s},\channelidx_{s},Y_{s})\}_{s\in[t-1]}$
(and thus is random). A policy $\pi=(\pi_{t})_{t\in\mathbb{N}_{+}}$
is a mapping from the history $\history_{t}$ to an action, $A_{t}=\pi_{t}(\history_{t})$.

Given a fixed confidence level $\delta>0$, the learner is required to find a policy $\pi$ 
and an associated stopping time $\tau$ that is adapted to the filtration ${\cF}=({\cF}_{t})_{t\in\mathbb{N}_{+}}$, where ${\cF}_{t}=\sigma(\history_{t})$. The policy determines which channel the learner should sample at each time instance $t\in [\tau]$ before stopping. In addition, the learner is required to find an ${\cF}_{\tau}$-measurable selection function $\psi$ for which \mbox{$\Pr\left[\tau<\infty\text{ and }\psi(\history_{\tau})\neq \channelidx^{\star}\right]\leq\delta$},
that also achieves the minimal $\mathbb{E}[\tau]$. In other words, the figure of merit of a policy $\pi$ is its expected stopping time. \extstart Assume that the learner has decided to stop (because $t=\tau(\history_{t})$) such that for each channel she is given $\nsamples_\channelidx$ input-output samples $\{X_{i},Y_{i}\}_{i\in[\nsamples_\channelidx]}$. Since the channels are independent, \extend
the optimal decision function $\phi$
uses the samples from each respective channel to obtain an estimate $\capacitychannelest$ of its capacity $\capacitychannel$, and outputs the channel with the highest estimate.

\section{Capacity Estimator and A Confidence Interval}\label{sec:confidence}

In this section, we propose an estimator for the capacity of a DMC
$W\colon{\cal X}\to{\cal Y}$, given by $\mathrm{C}(W)=\max_{P_{\rvX}}\mutinf{P_{\rvX};W}$, and state a confidence interval on the estimation error. We omit the channel index $\channelidx$ in this section for ease of notation.
To estimate the capacity, the learner is provided with $\nsamples$
samples of the channel output obtained for her choice of inputs. A
straightforward approach to capacity estimation is to decompose the
mutual information $\mutinf{P_{\rvX};W}$ as $\mutinf{P_{\rvX};W}=\entropy{P_{\rvY}}-\entropy{P_{\rvY|\rvX}\mid P_{\rvX}}$,
and then separately estimate each of the entropy terms. Confidence
intervals for the entropy terms will then lead to a confidence interval
for the mutual information for a fixed $P_{\rvX}$. Such confidence intervals
for entropy estimation were originally derived in \cite{Antos2001}
(see also \cite{shamir2010learning,Ho_Yeung2010,weinberger2022upper}).
Nonetheless, obtaining a confidence interval for the \emph{capacity} of a DMC is more challenging since the capacity is the maximum
of the mutual information over \textit{all} possible input distributions. To
address this issue, we utilize the \emph{minimax characterization of capacity \cite{csiszar1972class,kemperman1974shannon}} to obtain uniform convergence bounds. This characterization will also justify our choice to allocate an equal number of samples to
each input letter in ${\cal X}$. 

In this section, it will be convenient to refer to $W$ as a collection of conditional distributions
$\{W_{\rvY|\rvX=x}\}_{x\in{\cal X}}$. With
this notation, we may recall that the minimax capacity
theorem \cite{csiszar1972class,kemperman1974shannon} states that 
\begin{equation}
\mathrm{C}(W)=\min_{Q_{\rvY}\in{\cal P}({\cal Y})}\max_{x\in{\cal X}}\kl{W_{\rvY|\rvX=x}}{ Q_{\rvY}},\label{eq: Csiszar minimax capacity}
\end{equation}
where ${\cal P}({\cal Y})$ is the probability simplex over ${\cal Y}$.
A simple consequence of the proof of~\eqref{eq: Csiszar minimax capacity}
is that the minimizer $Q_{\rvY}^\star$ is in fact the capacity-achieving output
distribution $P_{\rvY}^{\star}$, i.e., the output distribution induced by
a capacity achieving input distribution $P_{\rvX}^{\star}$. With a slight
abuse of notation, let us denote the strict interior of ${\cal P}({\cal Y})$,
specifically, the set of output distributions for which any symbol
has a probability mass larger than $\eta>0$, as
\[
{\cal P}_{\eta}({\cal Y}):=\{Q_{\rvY}\in{\cal P}({\cal Y})\colon\min_{y\in{\cal Y}}Q_{\rvY}(y)\geq\eta\}.
\]
We may then define the \emph{pseudo-capacity }
\begin{equation}
\mathrm{C}_{\eta}(W):=\min_{Q_{\rvY}\in{\cal P}_{\eta}({\cal Y})}\max_{x\in{\cal X}}\kl{W_{\rvY|\rvX=x}}{Q_{\rvY}},\label{eq: pseudo-capacity}
\end{equation}
which is obtained by replacing the minimization over ${\cal P}({\cal Y})$
with a minimization over ${\cal P}_{\eta}({\cal Y})\subset{\cal P}({\cal Y})$.
It should be noted that the solution $Q_{\rvY}^{\star}$ of the minimization
problem in~\eqref{eq: pseudo-capacity} may not lead to a valid output
distribution, in the sense that there may be no input
distribution $P_{\rvX}$ for which $\sum_{x\in{\cal X}}P_{\rvX}(x)W_{\rvY|\rvX}(y | x)$
corresponds to $Q_{\rvY}^{\star}$. For example, the erasure symbol for a
binary erasure channel with erasure probability $\eta/2$ cannot have
probability larger than $\eta$, no matter what the input distribution
is. Nonetheless, the value of $\mathrm{C}_{\eta}(W)$ will serve as
an upper bound to the true capacity. 

By continuity of the KL divergence in the interior of the simplex,
it can be noted that $\mathrm{C}(W)=\lim_{\eta\downarrow0}\mathrm{C}_{\eta}(W)$.
The following lemma is a refinement of this observation: 
\begin{lemma}
\label{lem: capacity bias}For any DMC $\channellemma \colon{\cal X}\to{\cal Y}$, it holds that
\[
0\leq\mathrm{C}_{\eta}(\channellemma)-\mathrm{C}(\channellemma)\leq \extstart 2\eta |{\cal Y}|.\extend
\]
\end{lemma}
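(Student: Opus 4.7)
The lower bound is immediate from the definitions: since $\mathcal{P}_\eta(\mathcal{Y}) \subset \mathcal{P}(\mathcal{Y})$, the minimization in \eqref{eq: pseudo-capacity} is taken over a smaller set than the one in \eqref{eq: Csiszar minimax capacity}, and hence $\mathrm{C}_\eta(W) \geq \mathrm{C}(W)$.

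For the upper bound, the plan is to exhibit a single distribution $\tilde{Q}_\rvY \in \mathcal{P}_\eta(\mathcal{Y})$ whose max-KL to the channel rows is close to the true capacity, and then use it as a feasible point in the minimization defining $\mathrm{C}_\eta(W)$. The natural candidate is the smoothing of the capacity-achieving output distribution $P_\rvY^\star$ by the uniform distribution $U_\rvY$ on $\mathcal{Y}$: set
\[
\tilde{Q}_\rvY := (1-\alpha)\, P_\rvY^\star + \alpha\, U_\rvY, \qquad \alpha := \eta |\mathcal{Y}|,
\]
assuming $\alpha < 1$ (otherwise the claim is vacuous since $\mathcal{P}_\eta(\mathcal{Y})$ is empty or a singleton). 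By construction $\tilde{Q}_\rvY(y) \geq \alpha/|\mathcal{Y}| = \eta$ for every $y$, so $\tilde{Q}_\rvY \in \mathcal{P}_\eta(\mathcal{Y})$.

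The second step is to control the KL divergence with respect to $\tilde{Q}_\rvY$. For every $x \in \mathcal{X}$, the pointwise inequality $\tilde{Q}_\rvY(y) \geq (1-\alpha)\, P_\rvY^\star(y)$ yields
\[
\kl{W_{\rvY|\rvX=x}}{\tilde{Q}_\rvY} \;\leq\; \kl{W_{\rvY|\rvX=x}}{P_\rvY^\star} \;+\; \log \frac{1}{1-\alpha}.
\]
Taking the maximum over $x \in \mathcal{X}$ and using that $P_\rvY^\star$ attains the minimax in \eqref{eq: Csiszar minimax capacity}, we obtain
\[
\mathrm{C}_\eta(W) \;\leq\; \max_{x \in \mathcal{X}} \kl{W_{\rvY|\rvX=x}}{\tilde{Q}_\rvY} \;\leq\; \mathrm{C}(W) + \log \frac{1}{1-\alpha}.
\]

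The final step is to turn the logarithmic bound into the linear bound $2\eta|\mathcal{Y}|$. Using the elementary inequality $-\log(1-x) \leq 2x$ valid for $x \in [0, 1/2]$, and noting that the regime $\eta |\mathcal{Y}| > 1/2$ can either be excluded as uninteresting (the claimed bound becomes trivial compared to $\log|\mathcal{Y}|$, which is a universal upper bound on $\mathrm{C}_\eta(W) - \mathrm{C}(W)$) or handled by a separate trivial argument, we conclude $\mathrm{C}_\eta(W) - \mathrm{C}(W) \leq 2\eta|\mathcal{Y}|$. The only subtle step is the choice of the mixing weight $\alpha$: it must be large enough to guarantee feasibility ($\tilde{Q}_\rvY \in \mathcal{P}_\eta(\mathcal{Y})$) yet small enough so that $\log\tfrac{1}{1-\alpha}$ matches the bias target; the choice $\alpha = \eta|\mathcal{Y}|$ is essentially forced by the uniform-smoothing construction.
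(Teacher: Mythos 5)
Your proof is correct in the regime that matters, but it follows a genuinely different route from the paper's. The paper stays with the primal picture: it invokes the minimax theorem to swap the $\min$ over $Q_\rvY\in{\cal P}_{\eta}({\cal Y})$ with a $\max$ over input distributions, decomposes the conditional divergence as $\mutinf{P_\rvX;W}+\kl{P_\rvY}{Q_\rvY}$, and then bounds $\min_{Q_\rvY\in{\cal P}_{\eta}({\cal Y})}\kl{P_\rvY}{Q_\rvY}\le 2\eta|{\cal Y}|$ \emph{uniformly over all} $P_\rvY$ via a clipping-and-renormalization construction (\cref{lemma:kl_bound}). You instead work entirely in the dual formulation~\eqref{eq: Csiszar minimax capacity}: you exhibit the single feasible point $\tilde{Q}_\rvY=(1-\alpha)P_\rvY^\star+\alpha U_\rvY$ with $\alpha=\eta|{\cal Y}|$ and use the pointwise bound $\tilde{Q}_\rvY\ge(1-\alpha)P_\rvY^\star$ to pick up a uniform additive penalty $\log\frac{1}{1-\alpha}$ across all rows, then apply $-\log(1-x)\le 2x$ on $[0,1/2]$. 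This avoids both the minimax swap and the uniform-over-$P_\rvY$ lemma; the only structural fact you need is that the minimizer in~\eqref{eq: Csiszar minimax capacity} achieves value $\mathrm{C}(W)$, which is exactly the cited Kemperman--Csisz\'ar theorem. Both arguments land on the same constant $2\eta|{\cal Y}|$, and since \cref{lemma:kl_bound} is not reused elsewhere in the paper, your shortcut loses nothing.

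One caveat: your dismissal of the regime $\eta|{\cal Y}|>1/2$ is not right as stated. The inequality $2\eta|{\cal Y}|\ge\log|{\cal Y}|$ does not hold there in general, and in fact the lemma's bound can fail for $\eta|{\cal Y}|$ close to $1$ (e.g., a deterministic channel with $|{\cal Y}|=10$ and $\eta=0.095$ gives $\mathrm{C}_\eta-\mathrm{C}\approx 1.93>1.9=2\eta|{\cal Y}|$). The paper has the same implicit restriction --- its \cref{lemma:kl_bound} assumes $\eta<1/(2|{\cal Y}|)$ --- and in the application $\eta=1/\nsamples$ is small, so your main argument covers everything that is actually needed; you should simply state the hypothesis $\eta|{\cal Y}|\le 1/2$ rather than claim the complementary case is trivial.
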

\begin{proof}

\extstart
To prove \cref{lem: capacity bias}, we rely on the following result that bounds from above the minimum KL divergence between a reference distribution $P_Y \in {\cal P(Y)}$ and any $Q_{Y}^*\in{{\cal P}_{\eta}({\cal Y})}$.
\begin{lemma} \label{lemma:kl_bound} Let $\eta\in\left(0,1/(2\card{\symrx})\right)$. For any $P_{Y}\in{\cal P}({\cal Y})$, we have
    $$\min_{Q_{Y}\in{{\cal P}_{\eta}({\cal Y})}} \kl{P_{Y}}{Q_{Y}} %
    \leq 2 \eta \card{\symrx}.$$
\end{lemma}
\begin{proof}
    The proof is provided in \cref{app:proof_kl_bound}.
\end{proof}
We now prove \cref{lem: capacity bias}. It is obvious that $\mathrm{C}_{\eta}(\channellemma)\geq\mathrm{C}(\channellemma)$, since ${\cal P}_{\eta}({\cal Y})\subset{\cal P}({\cal Y})$.
Furthermore, 
\begin{align*}
\mathrm{C}_{\eta}(\channellemma) & =\min_{Q_{Y}\in{\cal P}_{\eta}({\cal Y})}\max_{x\in{\cal X}}D(V_{Y|X=x}\mid\mid Q_{Y})\\
 & =\min_{Q_{Y}\in{\cal P}_{\eta}({\cal Y})}\max_{P_{X}\in{\cal P}({\cal X})}D(\channellemma_{Y|X}\mid\mid Q_{Y}\mid P_{X})\\
 & \overset{{\scriptstyle (a)}}{=}\max_{P_{X}\in{\cal P}({\cal X})}\min_{Q_{Y}\in{\cal P}_{\eta}({\cal Y})}D(\channellemma_{Y|X}\mid\mid Q_{Y}\mid P_{X})\\
 & \overset{{\scriptstyle (b)}}{\leq}\max_{P_{X}\in{\cal P}({\cal X})} \left[I(P_{X}; \channellemma)+\min_{Q_{Y}\in{\cal P}_{\eta}({\cal Y})}D(P_{Y}\mid\mid Q_{Y})\right] \\
 & =\mathrm{C}(\channellemma)+\max_{P_{X}\in{\cal P}({\cal X})} \min_{Q_{Y}\in{\cal P}_{\eta}({\cal Y})}D(P_{Y}\mid\mid Q_{Y}) \\
 &\overset{\scriptstyle (c)}{\leq} \mathrm{C}(\channellemma) + 2 \eta \card{\symrx},
\end{align*}
where $(a)$ follows from the minimax theorem (${\cal P}({\cal X})$
and ${\cal P}_{\eta}({\cal Y})$ are convex sets, and the average
KL divergence is linear in $P_{X}$, hence concave, and convex in
$Q_{Y}$); in $(b)$ $P_{Y}$ is the marginal induced by the input
$P_{X}$ and $\channellemma$, $P_{Y}(y)=\sum_{x\in{\cal X}}P_{X}(x)\channellemma_{Y|X}(y\mid x)$; and lastly, $(c)$ follows since \cref{lemma:kl_bound} holds uniformly for any $P_{Y}$.
\extend
\end{proof}

Consider the following procedure for estimating the capacity of $W$, given a total budget of $\nsamples$ channel sensing operations. 
Each input symbol $x\in{\cal X}$ is fed into the channel $\frac{\nsamples}{|{\cal X}|}$
times (ignoring for simplicity the integer constraints as they do not
substantially affect the algorithm or its analysis), and the empirical conditional distribution
of $\rvY$ denoted by $\hat{W}_{\rvY|\rvX=x}$ is computed as $\hat{W}_{\rvY|\rvX}(y|x)=\frac{|{\cal X}|}{\nsamples}\sum_{i\in[\nsamples/|{\cal X}|]}\mathds{1}\inbracelets{{Y_{i}=y\mid \rvX=x}}$, where ${Y_{i}}$, $i\in[\nsamples/|{\cal X}|]$, are the output samples corresponding to input $x\in\cal X$. The capacity is then estimated
in a natural way as 
\[
\hat{\mathrm{C}}^{\nsamples}(W)\define\mathrm{C}(\hat{W})=\max_{P_{\rvX}\in {\cal P}({\cal X})}\mutinf{P_{\rvX};\hat{W}},
\]
and the chosen input distribution is any $\hat{P}_{\rvX}^{\star}\in\argmax_{P_{\rvX} \in {\cal P}({\cal X})} \mutinf{P_{\rvX};\hat{W}}$. 
\extstart 
The uniform allocation of samples to input symbols is justified by the minimax formulation of capacity, which suggests to target similar worst-case estimation errors for every row of the transition matrix (for every $x\in \symtx$). Works such as \cite{shulman2004uniform} further show that under some conditions the uniform distribution is a reasonable prior for every channel. However, note that we only use a uniform allocation of the input symbols during the sensing period and optimize over the input distribution for determining the capacity of a channel. Otherwise, we would suffer a non-negligible additive loss in the capacity calculation for each channel.
\extend
For clarity of exposition, we define \extstart 
\begin{align*}
    \nfactor &\define 4 \card{\symtx} \card{\symrx} \log(\card{\symtx}/\alpha) \text{ and } \\
    \logfacinst &\define \sqrt[5]{\frac{\card{\symrx} e^2}{\card{\symtx}\log\frac{|{\cal X}|}{\alpha}}}.
\end{align*}
Given those two functions, we state in \cref{prop: Concentration for capacity} the confidence bound for capacity estimation.
\begin{proposition}
\label{prop: Concentration for capacity} Let $\alpha \leq 1$ be given, then for 
\begin{align*}
    \varepsilon=\frac{5\sqrt{\nfactor}}{4} \frac{\log(\nsamples \cdot \logfacinst)}{\sqrt{\nsamples}}+\frac{\nfactor}{\nsamples},
\end{align*} \extend
we have
\[
\Pr\left[\left|\hat{\mathrm{C}}^\nsamples(W) -\mathrm{C}(W)\right|\leq\varepsilon\right]\geq1-\alpha.
\]
\extstart
Assuming $\card{\symtx} = \card{\symrx} \geq 2$ and $0 \leq \alpha \leq 1/2$ so that
$\log\frac{|{\cal X}|}{\alpha}>1$, the statement for $\varepsilon$ simplifies by bounding $\logfacinst$ by a small constant $\logfacinst \leq 1.1$.
\extend
\end{proposition}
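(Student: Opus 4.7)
The plan is to use the pseudo-capacity $\mathrm{C}_{\eta}(W)$ from \eqref{eq: pseudo-capacity} as an intermediate quantity, exploiting the fact that the minimax formulation allows us to avoid the unbounded dependence of $D(P\|Q)$ on $Q$ when $Q(y)\to 0$. Introduce the plug-in pseudo-capacity $\mathrm{C}_{\eta}(\hat W)=\min_{Q_{\rvY}\in{\cal P}_{\eta}({\cal Y})}\max_{x\in{\cal X}}\kl{\hat W_{\rvY|\rvX=x}}{Q_{\rvY}}$. Two applications of \cref{lem: capacity bias} (one to $W$ and one to $\hat W$) together with the sandwich $\mathrm{C}(\hat W)\leq \mathrm{C}_{\eta}(\hat W)$ and $\mathrm{C}(W)\leq \mathrm{C}_{\eta}(W)$ give
\begin{equation*}
\big|\hat{\mathrm{C}}^{\nsamples}(W)-\mathrm{C}(W)\big| \;\leq\; 2\eta\card{\symrx} \;+\; \big|\mathrm{C}_{\eta}(\hat W)-\mathrm{C}_{\eta}(W)\big|.
\end{equation*}
The bias term $2\eta\card{\symrx}$ is controlled by eventually choosing $\eta$ of order $1/n$; the remainder is the stochastic term, which is what the concentration analysis must bound.

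Since both $\mathrm{C}_{\eta}(\hat W)$ and $\mathrm{C}_{\eta}(W)$ are minimax values of the same functional over the same compact domain, with only the first argument of the KL divergence perturbed, the stochastic term is bounded by the uniform fluctuation
\begin{equation*}
\big|\mathrm{C}_{\eta}(\hat W)-\mathrm{C}_{\eta}(W)\big| \;\leq\; \max_{x\in{\cal X}}\;\max_{Q_{\rvY}\in{\cal P}_{\eta}({\cal Y})}\;\big|\kl{\hat W_{\rvY|\rvX=x}}{Q_{\rvY}}-\kl{W_{\rvY|\rvX=x}}{Q_{\rvY}}\big|.
\end{equation*}
Writing $\kl{P}{Q}=-\entropy{P}-\sum_y P(y)\log Q(y)$, the inner quantity splits into an entropy estimation error $|\entropy{W_{\rvY|\rvX=x}}-\entropy{\hat W_{\rvY|\rvX=x}}|$, which is independent of $Q_{\rvY}$, and a cross-entropy term bounded by $\log(1/\eta)\,\|\hat W_{\rvY|\rvX=x}-W_{\rvY|\rvX=x}\|_1$ because $|\log Q_{\rvY}(y)|\leq\log(1/\eta)$ uniformly on ${\cal P}_{\eta}({\cal Y})$. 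This is the place where the restriction to ${\cal P}_{\eta}$ pays off: the supremum over $Q_{\rvY}$ has been absorbed into the factor $\log(1/\eta)$.

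Each row $W_{\rvY|\rvX=x}$ is estimated from $\nsamples/\card{\symtx}$ i.i.d.\ samples, so for each fixed $x$ an application of, e.g., a Hoeffding bound coordinate-wise and a union bound over ${\cal Y}$ yields $\|\hat W_{\rvY|\rvX=x}-W_{\rvY|\rvX=x}\|_1 \lesssim \card{\symrx}\sqrt{\card{\symtx}\log(\card{\symtx}\card{\symrx}/\alpha')/\nsamples}$ with probability at least $1-\alpha'$, while the Antos--Kontoyiannis bound (or a McDiarmid argument on entropy with bounded differences of order $\log(\nsamples/\card{\symtx})/\nsamples$) controls the entropy error at the rate $\log(\nsamples)\sqrt{\log(1/\alpha')/\nsamples}$. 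Taking a union bound over $x\in\symtx$ with $\alpha'=\alpha/\card{\symtx}$, both controls hold simultaneously with probability at least $1-\alpha$, giving a stochastic bound of the form $c\,\log(1/\eta)\sqrt{\card{\symtx}\card{\symrx}\log(\card{\symtx}/\alpha)/\nsamples}+(\text{lower order})$, which explains the factor $\sqrt{\nfactor}$ in the claim.

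Finally, choosing $\eta$ to balance the bias $2\eta\card{\symrx}$ against $\log(1/\eta)/\sqrt{\nsamples}$ — concretely, $\eta$ of order $\nfactor/(\card{\symrx}\nsamples)$ — turns $\log(1/\eta)$ into $\log(\nsamples\cdot\logfacinst)$ (the fifth root in $\logfacinst$ comes from matching constants so the aggregated prefactor collapses to $5\sqrt{\nfactor}/4$) and turns the bias into $\nfactor/\nsamples$, yielding the stated $\varepsilon$. The main obstacle is the bookkeeping needed to get exactly the tight constants advertised in the statement: the entropy estimation bound and the $L_1$ concentration bound must be combined carefully, with the constants chosen so that, after optimizing $\eta$, the dominant $\log(\nsamples)/\sqrt{\nsamples}$ term acquires the coefficient $5\sqrt{\nfactor}/4$ and all residual pieces collapse into the single additive remainder $\nfactor/\nsamples$ rather than producing a plethora of lower-order terms.
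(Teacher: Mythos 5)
Your overall architecture is exactly the paper's: approximate the capacity by the pseudo-capacity, pay the bias $2\eta\card{\symrx}$ twice via \cref{lem: capacity bias}, reduce the stochastic term to a uniform fluctuation of $\kl{\cdot}{Q_{\rvY}}$ over $x$ and $Q_{\rvY}\in{\cal P}_{\eta}({\cal Y})$, split that into an entropy error plus a cross-entropy error absorbed into $\log(1/\eta)$, and finally set $\eta$ of order $1/\nsamples$. Up to that point the proposal is sound and matches \cref{app:proof_concentration_capacity}.

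The genuine gap is in the concentration step. You propose coordinate-wise Hoeffding plus a union bound over ${\cal Y}$, which yields $\|\hat W_{\rvY|\rvX=x}-W_{\rvY|\rvX=x}\|_1 \lesssim \card{\symrx}\sqrt{\card{\symtx}\log(\card{\symtx}\card{\symrx}/\alpha)/\nsamples}$ — \emph{linear} in $\card{\symrx}$. The proposition's $\varepsilon$ has leading coefficient $\tfrac{5}{4}\sqrt{\nfactor}$ with $\nfactor = 4\card{\symtx}\card{\symrx}\log(\card{\symtx}/\alpha)$, i.e., only $\sqrt{\card{\symrx}}$; your bound is weaker by a factor of roughly $\sqrt{\card{\symrx}}$ and cannot recover the stated rate. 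The paper's \cref{lem: empirical total variation bound} closes this by applying McDiarmid's inequality to the total-variation functional itself and bounding $\EE[d_{\text{TV}}(\hat{P}_Y^n,P_Y)]\le\sqrt{\card{\symrx}/n}$ via Jensen and Cauchy--Schwarz, giving $d_{\text{TV}}\le\sqrt{4\card{\symrx}\log(1/\alpha)/n}$. A second, smaller divergence: you control the entropy error by a separate Antos--Kontoyiannis/McDiarmid argument, whereas the paper bounds it \emph{deterministically} in terms of the same TV radius via the Fano-type inequality of \cite{zhang2007estimating,audenaert2007sharp}; the paper's own remark notes the McDiarmid route is strictly looser here, and since the proposition's content is precisely the explicit constants ($5\sqrt{\nfactor}/4$ and the fifth-root factor $\logfacinst$), the bookkeeping you defer is not optional — with your two concentration choices it would not come out to the stated $\varepsilon$.
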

We provide a sketch of the proof for brevity. The full proof is given in \cref{app:proof_concentration_capacity}.
\begin{proof}[Sketch of Proof]
    \extstart We use the dual formulation of capacity in \eqref{eq: Csiszar minimax capacity} and the triangle inequality to express the difference between $C(\hat{W})$ and $C(W)$ by the bias of the pseudo-capacities $C_\eta(W)$ and $C_\eta(\hat{W})$, which can be bounded as in \cref{lem: capacity bias}, and a probabilistic bound on the difference between $C_\eta(W)$ and $C_\eta(\hat{W})$. The latter follows from expressing the difference of capacities using the minimax characterization in \eqref{eq: Csiszar minimax capacity} decomposed into entropy and cross-entropy, which can be individually bounded from above. Setting $\eta=1/\nsamples$ provides a trade-off between the different bounds and leads to the statement above.
    \extend
\end{proof}

\extstart
\begin{remark}
    We use the dual formulation of capacity for two reasons. First, the capacity achieving output distribution is unique, and so optimizing over this distribution is more natural. By contrast, the capacity achieving input distribution may not be unique, and optimal input distributions may even have completely disjoint supports. Thus, the support may not be stable with respect to estimation errors. 
    Second, our confidence interval is obtained by approximating the capacity with pseudo-capacity, and then estimating the pseudo-capacity. The definition of the pseudo-capacity is natural in the dual formulation, because it is based on an explicit constraint on the output probability. The primal formulation of the pseudo-capacity would constrain the input distributions to ones whose output distribution satisfies that constraint, and thus is less explicit. Without any assumptions on the input or output distributions, it is unclear how to provide a tight confidence interval for capacity estimation. 

\end{remark}
\extend

Given the confidence interval bound in \cref{prop: Concentration for capacity}, we state in \cref{thm:req_samples} the number of samples $\nsamples$ required to achieve a certain confidence level $\confci$ for the capacity estimate $\hat{\mathrm{C}}^\nsamples(W)$. For ease of notation, we ignore ceiling operators as their impact vanishes throughout the asymptotic analysis.
\extstart Let $\linfac \define \frac{15\cdot 25}{4}$ be a constant and $\logfacvar \define \frac{25}{4} \logfacinst$ be a scaled version of $\logfacinst$.\extend

\begin{lemma} \label{thm:req_samples}
For $\nsamples$ satisfying~\eqref{eq:req_samples} \extstart and a given $\epsci>0$\extend, the estimated capacity has an error of at most $\epsci$ with probability at least $1-\alpha$:
\extstart
\begin{align}
    \nsamples \geq \max\inbracelets{\frac{\linfac \nfactor}{\epsci^2} \log^2\left(\frac{\logfacvar \nfactor}{\epsci^2}\right), \frac{2 \nfactor}{\epsci}}. \label{eq:req_samples}
\end{align}
For $\epsci \leq 1$, the sufficient number of samples is bounded by $n \geq \frac{15 \linfac \nfactor}{\epsci^2} \log^2\left(\frac{\logfacvar \nfactor}{\epsci^2}\right)$. \extend
\begin{proof}
    The proof is given in \cref{app:proof_req_samples}.
\end{proof}
\end{lemma}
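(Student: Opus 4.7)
The plan is to start from the two-term error bound provided by \cref{prop: Concentration for capacity}, namely $\varepsilon = \frac{5\sqrt{\nfactor}}{4} \cdot \frac{\log(\nsamples \logfacinst)}{\sqrt{\nsamples}} + \frac{\nfactor}{\nsamples}$, and derive sufficient conditions on $\nsamples$ under which each summand is bounded by $\epsci/2$. Since both summands are non-negative, this ensures $\varepsilon \leq \epsci$, and then \cref{prop: Concentration for capacity} delivers the desired confidence statement with probability at least $1-\alpha$. I would do this splitting explicitly rather than, say, treating the sum via the envelope of one term, because the two terms scale differently with $\nsamples$ (one like $1/\nsamples$, one like $\log \nsamples / \sqrt{\nsamples}$) and the tighter term will end up determining the final bound in the small-$\epsci$ regime.

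For the second summand the condition $\frac{\nfactor}{\nsamples} \leq \epsci/2$ is directly equivalent to $\nsamples \geq \frac{2\nfactor}{\epsci}$, accounting for the second argument of the max in \eqref{eq:req_samples}. For the first summand I need $\sqrt{\nsamples} \geq \frac{5\sqrt{\nfactor}}{2\epsci}\log(\nsamples \logfacinst)$, a self-referential inequality where $\nsamples$ appears both outside and inside the logarithm. My plan is to invoke a standard inversion lemma for inequalities of the form $x \geq a\log(b x)$, i.e., that a sufficient condition is $x \geq 2a\log(2ab)$ whenever $2ab$ is bounded away from $1$. Applying this with $x = \sqrt{\nsamples}$, $a = \frac{5\sqrt{\nfactor}}{2\epsci}$, and the $\logfacinst$ factor rewritten via $\log(\nsamples \logfacinst) = 2\log(\sqrt{\nsamples}\sqrt{\logfacinst})$, then squaring, produces a bound of the form $\nsamples \geq \frac{\linfac \nfactor}{\epsci^2}\log^2\!\left(\frac{\logfacvar \nfactor}{\epsci^2}\right)$. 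The constants fall out naturally: the $25/4$ in $\logfacvar = \tfrac{25}{4}\logfacinst$ comes from squaring the factor $5/2$ together with the $\sqrt{\logfacinst}$ inside the log, and $\linfac = 375/4$ absorbs the factor $25/4$ from squaring together with a slack factor of $15$ used to invert the log.

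Combining both conditions under a max yields \eqref{eq:req_samples}. For the final simplified bound under $\epsci \leq 1$, I would verify directly that the logarithmic term dominates, namely $\frac{2\nfactor}{\epsci} \leq \frac{\linfac\nfactor}{\epsci^2}\log^2\!\left(\frac{\logfacvar\nfactor}{\epsci^2}\right)$. Since $\nfactor = 4|\symtx||\symrx|\log(|\symtx|/\alpha)$ is bounded away from $0$ whenever $|\symtx|,|\symrx|\geq 2$ and $\alpha$ is a probability, and since $1/\epsci \geq 1$, the $\log^2$ factor is bounded away from $0$ as well, so the $1/\epsci$ term can be absorbed into the $1/\epsci^2$ term at the cost of inflating the constant by a factor of at most $15$, yielding the stated $\nsamples \geq \frac{15\linfac \nfactor}{\epsci^2}\log^2\!\left(\frac{\logfacvar \nfactor}{\epsci^2}\right)$.

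The main obstacle will be the transcendental inversion step: a naive application of the inversion lemma easily produces nested logarithms or additional polynomial factors inside the $\log$, which would be qualitatively correct but would spoil the clean form $\log(\logfacvar \nfactor/\epsci^2)$. The careful part is pairing the two $\sqrt{\nfactor}$ and $\sqrt{\logfacinst}$ factors before applying the lemma so that the resulting log argument collapses to a pure $\nfactor/\epsci^2$ expression (with a constant), leaving only routine algebraic verification by back-substitution to confirm the constants $\linfac$ and $\logfacvar$.
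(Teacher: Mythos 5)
Your proposal is correct and follows essentially the same route as the paper: split the two-term error bound of \cref{prop: Concentration for capacity} so that each summand is at most $\epsci/2$, invert the resulting transcendental inequality in $\nsamples$ via a standard inversion lemma with a slack factor of $15$ (the paper packages this as \cref{lemma:sample_sufficiency}, a refinement of a lemma from \cite{weinberger2022multi} applied with $y=\tfrac{4\epsci^2}{25\,\nfactor}$ and $\tmpconst=\logfacinst$), and take the maximum of the two conditions; your constant bookkeeping for $\linfac$ and $\logfacvar$ matches the paper's. The only cosmetic difference is that you phrase the inversion in terms of $x\geq a\log(bx)$ with $x=\sqrt{\nsamples}$ rather than directly in terms of $\log^2(\tmpconst n)/n\leq y$.
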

This result will be used throughout the paper to determine the number of channel senses used by our algorithms to determine the best channels with a certain probability.

\section{Best Channel Identification}\label{sec:best_channel_id}

We now introduce a policy \ourpolicy whose output $\outputchannel$ corresponds to the channel with the highest capacity with probability at least $1-\confidence$, i.e., $\Pr(\outputchannel = \channelidx^\star)\geq 1-\confidence$. We give in \cref{thm:correctness_complexity} the stopping time of \ourpolicy reflected by the total number of channel senses and referred to as the \emph{sample complexity}. Our algorithm \ourpolicy proceeds in rounds and gradually removes channels that are unlikely to be the best channel from the set of possible future actions. In the context of MABs, this falls into the class of \textit{action elimination} algorithms. In particular, this algorithm is an adaptation of the exponential gap-elimination algorithm proposed in \cite{karnin2013almost}.

Let $\maxrounds$ be the maximum number of rounds for which \ourpolicy is run, and let $\{\roundchannels \}_{\roundidx \in[\maxrounds]}$ be a collection of nested sets with monotonically decreasing cardinality such that $\{\outputchannel\} =\roundchannels[\maxrounds] \subset \roundchannels[\maxrounds-1] \subset \cdots \subset \roundchannels[1] = [k]$, i.e., the algorithm proceeds until only one channel is left. Denote by \timeroundstart the time at which round $\roundidx$ starts. For all $t\in [\timeroundstart,\timeroundstart[{\roundidx+1}]-1]$ the actions allowed by $\ourpolicy$ are of the form $A_t = (X_t,\channelidx_t)$ where $\channelidx_t \in \roundchannels$. 

At every round $\roundidx\in[\maxrounds]$, the algorithm \ourpolicy senses each of the channels in $\roundchannels$ for a pre-specified number of times $\pullsround$ by sending each input symbol $x\in \cX$ an equal number of times (up to integer rounding), and observes the corresponding channel outputs to obtain an estimate $\capacitychannelestr$ of the capacity $\capacitychannel$ for all $\channelidx \in \roundchannels$, as explained in \cref{sec:confidence}.

After each round, assuming that the best channel is within $\roundchannels$, \ourpolicy uses an $(\epsci,\delta)$-PAC subroutine with parameters $(\epsround/2, \deltaround)$ computed as in~\cref{alg:bai} %
that senses the channels in $\roundchannels$ and outputs a channel $\pacchannelround \in \roundchannels$ that satisfies $\Pr\left(\suboptimalitygap[\pacchannelround] \leq \frac{\epsround}{2}\right) \geq 1-\deltaround$. We will later provide \extstart two subroutines suited for approximate channel identification and analyze their sample complexities. The difference between the subroutines is that one has a better asymptotic (in the number of channels) sample complexity. In contrast, the other has a better sample complexity for a small number of channels.\extend %

Having $\pacchannelround$, the set $\roundchannels [\roundidx+1]$ is constructed by comparing the estimates of the capacities for all $\channelidx \in \roundchannels$ to that of $\pacchannelround$ and discarding all the channels whose estimated capacity is less than $\capacitychannelestr[\pacchannelround] - \epsround$, i.e.,  $$\roundchannels[\roundidx+1]=\roundchannels \setminus \inbracelets{\channelidx \in \roundchannels \colon \capacitychannelestr < \capacitychannelestr[\pacchannelround] - \epsround}.$$

\ourpolicy is summarized in \cref{alg:bai}. The number of times each channel $\channelidx \in \roundchannels$ is sampled (sensed) in round $\roundidx$ is a function of $\confidence_r$ (and implicitly of $\card{\symtx}$, $\card{\symrx}$) given by
\extstart
\begin{align}
    \pullsround \geq 4 \linfac \nfactor[\deltaround]/\epsround^2 \log^2\big(4 \logfacvar[\deltaround] \nfactor[\deltaround]/\epsround^2\big), \label{eq:pullsround}
\end{align}
\extend
where $\epsround = 2^{-\roundidx}/4$ and $\deltaround = \confidence/(50\roundidx^3)$. The choice of $\pullsround$ is determined by the results for capacity estimation in \cref{sec:confidence}.
\setlength{\textfloatsep}{5pt}
\begin{algorithm}[t]
\caption{\ourpolicy Best Channel Identification}\label{alg:bai}
\begin{algorithmic}
\Require 0 < $\confidence$ < 1
\Initialize $\roundchannels[1] \gets [\nchannels]$, $\roundidx=1$, $\pullsround$ as in~\eqref{eq:pullsround}
\While{$\card{\roundchannels[\roundidx]} > 1$}
\State $\epsround = 2^{-\roundidx}/4$, $\deltaround = \confidence/(50\roundidx^3)$
\State Sample each channel $\channelidx \in \roundchannels$ for $\pullsround$ times and let $\capacitychannelestr$ be the resulting capacity estimate
\State $\pacchannelround \gets $ ($\epsround/2, \deltaround$)-PAC subroutine %
with channels $\roundchannels$ %
\State $\roundchannels[\roundidx+1] \gets \roundchannels \setminus \inbracelets{\channelidx \in \roundchannels: \capacitychannelestr < \capacitychannelestr[\pacchannelround] - \epsround}$
\State Set $\roundidx \gets \roundidx + 1$
\EndWhile
\Output $\{\outputchannel\} = \roundchannels$
\end{algorithmic}
\end{algorithm}
The correctness of $\ourpolicy$ and its stopping time reflected by its {sample complexity} are stated in \cref{thm:correctness_complexity}.

\begin{theorem} \label{thm:correctness_complexity}
For an appropriately chosen PAC subroutine, the policy \ourpolicy described in \cref{alg:bai} outputs the channel $\channelidx^\star$ with the largest capacity, i.e., $\channelidx^\star =  \argmax_{\channelidx \in [k]}\capacitychannel$, with probability at least $1-\delta$. 
The sample complexity of \ourpolicy required for determining the best channel with confidence $1-\delta$ is given by
    \begin{align*}
         \mathcal{O} \Bigg( \sum_{j \in [\nchannels]} \suboptimalitygap^{-2} &\log\left(\frac{1}{\delta}\log\left(\suboptimalitygap^{-1}\right)\right) \\[-.3cm]
         &\underbrace{\left( \log^2\log\left(\frac{1}{\delta}\log\left(\suboptimalitygap^{-1}\right)\right) +  \log^2\left(\suboptimalitygap^{-1}\right) \right)}_{\text{additional term compared to rewards in $[0,1]$ as in \cite{karnin2013almost}}} \Bigg).
    \end{align*}
\end{theorem}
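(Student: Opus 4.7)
The plan is to adapt the exponential-gap-elimination analysis of \cite{karnin2013almost} to the capacity-estimation setting. I would define, for each round $\roundidx$, a good event $\cG_\roundidx$ asserting that (i) for every $\channelidx\in\roundchannels$, $|\capacitychannelestr-\capacitychannel|\leq \epsround/4$, and (ii) whenever $\channelidx^{\star}\in\roundchannels$, the PAC subroutine returns $\pacchannelround$ with $\suboptimalitygap[\pacchannelround]\leq \epsround/2$. With $\pullsround$ chosen as in~\eqref{eq:pullsround}, \cref{thm:req_samples} applied with tolerance $\deltaround/(\nchannels+1)$ (which only inflates $\nfactor[\deltaround]$ by an absorbed factor $\log \nchannels$) yields (i) with probability at least $1-\nchannels\deltaround/(\nchannels+1)$, and the PAC guarantee gives (ii); summing $\Pr[\cG_\roundidx^{\mathrm{c}}]\leq \deltaround$ with $\deltaround=\confidence/(50\roundidx^{3})$ over $\roundidx\geq 1$ uses $\zeta(3)\leq 2$ to bound the overall failure probability by $\confidence$. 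Under $\cG:=\bigcap_\roundidx \cG_\roundidx$ I would argue inductively that $\channelidx^{\star}\in\roundchannels$ for every $\roundidx$: combining $\capacitychannelestr[\channelidx^{\star}]\geq \capacitychannel[\channelidx^{\star}]-\epsround/4$ with $\capacitychannelestr[\pacchannelround]\leq \capacitychannel[\channelidx^{\star}]+\epsround/4$ gives $\capacitychannelestr[\channelidx^{\star}]>\capacitychannelestr[\pacchannelround]-\epsround$, so $\channelidx^{\star}$ is never discarded. Conversely, any $\channelidx$ with $\suboptimalitygap>2\epsround$ satisfies $\capacitychannelestr\leq \capacitychannel[\channelidx^{\star}]-\suboptimalitygap+\epsround/4<\capacitychannelestr[\pacchannelround]-\epsround$ (using $\capacitychannelestr[\pacchannelround]\geq \capacitychannel[\channelidx^{\star}]-3\epsround/4$) and is eliminated in round $\roundidx$, so every suboptimal $\channelidx$ is removed by some round $\roundidx_\channelidx=\Theta(\log \suboptimalitygap^{-1})$, after which \ourpolicy terminates with the singleton $\{\channelidx^{\star}\}$.

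For the sample complexity, I would tally, per suboptimal channel $\channelidx$, the $\pullsround$ elimination samples contributed in rounds $\roundidx=1,\dots,\roundidx_\channelidx$ plus that channel's share of the PAC-subroutine samples in those rounds. Because $\pullsround$ grows geometrically in $\roundidx$ through the $\epsround^{-2}=16\cdot 4^{\roundidx}$ factor in~\eqref{eq:pullsround}, the sum $\sum_{\roundidx\leq \roundidx_\channelidx}\pullsround$ is dominated up to a constant by the last term $\pullsround[\roundidx_\channelidx]$. Substituting $\epsround[\roundidx_\channelidx]\asymp \suboptimalitygap$, $\deltaround[\roundidx_\channelidx]\asymp \confidence/\roundidx_\channelidx^{3}$, and $\roundidx_\channelidx=\Theta(\log \suboptimalitygap^{-1})$ into~\eqref{eq:pullsround} reproduces the stated per-channel rate $\suboptimalitygap^{-2}\log(\confidence^{-1}\log \suboptimalitygap^{-1})\bigl(\log^{2}\log(\confidence^{-1}\log \suboptimalitygap^{-1})+\log^{2}\suboptimalitygap^{-1}\bigr)$; summing over $\channelidx\in[\nchannels]$ gives the theorem, provided that the PAC subroutine of \cref{sec:pac} shares the same $(\epsround,\deltaround)$-dependence up to constants so its contribution merges into the same order.

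The main obstacle I anticipate is the additional $\log^{2}\suboptimalitygap^{-1}$ factor flagged in the theorem statement, which is absent in the bounded-reward analysis of \cite{karnin2013almost} but is forced here by the nested squared logarithm in~\eqref{eq:req_samples} coming from the bias-variance tradeoff of the pseudo-capacity $\mathrm{C}_{\eta}(\channellemma)$. Carefully propagating that nested logarithm through the geometric sum while keeping the constants compatible with the $\deltaround\propto \roundidx^{-3}$ schedule (so that $\sum_\roundidx \deltaround$ remains summable in $\confidence$) is the most delicate piece of bookkeeping; a secondary subtlety is ensuring that the concrete PAC subroutine chosen in \cref{sec:pac} has $(\epsround,\deltaround)$-dependence matching that of the elimination samples, so that it does not dominate the final complexity.
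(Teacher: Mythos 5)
Your overall skeleton---correctness via comparing $\capacitychannelestr[\channelidx^\star]$ with $\capacitychannelestr[\pacchannelround]$, geometric domination of $\sum_{\roundidx\le \roundidx_\channelidx}\pullsround$ by its last term, and the requirement that the PAC subroutine's cost match the order of $\card{\roundchannels}\pullsround$---lines up with the paper's proof (which uses \medianpac as the subroutine and defers correctness to the appendix). But there is one genuine gap: your good event $\cG_\roundidx$ requires \emph{every} channel in $\roundchannels$ to have its capacity estimated to within $\epsround/4$, which forces a union bound over up to $\nchannels$ channels and hence the tolerance $\deltaround/(\nchannels+1)$ in \cref{thm:req_samples}. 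That inflates $\nfactor[\deltaround]$ to contain $\log(\nchannels\card{\symtx}/\deltaround)$, so the bound you actually obtain is $\sum_\channelidx \suboptimalitygap^{-2}\log\big(\tfrac{\nchannels}{\delta}\log\suboptimalitygap^{-1}\big)(\cdots)$ rather than the claimed $\log\big(\tfrac{1}{\delta}\log\suboptimalitygap^{-1}\big)$. This $\log\nchannels$ is not ``absorbed'': removing it is precisely what exponential-gap elimination buys over naive elimination in \cite{karnin2013almost}, and it is the same distinction the paper draws between \pacpolicy (\cref{thm:correctness-complexity-pac}, with $\log(\card{\cC}/\delta)$) and \medianpac (\cref{thm:correctness-complexity-pac-median}, with $\log(1/\delta)$).

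The paper avoids the union bound by never insisting that all over-gapped channels be eliminated in their designated round. It partitions the channels into gap classes $\channelpart=\{\channelidx: 2^{-s}\le\suboptimalitygap<2^{-s+1}\}$, accounts pessimistically for rounds $\roundidx<s$ (no elimination at all, \cref{lemma:tr}), and for $\roundidx\ge s$ invokes the Markov-type bound of \cref{lemma:decreasing_set} (\cite[Lemma 3.4]{karnin2013almost}): with probability $1-24\deltaround$ at least a $7/8$ fraction of $\channelpartround$ is eliminated in each round. Each individual estimate then only needs to be accurate with probability $1-\deltaround$ (no division by $\nchannels$), and the stragglers that survive are paid for by the convergent series $\sum_\roundidx 8^{-\roundidx}\pullsround[\roundidx+s]\propto 4^{s}\sum_\roundidx 2^{-\roundidx}(\cdots)$ of \cref{lemma:trs}. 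If you replace your deterministic ``every channel with $\suboptimalitygap>2\epsround$ is gone by round $\roundidx_\channelidx$'' step with this fractional-decrease argument (keeping your two-channel comparison, which needs only two accurate estimates per round, for retaining $\channelidx^\star$), the rest of your calculation goes through and recovers the stated bound. A minor further point: your per-channel tally should also cover $\channelidx^\star$ itself, which is sampled in every round until termination; the usual convention, as in \cite{karnin2013almost}, is to charge it at the second-smallest gap.
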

The dependence of the sample complexity on the alphabet sizes $\card{\symtx}$ and $\card{\symrx}$ is hidden in the $\mathcal{O}$-notation (Bachmann–Landau notation). However, it is important to note that the dominating factor of the sample complexity scales with \extstart $\card{\symtx} \card{\symrx}$. \extend In comparison to the well-known gap-elimination algorithm of \cite{karnin2013almost} for arms with bounded rewards in $[0,1]$, the cost of estimating capacity in terms of channel senses is the additional multiplicative term highlighted in \cref{thm:correctness_complexity}.

\extstart
To analyze the actual non-asymptotic sample complexity, we plot in \cref{fig:delta_dependency} numerical simulations of \ourpolicy for a random set of binary DMCs and different levels of confidence. Out of $1024$ DMCs, we create four settings characterized by the combination of $10$ or $20$ DMCs and different minimal suboptimality gaps of $\Delta_{\min} \approx 0.08$ or $\Delta_{\min} \approx 0.13$, where $\Delta_{\min} \define \min_{\channelidx \in [\nchannels] \setminus \channelidx^\star} \suboptimalitygap$. We use two different values for $\Delta_{\min}$ to represent the hardness of the problem. It can be seen that increasing $\nchannels$ or decreasing $\Delta_{\min}$ results in an increasing number of channel senses, where doubling $\nchannels$ from $10$ to $20$ is less harmful than decreasing $\Delta_{\min}$ by less than a factor of $2$.
\extend
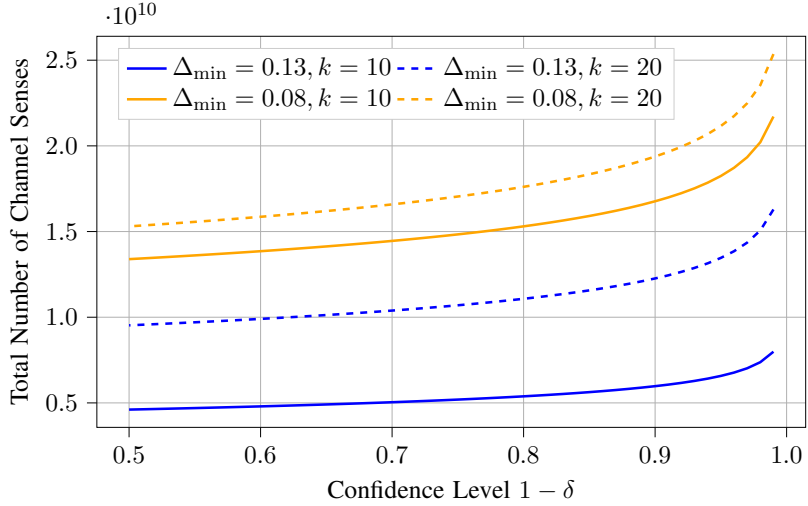
\begin{figure}[t]
    \centering
    \resizebox{.6\linewidth}{!}{\begin{tikzpicture}

\definecolor{darkgray176}{RGB}{176,176,176}
\definecolor{lightgray204}{RGB}{204,204,204}
\definecolor{orange}{RGB}{255,165,0}

\newcommand{\lw}{1.1}

\begin{axis}[
legend cell align={left},
legend columns=2,
legend style={
  fill opacity=0.8,
  draw opacity=1,
  text opacity=1,
  at={(0.03,0.97)},
  anchor=north west,
  draw=lightgray204
},
tick align=outside,
tick pos=left,
x grid style={darkgray176},
xlabel={Confidence Level \(\displaystyle 1-\delta\)},
xmajorgrids,
xmin=0.4755, xmax=1.0145,
xtick style={color=black},
xtick={0.4,0.5,0.6,0.7,0.8,0.9,1,1.1},
xticklabels={
  \(\displaystyle {0.4}\),
  \(\displaystyle {0.5}\),
  \(\displaystyle {0.6}\),
  \(\displaystyle {0.7}\),
  \(\displaystyle {0.8}\),
  \(\displaystyle {0.9}\),
  \(\displaystyle {1.0}\),
  \(\displaystyle {1.1}\)
},
y grid style={darkgray176},
ylabel={Total Number of Channel Senses},
ymajorgrids,
ymin=3572406307.3, ymax=26395335946.7,
ytick style={color=black},
ytick={0,5000000000,10000000000,15000000000,20000000000,25000000000,30000000000},
yticklabels={
  \(\displaystyle {0.0}\),
  \(\displaystyle {0.5}\),
  \(\displaystyle {1.0}\),
  \(\displaystyle {1.5}\),
  \(\displaystyle {2.0}\),
  \(\displaystyle {2.5}\),
  \(\displaystyle {3.0}\)
},
]
\addplot [thick, blue, line width=\lw]
table {%
0.99 7987161740
0.98 7375723770
0.97 7020369640
0.96 6769331220
0.95 6575255550
0.94 6417113980
0.93 6283715670
0.92 6168393780
0.91 6066855280
0.9 5976173290
0.89 5894262960
0.88 5819586770
0.87 5750978400
0.86 5687532170
0.85 5628530810
0.84 5573396600
0.83 5521657390
0.82 5472922390
0.81 5426864490
0.8 5383207140
0.79 5341714460
0.78 5302183650
0.77 5264439060
0.76 5228327580
0.75 5193714940
0.74 5160482730
0.73 5128526020
0.72 5097751360
0.71 5068075220
0.7 5039422560
0.69 5011725780
0.68 4984923750
0.67 4958960990
0.66 4933787040
0.65 4909355820
0.64 4885625170
0.63 4862556440
0.62 4840114040
0.61 4818265210
0.6 4796979650
0.59 4776229340
0.58 4755988300
0.57 4736232360
0.56 4716939060
0.55 4698087460
0.54 4679657970
0.53 4661632320
0.52 4643993350
0.51 4626725010
0.5 4609812200
};
\addlegendentry{$\Delta_{\min} = 0.13, k=10$}
\addplot [thick, blue, dashed, line width=\lw]
table {%
0.99 16288625600
0.98 15064781800
0.97 14353480760
0.96 13850970460
0.95 13462475160
0.94 13145905720
0.93 12878863220
0.92 12648003540
0.91 12444733800
0.9 12263195560
0.89 12099215700
0.88 11949716680
0.87 11812363900
0.86 11685344520
0.85 11567222780
0.84 11456842280
0.83 11353257900
0.82 11255687380
0.81 11163475940
0.8 11076070020
0.79 10992997420
0.78 10913852220
0.77 10838282780
0.76 10765982620
0.75 10696682980
0.74 10630146800
0.73 10566164020
0.72 10504547620
0.71 10445130360
0.7 10387762020
0.69 10332307280
0.68 10278643780
0.67 10226660440
0.66 10176256260
0.65 10127339000
0.64 10079824260
0.63 10033634660
0.62 9988698960
0.61 9944951560
0.6 9902331820
0.59 9860783660
0.58 9820255060
0.57 9780697620
0.56 9742066400
0.55 9704319440
0.54 9667417600
0.53 9631324220
0.52 9596005000
0.51 9561427780
0.5 9527562360
};
\addlegendentry{$\Delta_{\min} = 0.13, k=20$}
\addplot [thick, orange, line width=\lw]
table {%
0.99 21711945791
0.98 20210538204
0.97 19337096449
0.96 18719640336
0.95 18242037801
0.94 17852694529
0.93 17524144209
0.92 17240019357
0.91 16989777772
0.9 16766230444
0.89 16564255536
0.88 16380075158
0.87 16210823028
0.86 16054273070
0.85 15908662013
0.84 15772569476
0.83 15644834491
0.82 15524495989
0.81 15410749449
0.8 15302914617
0.79 15200411219
0.78 15102740295
0.77 15009469668
0.76 14920222572
0.75 14834668552
0.74 14752516191
0.73 14673507199
0.72 14597411560
0.71 14524023605
0.7 14453158611
0.69 14384650131
0.68 14318347626
0.67 14254114499
0.66 14191826454
0.65 14131370022
0.64 14072641366
0.63 14015545253
0.62 13959994040
0.61 13905906997
0.6 13853209499
0.59 13801832504
0.58 13751711983
0.57 13702788408
0.56 13655006418
0.55 13608314386
0.54 13562664075
0.53 13518010416
0.52 13474311148
0.51 13431526689
0.5 13389619823
};
\addlegendentry{$\Delta_{\min} = 0.08, k=10$}
\addplot [thick, orange, dashed, line width=\lw]
table {%
0.99 25357930054
0.98 23540993708
0.97 22484435812
0.96 21737745608
0.95 21160310558
0.94 20689671670
0.93 20292583644
0.92 19949237644
0.91 19646876230
0.9 19376800614
0.89 19132813788
0.88 18910345210
0.87 18705927408
0.86 18516867622
0.85 18341033034
0.84 18176705694
0.83 18022481448
0.82 17877198008
0.81 17739882444
0.8 17609712172
0.79 17485985490
0.78 17368099116
0.77 17255530464
0.76 17147824010
0.75 17044580262
0.74 16945446868
0.73 16850111568
0.72 16758296340
0.71 16669752550
0.7 16584256888
0.69 16501608164
0.68 16421624438
0.67 16344140574
0.66 16269006368
0.65 16196084674
0.64 16125249994
0.63 16056387240
0.62 15989390486
0.61 15924162164
0.6 15860612100
0.59 15798656876
0.58 15738219118
0.57 15679226864
0.56 15621613246
0.55 15565315862
0.54 15510276478
0.53 15456440576
0.52 15403757132
0.51 15352178302
0.5 15301659150
};
\addlegendentry{$\Delta_{\min} = 0.08, k=20$}
\end{axis}

\end{tikzpicture}}
    \caption{\extstart We simulate \ourpolicy with two different numbers $\nchannels \in \{10, 20\}$ of randomly generated binary DMCs (solid and dashed lines) and minimum suboptimality caps $\Delta_{\min} = \min_{\channelidx \in [\nchannels] \setminus \channelidx^\star} \suboptimalitygap \in \{0.08, 0.13\}$ (blue and orange) for different values of $\delta$ between $0.5$ and $1$. \extend}
    \label{fig:delta_dependency}
\end{figure}

\begin{proof}[Proof of \cref{thm:correctness_complexity}]
We start by proving the sample complexity in \cref{thm:correctness_complexity}. The proof of correctness is given in \cref{app:proof_correctness}. %
To that end, we need \cref{def:partitions} and the intermediate results stated in \cref{lemma:tr,lemma:trs,lemma:trinf}. We prove \cref{thm:correctness_complexity} afterward.
\newcommand{\channelpartn}{\ensuremath{a_s}}
\begin{definition} \label{def:partitions}
Let $\suboptimalitygap[\min] \define \min_{\channelidx: \suboptimalitygap[\channelidx] \neq 0} \suboptimalitygap[\channelidx]$ be the minimum suboptimality gap and $\suboptimalitygap[\max] \define \max_{\channelidx} \suboptimalitygap[\channelidx]$ the maximum. Then, we partition the channels for all integers $\lfloor \log_2(1/\suboptimalitygap[\max]) \rfloor \leq s \leq \lceil\log_2(1/\suboptimalitygap[\min])\rceil$ into sets $\channelpart$ defined as
\begin{align*}
    \channelpart \define \inbracelets{\channelidx \in [\nchannels]: 2^{-s} \leq \suboptimalitygap[\channelidx] < 2^{-s+1}}.
\end{align*}
We further define $\channelpartn \define \card{\channelpart}$ and $\channelpartround \define \roundchannels \cap \channelpart$.
\end{definition}

\begin{lemma} \label{lemma:tr}
Choose $\pullsround$ to satisfy \eqref{eq:pullsround} with equality. We have
\begin{align*}
    \sum_{r=1}^{s-1} \pullsround = \mathcal{O} \left( s^2 4^s \log\left(\frac{s}{\delta}\right) + 4^s \log\left(\frac{s}{\delta}\right) \log^2\log\left(\frac{s}{\delta}\right) \right).
\end{align*}
\end{lemma}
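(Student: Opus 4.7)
The plan is a direct computation: substitute the schedule $\epsround = 2^{-\roundidx}/4$ and $\deltaround = \confidence/(50\roundidx^{3})$ into the closed form of $\pullsround$ obtained by taking equality in~\eqref{eq:pullsround}, identify the dominant $4^\roundidx$ growth, and use that a geometric sum with ratio $4$ is, up to a constant, equal to its last term.

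Concretely, I would first note that $\epsround^{-2} = 16\cdot 4^\roundidx$ and
\begin{align*}
\nfactor[\deltaround] = 4\card{\symtx}\card{\symrx}\log\!\left(\tfrac{50\card{\symtx}\roundidx^{3}}{\confidence}\right) = \Theta(\log(\roundidx/\confidence)),
\end{align*}
where the alphabet-size constants are absorbed into the $\mathcal{O}$-notation. The factor $\logfacvar[\deltaround]$ is bounded above by a constant (as in the simplification given after \cref{prop: Concentration for capacity}), so the argument of the outer squared logarithm in~\eqref{eq:pullsround} is also of order $4^\roundidx\log(\roundidx/\confidence)$. Expanding $\log(4^\roundidx\log(\roundidx/\confidence)) = \roundidx\log 4 + \log\log(\roundidx/\confidence) + \mathcal{O}(1)$ and applying $(a+b)^{2}\leq 2(a^{2}+b^{2})$ then gives
\begin{align*}
\pullsround = \mathcal{O}\!\left(4^\roundidx \log(\roundidx/\confidence)\bigl(\roundidx^{2} + \log^{2}\log(\roundidx/\confidence)\bigr)\right).
\end{align*}

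Next, to sum from $\roundidx=1$ to $s-1$, I would pull the monotone-in-$\roundidx$ factors $\log(\roundidx/\confidence)$ and $\log\log(\roundidx/\confidence)$ out at their maximum value $\roundidx=s-1$ (replacing them by $\log(s/\confidence)$ and $\log\log(s/\confidence)$), reducing the problem to the two geometric-type sums $\sum_{\roundidx=1}^{s-1} 4^\roundidx \roundidx^{2}$ and $\sum_{\roundidx=1}^{s-1} 4^\roundidx$. Both are controlled by their last term: the second is at most $\tfrac{4}{3}\,4^{s}$, and the first is at most $s^{2}\sum_{\roundidx=1}^{s-1}4^\roundidx \leq \tfrac{4}{3}\,s^{2}\,4^{s}$. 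Plugging these back in produces exactly the two summands in the claimed bound.

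The only piece of bookkeeping that deserves care is the need to keep the $\roundidx^{2}$-contribution and the $\log^{2}\log(\roundidx/\confidence)$-contribution separate to the end: merging them prematurely (e.g.\ by the crude bound $\log^{2}\log(\roundidx/\confidence) = \mathcal{O}(\roundidx^{2})$) would lose the second summand in the statement, whose significance appears when $s$ is small compared to $\log\log(1/\confidence)$. Apart from this accounting, no additional conceptual step is needed.
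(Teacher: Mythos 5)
Your proposal is correct and follows essentially the same route as the paper: substitute the schedule into~\eqref{eq:pullsround}, split the squared logarithm of the product $4^{\roundidx}\log(\roundidx/\confidence)$ via $(a+b)^{2}\leq 2(a^{2}+b^{2})$ into an $\roundidx^{2}$ part and a $\log^{2}\log$ part, and control the resulting geometric-type sums by their last terms. The paper carries the explicit constants ($\tmpx$, $\logfactor$, $\rootfactor$) a bit further before passing to $\mathcal{O}$-notation, but the decomposition and the final bookkeeping — keeping the $s^{2}4^{s}\log(s/\confidence)$ and $4^{s}\log(s/\confidence)\log^{2}\log(s/\confidence)$ contributions separate — are identical.
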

\begin{proof}
    The proof is given in \cref{app:proof_lemma_tr}.
\end{proof}
\begin{lemma} \label{lemma:trs}
Given an integer $s\geq1$ and $\pullsround[\roundidx+s]$ chosen to satisfy \eqref{eq:pullsround} with equality, we have
\begin{align*}
    \sum_{r=0}^\infty \inpara{\frac{1}{8}}^{r+1} \pullsround[\roundidx+s] \! = \! \mathcal{O} \left( s^2 4^s \log\left(\frac{s}{\delta}\right) + 4^s \log\left(\frac{s}{\delta}\right)  \log^2\log\left(\frac{s}{\delta}\right)\right).
\end{align*}
\begin{proof}
    The proof is given in \cref{app:proof_lemma_trs}.
\end{proof}
\end{lemma}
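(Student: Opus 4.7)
The plan is to substitute the explicit form of $\pullsround[\roundidx+s]$ from \eqref{eq:pullsround} (with equality) into the series, and then exploit the fact that the geometric decay $(1/8)^{r+1}$ beats the geometric growth $4^{r+s}$ contributed by $1/\epsround[\roundidx+s]^2$, so the sum becomes a convergent geometric-times-polynomial series whose value is dominated, up to universal constants, by its $r=0$ term.

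First, using $\epsround[\roundidx+s]=2^{-(r+s)}/4$ and $\deltaround[\roundidx+s]=\delta/(50(r+s)^3)$, the key identity
\[
\inpara{\tfrac{1}{8}}^{r+1}\cdot\frac{1}{\epsround[\roundidx+s]^2}=2\cdot 4^{s}\cdot\inpara{\tfrac{1}{2}}^{r}
\]
allows me to pull the factor $4^s$ out of the sum, so what remains is proportional to $\sum_{r\ge 0}(1/2)^r\cdot h(r+s,\delta)$, where $h(r+s,\delta)$ collects the factor $g(\deltaround[\roundidx+s])\log^2\!\bigl(4\tilde{t}(\deltaround[\roundidx+s])g(\deltaround[\roundidx+s])/\epsround[\roundidx+s]^2\bigr)$ from \eqref{eq:pullsround}. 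Observe that $h(r+s,\delta)$ grows only polynomially in $r$ and logarithmically in $1/\delta$.

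Next, I would bound $h(r+s,\delta)$ explicitly. From the definitions, $g(\deltaround[\roundidx+s])=O(\log((r+s)/\delta))$, $\tilde{t}(\deltaround[\roundidx+s])$ is at most a universal constant (as noted after \cref{prop: Concentration for capacity}), and the argument of the squared logarithm is $O\!\bigl(4^{r+s}\log((r+s)/\delta)\bigr)$. Expanding with $\log(ab)\le\log a+\log b$ gives $\log^2(\cdots)=O\!\bigl((r+s)^2+\log^2\log((r+s)/\delta)\bigr)$, and the sub-additivity $\log(r+s)\le\log(1+r)+\log s$ (for $s\ge 1$) separates $r$-dependent from $(s,\delta)$-dependent factors. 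The standard bounds $\sum_{r\ge 0}(1/2)^r(r+1)^k=O(1)$ then absorb the remaining $r$-dependent polynomial and poly-logarithmic factors into universal constants, and collecting what survives yields $O\!\bigl(4^s\log(s/\delta)(s^2+\log^2\log(s/\delta))\bigr)$, matching the claim.

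The main obstacle is the bookkeeping of the doubly-logarithmic terms: one must verify that after the expansion the two regimes captured by the bound---the $s^2$ term arising from $\log^2(1/\epsround[\roundidx+s]^2)$ and the $\log^2\log(s/\delta)$ term arising from the $\log$ inside $g(\deltaround[\roundidx+s])$---are correctly isolated, with no hidden $s$-dependence disappearing into constants. Everything else is routine, and the overall structure parallels that of \cref{lemma:tr}, with the infinite tail over $r$ here playing the role of the partial sum beyond index $s$ there.
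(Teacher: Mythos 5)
Your proposal is correct and follows essentially the same route as the paper's proof: substitute \eqref{eq:pullsround}, use the identity $(1/8)^{r+1}/\epsround[\roundidx+s]^2 = 2\cdot 4^s (1/2)^r$ to extract the $4^s$ factor, split the squared logarithm via $(a+b)^2\le 2(a^2+b^2)$ into the $(r+s)^2$ part and the doubly-logarithmic part, separate $r$- from $(s,\delta)$-dependence by sub-additivity of the logarithm, and absorb the $r$-dependent factors with convergent sums of the form $\sum_r r^k/2^r$. The only cosmetic difference is that the paper peels off the $r=0$ term and bounds it separately rather than treating the full series at once.
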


\begin{lemma} \label{lemma:trinf}
Let $\pullsround[\roundidx]$ be chosen to satisfy \eqref{eq:pullsround} with equality, then
\begin{align*}
    \sum_{r=1}^\infty \inpara{\frac{1}{8}}^{r-1} \!\!\!\!\!\!\! \pullsround[\roundidx] =  \mathcal{O} \left( \log\left(\frac{1}{\delta}\right)  \log^2\log\left(\frac{1}{\delta}\right)\right).
\end{align*}
\begin{proof}
    The proof is given in \cref{app:proof_lemma_trinf}.
\end{proof}
\end{lemma}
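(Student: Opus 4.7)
The plan is to recognize \cref{lemma:trinf} as essentially the specialization of \cref{lemma:trs} to $s=1$, up to an index shift. Re-indexing the sum by $r' = r - 1$ rewrites it as
\begin{align*}
\sum_{r=1}^{\infty} \left(\tfrac{1}{8}\right)^{r-1} \pullsround[r] = 8 \sum_{r'=0}^{\infty} \left(\tfrac{1}{8}\right)^{r'+1} \pullsround[r'+1],
\end{align*}
which matches the left-hand side of \cref{lemma:trs} with $s = 1$, up to the harmless shift between $\pullsround[r'+1]$ and $\pullsround[r']$ (which changes $\deltaround$ and $\epsround$ only by constants that are absorbed by $\mathcal{O}(\cdot)$). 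Applying that lemma directly yields the $\mathcal{O}\!\left(\log(1/\delta)\log^2\log(1/\delta)\right)$ bound, since both $s^2 4^s \log(s/\delta)$ and $4^s \log(s/\delta)\log^2\log(s/\delta)$ become, for $s=1$, constant multiples of $\log(1/\delta)$ and $\log(1/\delta)\log^2\log(1/\delta)$ respectively, and the outer factor of $8$ is absorbed by the $\mathcal{O}(\cdot)$.

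As a standalone argument that also serves as a sanity check, I would substitute $\epsround = 2^{-r}/4$ and $\deltaround = \delta/(50\,r^3)$ directly into \eqref{eq:pullsround}. Using $\nfactor[\deltaround] = 4\card{\symtx}\card{\symrx}\log(50 r^3 \card{\symtx}/\delta)$ and $\epsround^{-2} = 16 \cdot 4^{r}$, this produces
\begin{align*}
\pullsround = \mathcal{O}\!\left(4^{r}\, \log\!\tfrac{r^3}{\delta}\, \Bigl(r + \log\log\!\tfrac{r^3}{\delta}\Bigr)^{2}\right).
\end{align*}
Multiplying by $(1/8)^{r-1}$ cancels the $4^r$ growth down to $(1/2)^{r-1}$, so the resulting summand decays geometrically in $r$ while the surviving log-factors grow only polylogarithmically. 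Splitting $\log(r^3/\delta) = 3\log r + \log(1/\delta)$, expanding the square $(r + \log\log(r^3/\delta))^2$, and grouping summands by $\delta$-dependence, every resulting term takes the form $(1/2)^{r-1} \cdot P(r) \cdot \log^{a}(1/\delta)\, \log^{b}\log(1/\delta)$ with $a\in\{0,1\}$, $b\in\{0,1,2\}$, and $P$ a polynomial of degree at most $2$. Since $\sum_{r\geq 1}(1/2)^{r-1}P(r)$ is finite for every such $P$, each summed term is bounded by $\mathcal{O}(\log(1/\delta)\log^2\log(1/\delta))$, which is the dominant contribution.

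The main obstacle will be the book-keeping for the mixed cross-terms arising from the expansion of $(r + \log\log(r^3/\delta))^2$ together with the split of $\log(r^3/\delta)$: terms such as $r\cdot\log(1/\delta)\cdot\log\log(1/\delta)$ or $r^2 \cdot \log(1/\delta)$ must each be shown to collapse into $\mathcal{O}(\log(1/\delta)\log^2\log(1/\delta))$ after summation against $(1/2)^{r-1}$. Because the geometric decay beats every fixed polynomial in $r$, these cross-terms become $\mathcal{O}(\log(1/\delta)\log^2\log(1/\delta))$ at worst, and summing the finitely many groups yields the claim. The reduction to \cref{lemma:trs} is therefore the cleanest route, with the direct calculation playing only a corroborating role.
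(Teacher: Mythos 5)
Your proposal is correct, and your primary route is genuinely different from the paper's. The paper proves \cref{lemma:trinf} by redoing the direct computation from scratch: it substitutes $\epsround=2^{-\roundidx}/4$ and $\deltaround=\confidence/(50\roundidx^3)$ into \eqref{eq:pullsround}, uses $4^{\roundidx}/8^{\roundidx-1}=8/2^{\roundidx}$ to turn the sum into a geometric one, applies $(a+b)^2\le 2(a^2+b^2)$, the bound $\logrcubed\le 3\log(\rootfactor\roundidx/\confidence)$, the splitting argument of \cref{prop:bound_logsq_rs}, and finally \cref{prop:infsums} --- i.e., exactly the machinery of your ``standalone sanity check.'' Your main argument instead observes that re-indexing gives $\sum_{r=1}^\infty(1/8)^{r-1}\pullsround[r]=8\sum_{r'=0}^\infty(1/8)^{r'+1}\pullsround[r'+1]$, which is precisely $8$ times the left-hand side of \cref{lemma:trs} at $s=1$ (your parenthetical about an index shift is unnecessary: with $s=1$ the summand $\pullsround[\roundidx+s]$ is literally $\pullsround[r'+1]$, so the match is exact), and the right-hand side of \cref{lemma:trs} at $s=1$ collapses to the claimed bound. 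This reduction is cleaner and avoids duplicating the computation; what the paper's self-contained version buys is only independence from \cref{lemma:trs} and slightly more explicit constants. One small caveat, shared equally by the paper's own proof: absorbing the $s^2 4^s\log(s/\delta)$ term, which at $s=1$ is $\Theta(\log(1/\delta))$, into $\mathcal{O}\left(\log(1/\delta)\log^2\log(1/\delta)\right)$ implicitly assumes $\delta$ is small enough that $\log^2\log(1/\delta)$ is bounded away from zero, so this is a feature of the asymptotic convention rather than a gap in your argument.
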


We are now ready to prove \cref{thm:correctness_complexity}, for which we follow \cite[Lemma 3.5]{karnin2013almost} to bound the number of channel sensing operations. \extstart To prove the asymptotic sample complexity, we use as a PAC-subroutine in \cref{alg:bai} the \medianpac policy introduced and analyzed in the following section. The number of operations required by the subroutine (cf. \cref{thm:correctness-complexity-pac-median}) are in the order of $\pullsround$ (cf. \eqref{eq:pullsround}), and henceforth can be ignored for the analysis. For $s\geq1$, the number of times $T_s$ the learner senses channels from $\channelpart$ \extstart can be bounded by using the fact that from round $s$ onward, the cardinality of the candidate set $\channelpartround$ will decrease (w.h.p.) by a fraction of $\frac{1}{8}$ per round, hence the number of per-round senses, accordingly. Up until round $s$, we assume the worst-case scenario of a non-decreasing candidate set described by $\channelpart$. In particular, we can write \extend \vspace{-0.2cm}
\begin{align*}
    T_s &= \sum_{r=1}^\infty \vert \channelpartround \vert \pullsround \leq \sum_{r=1}^{s-1} \vert \channelpart\vert \pullsround + \sum_{r=s}^\infty \vert \channelpartround \vert \pullsround \nonumber \\[-0.15cm]
    &\leq \channelpartn \sum_{r=1}^{s-1} \pullsround + \channelpartn \sum_{r=0}^\infty \inpara{\frac{1}{8}}^{r+1} \!\!\! \pullsround[\roundidx+s], \\[-0.6cm] \nonumber
\end{align*}
\extstart which we can bound  using \cref{lemma:tr,lemma:trs}. 
For $s < 1$, \extstart meaning for large suboptimality gaps, the cardinality of the candidate set will decrease (w.h.p.) from the first round onward. Hence, \extend we have $T_s = \channelpartn \sum_{r=1}^\infty \inpara{\frac{1}{8}}^{r-1} \pullsround[\roundidx]$, \extstart which we can bound by the help of \cref{lemma:trinf}. \extend
\extstart We can further observe that \extend for all channels $\channelidx \in \channelpart$, by \cref{def:partitions} we have $2^s < \frac{2}{\suboptimalitygap}$. Hence, from  \cref{lemma:tr,lemma:trs,lemma:trinf}, we obtain $T_s$ for all $s$ as
\begin{align*}
    T_s 
    &= \mathcal{O} \left( \channelpartn 4^s \log\left(\frac{s}{\delta}\right) \left( \log^2\log\left(\frac{s}{\delta}\right) + s^2 \right)\right) \\[-0.1cm]
    &= \begin{aligned}[t] \mathcal{O} \Bigg( \sum_{\channelidx\in \channelpart} &\suboptimalitygap^{-2} \log\left(\frac{1}{\delta}\log\left(\suboptimalitygap^{-1}\right)\right) \\[-0.3cm]
         &\left( \log^2\log\left(\frac{1}{\delta}\log\left(\suboptimalitygap^{-1}\right)\right) +  \log^2\left(\suboptimalitygap^{-1}\right) \right) \Bigg). \end{aligned} \\[-0.7cm]
\end{align*}
Summing over all $\channelpart$ (all channels) concludes the proof.
\end{proof}

\begin{remark}
    If a capacity-achieving input distribution $P^\star_{\rvX}\in{\cP}({\cX})$ is known for each channel $\channelidx$ (or only for the best channel $\channelidx^\star$), or even fixed due to some constraints, then the problem of finding the channel with maximum capacity boils down to determining the channel with the largest mutual information for $P^\star_{\rvX}$. In this case, the required number of channel senses is obtained by replacing $\nfactor[\alpha]$ in~\eqref{eq:req_samples} by $\nfactorprime[\alpha] \define 2 (3\card{\symtx} +2)^2 \log\inpara{4/\alpha}$. This follows from using the confidence bound for mutual information estimation established in \cite{shamir2010learning} and following the same derivations as for obtaining \cref{thm:correctness_complexity}.
\end{remark}

\extstart
\section{Approximate Best Channel Identification} \label{sec:pac}
By approximate best channel identification, we consider the problem of identifying with high probability an \textit{$\varepsilon$-best channel} $\channelidx_\varepsilon \in \{\channelidx: \suboptimalitygap\leq \varepsilon\}_{\channelidx \in [\nchannels]}$ such that the identified channel $\pacoutputchannel$ satisfies $$\Pr\left(\suboptimalitygap[\pacoutputchannel] \leq \epsci\right) \geq 1-\delta.$$ An algorithm that fulfills this property is called $(\epsci, \delta)$-PAC. A variety of such algorithms have been proposed and analyzed, most notably in \cite{evendar2006action}. However, the required assumptions on the bandits' reward distributions are not satisfied in the specific case of best channel identification. In this section, we will focus on two improved algorithms studied in  works subsequent to  \cite{evendar2006action}, and adapt them to the channel identification problem. After this modification, each of these algorithms can be used either as a standalone $(\epsci, \confidence)$-PAC policy, or as a subroutine in \cref{alg:bai}.

To build an intuition about how this class of algorithms work, we will first introduce a naive sampling strategy for channel identification, which has been studied for rewards with bounded support in \cite{evendar2006action}. Showing the correctness and the sample complexity of this algorithm is simple. The algorithm, which we term \pacpolicy, \extend
senses each channel $\channelidx \in \mathcal{C}$ for $\nsamples$ times to estimate $\capacitychannelestn$ as explained \extstart in \cref{sec:confidence}\extend. The policy then outputs the channel $\pacoutputchannel$ that \extstart maximizes $\capacitychannelestn$ and \extend is \extstart shown \extend to be $\epsci$-close to the best channel $\channelidx^\star$ with probability at least $1-\delta$, i.e., $\Pr\big(\suboptimalitygap[\pacoutputchannel] \leq \epsci\big) \geq 1-\delta$. \pacpolicy is summarized in \cref{alg:pac-bai}. %
The number of required channel senses (per channel) follows from \cref{thm:req_samples} and is given by
\extstart
\begin{align}\label{eq:naivepulls}
    \nsamples \geq \max\inbracelets{\frac{4 \linfac \nfactor[\frac{\delta}{2\nchannels}]}{\epsci^2} \log^2\left(\frac{4 \logfacvar[\frac{\delta}{2\nchannels}] \nfactor[\frac{\delta}{2\nchannels}]}{\epsci^2}\right), \frac{2 \nfactor[\frac{\delta}{2\nchannels}]}{\epsci}}.
\end{align}
\extend
\vspace{-0.3cm}

\begin{algorithm}[t]
\caption{$(\epsci, \delta)$-PAC \pacpolicy}%
\label{alg:pac-bai}
\begin{algorithmic}
\Require 0 < $\delta$ < 1, $\epsci > 0$, Set of channels $\mathcal{C}$,  $\nsamples$ as in~\eqref{eq:naivepulls}
\State Sample each channel $\channelidx \in \roundchannels$ for $\nsamples$ times and let $\capacitychannelestn$ be the resulting capacity estimates
\Output $\{\pacoutputchannel\} = \argmax_{\channelidx \in \mathcal{C}} \capacitychannelestn$
\end{algorithmic}
\end{algorithm}

\extstart We continue to analyze the properties of \cref{alg:pac-bai}, i.e., the correctness and the sample complexity, which are summarized in \cref{thm:correctness-complexity-pac}.\extend

\begin{theorem} \label{thm:correctness-complexity-pac}
The $(\epsci, \delta)$-PAC \pacpolicy described in \cref{alg:pac-bai} outputs a channel $\pacoutputchannel\in \mathcal{C}$ with $\suboptimalitygap[\pacoutputchannel] \leq \epsci \leq 2$ with probability at least $1-\delta$.
The total required sample complexity is given by
\begin{align*}
     \mathcal{O} \Bigg( \frac{\lvert \mathcal{C}\rvert}{\epsci^2} \log\left(\frac{\lvert \mathcal{C}\rvert}{\delta}\right) \log^2\left(\frac{1}{\epsci^2}\log\left(\frac{\lvert \mathcal{C}\rvert}{\delta}\right)\right).
\end{align*}
\end{theorem}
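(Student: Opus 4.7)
The plan is to prove correctness via a uniform-concentration argument over $\mathcal{C}$, and then plug the choice of $\nsamples$ from~\eqref{eq:naivepulls} into \cref{thm:req_samples} to obtain the sample complexity. The proof structure is the standard ``uniform confidence plus argmax'' template dating back to \cite{evendar2006action}; the only non-trivial probabilistic input is the capacity confidence interval, which has already been established in \cref{prop: Concentration for capacity} and \cref{thm:req_samples}.

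For correctness, I would invoke \cref{thm:req_samples} channel-by-channel with per-channel confidence parameter $\alpha = \confidence/(2|\mathcal{C}|)$ and target estimation accuracy $\epsci/2$. A short check shows that $\nsamples$ as chosen in~\eqref{eq:naivepulls} is at least the sample size required by \cref{thm:req_samples} for this $(\alpha, \epsci/2)$ pair: the extra factor of $4$ in the leading term of~\eqref{eq:naivepulls} is precisely what the $(\epsci/2)^{-2}$ scaling demands. Hence, for each fixed $\channelidx \in \mathcal{C}$, we have $|\capacitychannelestn - \capacitychannel| \leq \epsci/2$ with probability at least $1 - \confidence/(2|\mathcal{C}|)$. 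A union bound over the $|\mathcal{C}|$ channels yields a good event of probability at least $1 - \confidence/2 \geq 1 - \confidence$ on which all capacity estimates are within $\epsci/2$ of the truth. On this event, the output $\pacoutputchannel = \argmax_{\channelidx \in \mathcal{C}} \capacitychannelestn$ satisfies
\[
\capacitychannel[\pacoutputchannel] \;\geq\; \capacitychannelestn[\pacoutputchannel] - \epsci/2 \;\geq\; \capacitychannelestn[\channelidx^\star] - \epsci/2 \;\geq\; \capacitychannel[\channelidx^\star] - \epsci,
\]
which rearranges to $\suboptimalitygap[\pacoutputchannel] \leq \epsci$.

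For the sample complexity, the total number of channel senses is $|\mathcal{C}| \cdot \nsamples$. Recalling $\nfactor[\alpha] = 4|\cX||\cY|\log(|\cX|/\alpha)$ (with the alphabet sizes absorbed into the $\mathcal{O}$-notation), we get $\nfactor[\confidence/(2|\mathcal{C}|)] = \mathcal{O}(\log(|\mathcal{C}|/\confidence))$, while $\logfacvar[\confidence/(2|\mathcal{C}|)]$ contributes only an iterated logarithm that is swallowed by the outer $\log^2(\cdot)$. For $\epsci \leq 2$, the first term in the $\max$ of~\eqref{eq:naivepulls} dominates and simplifies to
\[
\nsamples \;=\; \mathcal{O}\!\left( \frac{1}{\epsci^2} \log\!\left(\frac{|\mathcal{C}|}{\confidence}\right) \log^2\!\left( \frac{1}{\epsci^2} \log\!\left(\frac{|\mathcal{C}|}{\confidence}\right) \right) \right);
\]
multiplying by $|\mathcal{C}|$ yields the claimed total complexity.

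I do not expect any substantive obstacle here: the heavy lifting has already been done in \cref{sec:confidence}. The only care points are (i) verifying that~\eqref{eq:naivepulls} is tight enough to guarantee per-channel error $\epsci/2$ rather than $\epsci$, and (ii) distributing the confidence budget $\confidence$ evenly across the $|\mathcal{C}|$ union-bound events, both of which are routine.
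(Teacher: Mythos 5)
Your proposal is correct and follows essentially the same route as the paper: both invoke \cref{thm:req_samples} with accuracy $\epsci/2$ and per-channel confidence $\delta/(2\lvert\mathcal{C}\rvert)$, apply a union bound, and read the sample complexity off~\eqref{eq:naivepulls} (with the first term of the max dominating for $\epsci\leq 2$). The only cosmetic difference is bookkeeping: the paper pairs each suboptimal channel with $\channelidx^\star$ to bound $\Pr\big(\capacitychannelestn[\channelidx^\star]<\capacitychannelestn\big)\leq\delta/\card{\mathcal{C}}$ per pair, whereas you condition on a single good event where all estimates concentrate; both yield the stated guarantee.
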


We assume $\epsci \leq 2$ since we will use \pacpolicy as a subroutine in \ourpolicy, ensuring that $\epsci \leq 1$. 
\begin{proof}

Consider the \pacpolicy described in \cref{alg:pac-bai}. To prove the correctness of the algorithm, let $\channel \in \mathcal{C}$ be a channel with $\suboptimalitygap[\channelidx] > \epsci$. The algorithm chooses $\nsamples$ to satisfy \eqref{eq:naivepulls}. We have from \cref{thm:req_samples} that
\begin{align*}
\Pr\big(\capacitychannelestn > \capacitychannel + \epsci/2\big) &\leq \delta/(2\lvert \mathcal{C}\rvert), \\
\Pr\big(\capacitychannelestn[\channelidx^\star] < \capacitychannel[\channelidx^\star] - \epsci/2\big) &\leq \confidence/(2\lvert \mathcal{C}\rvert).
\end{align*}
Since $\suboptimalitygap[\channelidx] = \capacitychannel[\channelidx^\star] - \capacitychannel[\channelidx] > \epsci$, it thus follows that $$\Pr\big(\capacitychannelestn[\channelidx^\star] < \capacitychannelestn[\channelidx]\big) \leq \delta/\card{\mathcal{C}}.$$ By the union bound, we have $\Pr\big(\exists \channelidx: \suboptimalitygap > \epsci, \capacitychannelestn[\channelidx^\star] < \capacitychannelestn[\channelidx] \big) %
    \leq \delta$, which proves the correctness.

\extstart
Let now $\epsci \leq 2$, then for $\tmpx \define 4 \card{\symtx} \card{\symrx}$ a straightforward calculation bounds the total number $T$ of channel senses by \vspace{-0.1cm}
\begin{align*}
    T \leq \frac{4 \linfac \tmpx \card{\mathcal{C}} \log\big(\frac{2\card{\mathcal{C}} \card{\symtx}}{\delta}\big)}{\epsci^2} \log^2\bigg(\frac{4 \logfacvar \tmpx \card{\mathcal{C}} \log\big(\frac{2\card{\mathcal{C}} \card{\symtx}}{\delta}\big)}{\epsci^2}\bigg). \\[-0.6cm]
\end{align*}
\extend
This leads to the complexity stated in \cref{thm:correctness-complexity-pac}.
\end{proof}

\extstart Such naive strategies sample the candidates several times and select the most promising candidate. They are known to achieve good empirical results whenever the candidate set (i.e., the number of arms or channels) is small \cite{hassidim2020optimal}. However, for a larger set of candidates, the \texttt{MedianElimination} algorithm proposed in \cite{evendar2006action} is commonly used. This algorithm was shown to be an asymptotically optimal PAC-algorithm. \extend
For $(\epsci, \confidence)$-PAC problems with a reasonably small set of candidate arms (channels in our case) whose rewards are bounded in the interval $[0,1]$, \cite{hassidim2020optimal} shows that a naive sampling strategy as in \cref{alg:pac-bai} is indeed to be preferred over \texttt{MedianElimination}. Furthermore, the authors of \cite{hassidim2020optimal} also propose improved sampling strategies that are beneficial over naive sampling only when the number of arms is larger than $10^5$.

\begin{algorithm}[t]
\extstart
\caption{$(\epsci, \delta)$-PAC \medianpac}%
\label{alg:pac-median}
\begin{algorithmic}
\Require 0 < $\delta$ < 1, $\epsci > 0$, Set of channels $\mathcal{C}$
\State $\roundchannels[1] \gets [\nchannels]$, $\epsround[1] = \epsci/4$, $\deltaround[1] = \delta/2$, $\roundidx=1$, $\pullsround$ as in \eqref{eq:median_pac_senses}
\While{$\card{\roundchannels[\roundidx]} > 1$}
\State Sample each channel $\channelidx \in \roundchannels$ for $\pullsround$ times
\State Let $\capacitychannelestn$ be the resulting capacity estimates
\State Let $c_\roundidx$ be the channel with the median estimate
\State $\roundchannels[\roundidx+1] \gets \roundchannels \setminus \inbracelets{\channelidx \in \roundchannels: \capacitychannelestr < \capacitychannelestr[c_\roundidx]}$
\State $\epsci_{\roundidx+1} = 3\epsround/4$, $\delta_{\roundidx+1} = \deltaround/2$, $\roundidx \gets \roundidx + 1$
\EndWhile
\Output $\{\pacoutputchannel\} = \argmax_{\channelidx \in \mathcal{C}} \capacitychannelestn$, an $\epsci$-best channel with probability $1-\delta$.
\end{algorithmic}
\end{algorithm}

\extstart Indeed, we will show that \extend an adaptation of the well-known \texttt{MedianElimination} \cite{evendar2006action} has a better asymptotic complexity than \pacpolicy in terms of $\cO$-notation, but incurs large multiplicative overheads. \extstart We will introduce and analyze a modified version suited for our problem, which will also be used in the asymptotic sample complexity analysis of \ourpolicy. \extend
However, since the number of channels here is likely to be rather small, following the same line of argumentation of \cite{hassidim2020optimal}, we expect that for small $\nchannels$, \pacpolicy leads to smaller actual sample complexity, despite a slightly worse asymptotic complexity statement. \extstart We will verify this statement both theoretically and experimentally in the sequel.

We call the modified version of \texttt{MedianElimination} \medianpac. In the terminology of our paper, this round-based algorithm discards half of the candidate channels each round which are likely not to be $\epsci$-optimal. We refer to that algorithm as \medianpac and provide a summary in \cref{alg:pac-median}. To fulfill the requirements of an $(\epsci, \confidence)$-PAC policy, the number of channel senses per round is required to be
\begin{equation} \label{eq:median_pac_senses}
\pullsround \geq \max\inbracelets{\frac{4 \linfac \nfactor[\deltaround/3]}{\epsround^2} \cdot  \log^2\left(\frac{4 \logfacvar \nfactor[\deltaround/3]}{\epsround^2}\right), \frac{2 \nfactor[\deltaround/3]}{\epsround^2}}.
\end{equation}

\cref{thm:correctness-complexity-pac-median} further summarizes the most important properties of \medianpac.

\begin{theorem} \label{thm:correctness-complexity-pac-median}
The $(\epsci, \delta)$-PAC \medianpac described in \cref{alg:pac-median} outputs a channel $\pacoutputchannel\in \mathcal{C}$ with $\suboptimalitygap[\pacoutputchannel] \leq \epsci \leq 4$ with probability at least $1-\delta$.
The total required sample complexity is given by
\begin{align*}
     \mathcal{O} \left( \frac{\lvert \mathcal{C}\rvert}{\epsci^2} \log\left(\frac{1}{\delta}\right) \log^2\left(\frac{1}{\epsci^2}\log\left(\frac{1}{\delta}\right)\right) \right).
\end{align*}
\end{theorem}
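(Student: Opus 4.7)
The plan is to adapt the classical Median Elimination analysis of Even-Dar, Mannor, and Mansour \cite{evendar2006action} to our setting essentially verbatim, substituting Hoeffding concentration for $[0,1]$-bounded rewards by the capacity-estimator confidence bound in \cref{thm:req_samples}. The proof naturally decomposes into (i) a correctness argument that is inductive over rounds, and (ii) a sample-complexity argument that sums a geometric series in the round index $\roundidx$.

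For correctness, I would fix a round $\roundidx$, let $\channelidx^\star_\roundidx \define \argmax_{\channelidx \in \roundchannels} \capacitychannel$ denote the best channel still under consideration, and establish the standard per-round lemma: with probability at least $1 - \deltaround$, the surviving set $\roundchannels[\roundidx+1]$ contains at least one channel $\channelidx'$ satisfying $\capacitychannel[\channelidx'] \geq \capacitychannel[\channelidx^\star_\roundidx] - \epsround$. The per-round sample count $\pullsround$ in~\eqref{eq:median_pac_senses} is tuned precisely so that, by \cref{thm:req_samples}, each single empirical capacity $\capacitychannelestr$ deviates from $\capacitychannel$ by more than $\epsround/2$ with fixed-channel probability at most $\deltaround/3$. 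The standard Markov-type argument of \cite{evendar2006action} --- which crucially avoids a union bound over the $\card{\roundchannels}$ active channels --- then upper-bounds by $\deltaround$ the probability that every $\epsround$-optimal channel is eliminated, by controlling the expected number of ``bad'' channels whose empirical capacity exceeds $\capacitychannelestr[\channelidx^\star_\roundidx]$ and then applying Markov's inequality to the event that at least half of the survivors are bad. Telescoping across the at most $\lceil \log_2 \card{\mathcal{C}} \rceil$ rounds with the schedule $\epsround = (3/4)^{\roundidx-1} \epsci/4$ and $\deltaround = \delta/2^\roundidx$ yields
\begin{align*}
\suboptimalitygap[\pacoutputchannel] \leq \sum_{\roundidx \geq 1} \epsround = \frac{\epsci}{4} \sum_{\roundidx \geq 0} \inpara{\frac{3}{4}}^\roundidx = \epsci, \qquad \sum_{\roundidx \geq 1} \deltaround \leq \delta,
\end{align*}
establishing the claimed $(\epsci,\delta)$-PAC guarantee.

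For sample complexity, I would simply sum $\card{\roundchannels} \cdot \pullsround \leq (\card{\mathcal{C}}/2^{\roundidx-1}) \pullsround$ across rounds. Substituting the schedule above into~\eqref{eq:median_pac_senses}, the leading factor becomes $\card{\mathcal{C}}/2^{\roundidx-1} \cdot 1/\epsround^2 = 16 \card{\mathcal{C}} (8/9)^{\roundidx-1}/\epsci^2$, which is a convergent geometric series in $\roundidx$ since $8/9 < 1$. Hence the total is dominated by the $\roundidx=1$ contribution, giving the advertised bound $\mathcal{O}\!\left(\card{\mathcal{C}}/\epsci^2 \cdot \log(1/\delta)\log^2\!\left(\frac{1}{\epsci^2}\log(1/\delta)\right)\right)$, once the $\log(1/\deltaround)$ term is shown to add only an $O(\roundidx)$ increment per round (absorbed into $\log(1/\delta)$) and the outer $\log^2$ factor is bounded by its first-round value up to a constant.

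The main obstacle I expect is the careful control of the iterated $\log^2(\cdot)$ factor inherited from \cref{thm:req_samples}: its argument $\nfactor[\deltaround/3]/\epsround^2$ grows exponentially in $\roundidx$, while the outer geometric ratio on the number of surviving channels is only $8/9$ rather than $1/2$. I will need to verify that $\log^2\!\bigl((16/9)^{\roundidx-1}(\roundidx + \log(1/\delta))\bigr)$ grows only polynomially in $\roundidx$, so that the per-round residual cost still decays geometrically and the geometric sum remains convergent. A minor auxiliary check is that the simpler ``$\epsci \leq 1$'' branch of \cref{thm:req_samples} applies throughout the recursion: since the hypothesis $\epsci \leq 4$ gives $\epsilon_1 = \epsci/4 \leq 1$ and $\epsround$ only decreases across rounds, this is automatic and explains the statement's $\epsci \leq 4$ restriction.
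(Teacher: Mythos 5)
Your proposal is correct and follows essentially the same route as the paper's proof: correctness via the per-round Markov-type elimination argument of \cite{evendar2006action} with the error probability split as $\deltaround/3$ per event and the schedules $\epsround=(3/4)^{\roundidx-1}\epsci/4$, $\deltaround=\delta/2^{\roundidx}$ telescoping to $(\epsci,\delta)$, and sample complexity via the geometric sum $\sum_{\roundidx}(8/9)^{\roundidx-1}$ after showing the iterated $\log^2$ factor from \cref{thm:req_samples} contributes only a polynomial in $\roundidx$ (the paper's bound on $\logsq$), with $\epsci\leq 4$ ensuring $\epsround\leq 1$ throughout. The only difference is one of emphasis: you spell out the correctness argument in more detail than the paper, which defers it to \cite{evendar2006action} in a single sentence.
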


\begin{proof}
    The proof is given in~\cref{sec:app_thm3}.
\end{proof}

One can observe that the asymptotic statement of sample complexity is better than that of \pacpolicy by the lack of the $\card{\mathcal{C}}$ factors in both $\log$ terms. To verify the hypothesis of \pacpolicy performing better for a small set of channels, we simulate the algorithms with a random set of binary DMCs for different $(\epsci, \delta)$ values and plot the resulting actual number of required channel senses in \cref{fig:pac_comparison}. We use $(\epsci=0.1, \delta=0.1)$ as a baseline and observe the impact of increasing either the former to $\epsci=0.3$ or the latter to $\delta=0.7$. As expected from \cref{thm:correctness-complexity-pac,thm:correctness-complexity-pac-median}, one can find that increasing $\epsci$ has a much larger impact on the number of channel senses than increasing $\delta$. Further, while increasing $\epsci$ does not impact the break-even point at which \medianpac gets beneficial over \pacpolicy, increasing $\delta$ shifts the break-even point to the left, i.e., towards a smaller number of channels.

\begin{figure}
    \centering
    \resizebox{.6\linewidth}{!}{\begin{tikzpicture}

\definecolor{darkgray176}{RGB}{176,176,176}
\definecolor{lightgray204}{RGB}{204,204,204}
\definecolor{orange}{RGB}{255,165,0}

\newcommand{\lw}{1.1}

\begin{axis}[
legend cell align={left},
legend columns=2,
legend style={
  fill opacity=0.8,
  draw opacity=1,
  text opacity=1,
  at={(0.03,0.97)},
  anchor=north west,
  draw=lightgray204
},
log basis x={10},
log basis y={10},
tick align=outside,
tick pos=left,
x grid style={darkgray176},
xlabel={Number of Channels},
xmajorgrids,
xmin=3.0314331330208, xmax=1351.17610063144,
xmode=log,
xtick style={color=black},
xtick={0.1,1,10,100,1000,10000,100000},
xticklabels={
  \(\displaystyle {10^{-1}}\),
  \(\displaystyle {10^{0}}\),
  \(\displaystyle {10^{1}}\),
  \(\displaystyle {10^{2}}\),
  \(\displaystyle {10^{3}}\),
  \(\displaystyle {10^{4}}\),
  \(\displaystyle {10^{5}}\)
},
y grid style={darkgray176},
ylabel={Total Number of Channel Senses},
ymajorgrids,
ymin=8835713.54118639, ymax=130073582883.292,
ymode=log,
ytick style={color=black},
ytick={100000,1000000,10000000,100000000,1000000000,10000000000,100000000000,1000000000000,10000000000000},
yticklabels={
  \(\displaystyle {10^{5}}\),
  \(\displaystyle {10^{6}}\),
  \(\displaystyle {10^{7}}\),
  \(\displaystyle {10^{8}}\),
  \(\displaystyle {10^{9}}\),
  \(\displaystyle {10^{10}}\),
  \(\displaystyle {10^{11}}\),
  \(\displaystyle {10^{12}}\),
  \(\displaystyle {10^{13}}\)
}
]
\addplot [thick, black]
table {%
4 178708122
8 416985618
16 893540610
32 1846650594
64 3752870562
128 7565310498
256 15190190370
512 30439950114
1024 60939469602
}; \label{style1}
\addplot [thick, black, dash pattern=on 1pt off 3pt on 3pt off 3pt, line width=\lw]
table {%
4 178708122
8 416985618
16 893540610
32 1846650594
64 3752870562
128 7565310498
256 15190190370
512 30439950114
1024 60939469602
}; \label{style2}
\addplot [thick, black, dashed, line width=\lw]
table {%
4 178708122
8 416985618
16 893540610
32 1846650594
64 3752870562
128 7565310498
256 15190190370
512 30439950114
1024 60939469602
}; \label{style3}
\addplot [thick, blue, mark=x, mark size=3, mark options={solid}, line width=\lw]
table {%
4 178708122
8 416985618
16 893540610
32 1846650594
64 3752870562
128 7565310498
256 15190190370
512 30439950114
1024 60939469602
};
\addplot [thick, orange, mark=x, mark size=3, mark options={solid}, line width=\lw]
table {%
4 149771912
8 343290032
16 774713712
32 1726839008
64 3810567424
128 8338679936
256 18119369984
512 39135559680
1024 84088565760
}; \label{line1}
\addplot [thick, blue, dash pattern=on 1pt off 3pt on 3pt off 3pt, mark=x, mark size=3, mark options={solid}, forget plot, line width=\lw]
table {%
4 96685320
8 225599080
16 483426600
32 999081640
64 2030391720
128 4093011880
256 8218252200
512 16468732840
1024 32969694120
}; \label{line2}
\addplot [thick, orange, dash pattern=on 1pt off 3pt on 3pt off 3pt, mark=x, mark size=3, mark options={solid}, forget plot, line width=\lw]
table {%
4 94150532
8 230977192
16 548239296
32 1270634624
64 2892341248
128 6491711488
256 14406243840
512 31674017792
1024 69100158976
};
\addplot [thick, blue, dashed, mark=x, mark size=3, mark options={solid}, forget plot, line width=\lw]
table {%
4 16257984
8 37935296
16 81289920
32 167999168
64 341417664
128 688254656
256 1381928640
512 2769276608
1024 5543972544
};
\addplot [thick, orange, dashed, mark=x, mark size=3, mark options={solid}, forget plot, line width=\lw]
table {%
4 13667648
8 31383712
16 70935520
32 158334368
64 349825024
128 766380928
256 1666992896
512 3603869184
1024 7750151168
};
\addplot[red,mark=o,mark size=3pt] coordinates {(64,3752870562)};
\label{beps}
\addplot[red,mark=o,mark size=3pt] coordinates {(6,159099080)}; %
\addplot[red,mark=o,mark size=3pt] coordinates {(50,267417664)}; %
\end{axis}

\node [draw,fill=white] at (rel axis cs: 0.59,-1.47) {\shortstack[l]{
\ref{style1} $(\epsilon=0.1, \delta=0.1)$ \\
\ref{style2} $(\epsilon=0.1, \delta=0.7)$ \\
\ref{style3} $(\epsilon=0.3, \delta=0.1)$}};

\node [draw,fill=white] at (rel axis cs: 0.05,-0.85) {\shortstack[l]{
\ref{line1} \medianpac \\
\ref{line2} \pacpolicy \\
\ref{beps} Break-Even}};

\end{tikzpicture}}
    \caption{\extstart Comparison of the proposed ($\epsci, \delta$)-PAC algorithms tested on different parameter sets ($\epsci, \delta$) with $10$ binary DMCs with random transition matrices $\channel, \channelidx \in [10]$. Solid lines mean ($\epsci=0.1, \delta=0.1$), dashed dotted lines ($\epsci=0.1, \delta=0.7$) and dashed lines ($\epsci=0.3, \delta=0.1$). \extend}
    \label{fig:pac_comparison}
\end{figure}
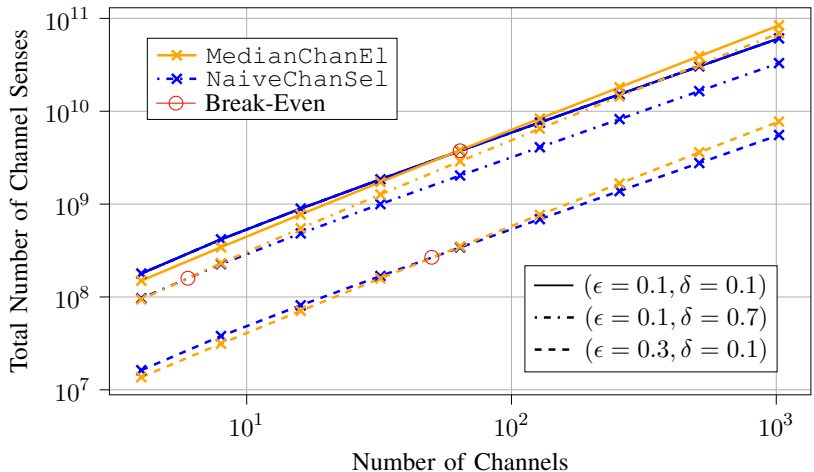

\section{An Information-Theoretic Lower Bound} \label{sec:converse}

This section establishes a lower bound on the expected sample complexity of any best channel identification algorithm with confidence $1-\delta$. To that end, we consider a bandit model %
described by the set of channels specified by conditional distributions $\dist \define (\channel[1], \dots, \channel[\change-1], \channel[\change], \channel[\change+1], \dots \channel[\nchannels])$, %
where for all $\channelidx \in [\nchannels]$ there exists a probability density $\channel \triangleq \channel(\rvY\vert \rvX)$ for which we require that $\channel \in \prmeasure$, where $\prmeasure$ is a suitable probability measure. To derive a lower bound, we consider a relaxed version of the best channel identification problem introduced above. In particular, we assume the learner knows the transition matrices of all channels $\channel$ beforehand; and hence, must correctly assign the channels to their transition matrices. Such a treatment of the problem facilitates enhanced sampling strategies that utilize the given knowledge of the environment, and provides a bound on the problem studied in this paper. To prove the lower bound, we rely on a change of measure argument that is based on altered versions of distribution $\dist$, where an element $\change \in [k]$ is replaced by an alternative channel. Let the altered distribution for any $\change \in [k]$ be $\dist^\prime_\change \define (\channel[1], \dots, \channel[\change-1], \channel[\change]^\prime, \channel[\change+1], \dots \channel[\nchannels])$. The channel $\channel[\change]^\prime$ is chosen such that it changes the desired probability of the event of identifying the best channel $\channelidx^\star$ under distribution $\dist$ of at least $1-\delta$ to at most $\delta$ under distribution $\dist_\change^\prime$. For this to happen, the best channel under $\dist_\change^\prime$ should not be $\channelidx^\star$. The following assumption formally describes the requirements for such altered distributions.

\begin{assumption} \label{assumption:alt_model}
    Consider a model $\dist$. Assume that for any $\change \in [\nchannels]\setminus \channelidx^\star$ there exists a model $\dist_\change^\prime$ with $W_{\change}^\prime \in \prmeasure$ such that for every input symbol $x \in \symtx$ it holds that $\kl{W_\change(\cdot\mid x)}{W_{\channelidx^\star}(\cdot\mid x)} < \kl{W_\change(\cdot\mid x)}{W_{\change}^\prime(\cdot\mid x)} < \kl{W_\change(\cdot\mid x)}{W_{\channelidx^\star}(\cdot\mid x)} + \kldiff$ and that $C(W_{\change}^\prime) > C(W_{\channelidx^\star})$ for any $\kldiff > 0$. %
    For $a=\channelidx^\star$, we further assume that there exists $W_{j^\star}^\prime \in \prmeasure$ such that for the suboptimal channel $\channelidx \in [\nchannels] \setminus \{\channelidx^\star\}$ and the corresponding input symbol $x_\channelidx \in \symtx$ that maximizes $\kl{W_{\channelidx^\star}(\cdot\mid x_\channelidx)}{W_{\channelidx}(\cdot\mid x_\channelidx)}$ it holds that $\kl{W_{\channelidx^\star}(\cdot\mid x_\channelidx)}{W_{\channelidx}(\cdot\mid x_\channelidx)} < \kl{W_{j^\star}(\cdot\mid x_\channelidx)}{W_{j^\star}^\prime(\cdot\mid x_\channelidx)} < \kl{W_{\channelidx^\star}(\cdot\mid x_\channelidx)}{W_{\channelidx}(\cdot\mid x_\channelidx)} + \kldiff$, where $C(W_{j^\star}^\prime) < C(W_\channelidx)$  for any $\kldiff > 0$. %
\end{assumption}

Informally, the above assumption ensures that for every suboptimal channel $\channelidx \in [\nchannels] \setminus \{\channelidx^\star\}$, there exists an alternative channel close to $\channelidx^\star$ with a larger capacity than $\channelidx^\star$. Similarly, for the best channel $\channelidx^\star$, there exists an alternative channel that is close to the furthest suboptimal channel $\channelidx$ (in terms of KL-divergence) with a smaller capacity than $\channelidx$.
With this at hand, we are now ready to state the best-case sample complexity.

\begin{theorem} \label{thm:lower_bound}
Considering a model $\dist$ satisfying \cref{assumption:alt_model}, the minimum number of required channel senses to identify the capacity-maximizing channel with probability at least $1-\delta$ is
\begin{align*}
    \mathbb{E}_\dist [\sigma] \geq &\sum_{a \in [\nchannels]\setminus \channelidx^\star} \frac{\log(1/2.4\delta)}{\max_{x \in \symtx} \kl{W_\change(\pi_y^\change \mid \pi_x^\change(x))}{W_{\channelidx^\star}(\cdot \mid x)}} \\
    &+ \frac{\log(1/2.4\delta)}{\max_{j \in [\nchannels] \setminus \channelidx^\star, x \in \symtx} \kl{W_{\channelidx^\star}(\cdot\mid x)}{W_{\channelidx}(\pi_y^\change \mid \pi_x^\change(x))}},
\end{align*}
where $\sigma$ is the stopping time of the algorithm, and $\pi_x^\change, \pi_y^\change$ are the permutations of input and output symbols for each suboptimal channel $\change \in [k] \setminus j^\star$ that maximize the lower bound on $\mathbb{E}_\dist [\sigma]$.
\end{theorem}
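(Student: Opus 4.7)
The plan is to use the standard change-of-measure (transportation) inequality in the spirit of Kaufmann, Cappé, and Garivier, adapted so that the channel inputs chosen by the policy appear explicitly. First I would invoke the following basic identity: for any $\delta$-correct algorithm with stopping time $\sigma$ and any event $\mathcal{E}\in\mathcal{F}_\sigma$, any two models $\dist$ and $\dist'$ satisfy
\begin{equation*}
\sum_{j\in[\nchannels]}\sum_{x\in\symtx}\mathbb{E}_\dist[N_{j,x}(\sigma)]\,\kl{\channel(\cdot\mid x)}{W'_j(\cdot\mid x)}\;\ge\;d\bigl(\Pr_{\dist}(\mathcal{E}),\Pr_{\dist'}(\mathcal{E})\bigr),
\end{equation*}
where $N_{j,x}(\sigma)$ is the number of rounds in which input $x$ was sent on channel $j$ before stopping and $d(\cdot,\cdot)$ is the binary KL divergence. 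This is a consequence of the chain rule for KL divergence applied to the law of $\mathcal{H}_\sigma$ together with Wald's identity; only the coordinate in which $\dist$ and $\dist'$ differ contributes to the sum.

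Next, for each suboptimal channel $\change\in[\nchannels]\setminus\{\channelidx^\star\}$, I would apply this inequality with $\dist'=\dist^\prime_\change$ supplied by \cref{assumption:alt_model} and $\mathcal{E}=\{\psi(\history_\sigma)=\channelidx^\star\}$. Since $C(W'_\change)>C(W_{\channelidx^\star})$, the unique best channel under $\dist^\prime_\change$ is $\change$, so $\delta$-correctness gives $\Pr_{\dist}(\mathcal{E})\ge1-\delta$ and $\Pr_{\dist^\prime_\change}(\mathcal{E})\le\delta$. Combining with the standard inequality $d(1-\delta,\delta)\ge\log(1/(2.4\delta))$ and upper-bounding the weighted sum over $x$ by its maximum yields
\begin{equation*}
\mathbb{E}_\dist[N_\change(\sigma)]\,\cdot\,\max_{x\in\symtx}\kl{W_\change(\cdot\mid x)}{W'_\change(\cdot\mid x)}\;\ge\;\log\bigl(1/(2.4\delta)\bigr).
\end{equation*}
Letting $\kldiff\downarrow 0$ in \cref{assumption:alt_model} lets me replace $W'_\change$ by $W_{\channelidx^\star}$ (up to permutations of the input and output alphabets, which leave capacity invariant and hence remain admissible alternatives), giving the denominator $\max_x \kl{W_\change(\pi_y^\change\mid \pi_x^\change(x))}{W_{\channelidx^\star}(\cdot\mid x)}$ that appears in the theorem.

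The second term is obtained analogously, by applying the same inequality with $\change=\channelidx^\star$ and with the alternative $W_{j^\star}^\prime$ from the second part of \cref{assumption:alt_model}. Under this alternative, $C(W_{j^\star}^\prime)<C(W_\channelidx)$ for the worst-case suboptimal $\channelidx$, so the best channel is no longer $\channelidx^\star$; the same change-of-measure argument then produces a lower bound on $\mathbb{E}_\dist[N_{\channelidx^\star}(\sigma)]$ with denominator $\max_{\channelidx,x}\kl{W_{\channelidx^\star}(\cdot\mid x)}{W_\channelidx(\pi_y^\change\mid \pi_x^\change(x))}$. Finally, summing the two types of bounds over $\change\in[\nchannels]$ and using $\mathbb{E}_\dist[\sigma]=\sum_{j}\mathbb{E}_\dist[N_j(\sigma)]$ yields the stated inequality.

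The main obstacle is the interplay between the per-input counts $N_{\change,x}$ and the single $\max_x$ that appears in the statement: the transportation inequality is naturally a weighted sum over inputs, and passing to the maximum costs nothing only because the weights $\mathbb{E}_\dist[N_{\change,x}(\sigma)]$ sum to $\mathbb{E}_\dist[N_\change(\sigma)]$. The other delicate point is making the $\kldiff\downarrow 0$ limit rigorous: \cref{assumption:alt_model} must be used to guarantee existence of alternatives $W'_\change$ (respectively $W_{j^\star}^\prime$) whose KL divergences from $W_\change$ (resp.\ $W_{\channelidx^\star}$) can be driven arbitrarily close to the target $\kl{W_\change(\cdot\mid x)}{W_{\channelidx^\star}(\cdot\mid x)}$ (resp.\ its dual) \emph{uniformly in $x$}, while still enforcing the capacity-reversal constraint that makes the hypothesis test non-trivial. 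Once these two technical points are in place, the algebraic summation is immediate.
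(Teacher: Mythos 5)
Your proposal is correct and follows essentially the same route as the paper's own proof: the transportation inequality of Kaufmann et al.\ combined with Wald's identity, the event $\{\psi(\mathcal{H}_\sigma)=\channelidx^\star\}$ whose probability flips from at least $1-\delta$ to at most $\delta$ under the alternative models guaranteed by \cref{assumption:alt_model}, the bound $d(1-\delta,\delta)\geq\log(1/(2.4\delta))$, the passage from the input-weighted KL sum to the maximum over inputs, the limit $\kldiff\downarrow 0$, and the final tightening via input/output permutations. The two technical points you flag (the sum-to-max step and the uniformity of the $\kldiff\downarrow 0$ limit) are exactly the ones the paper handles, so there is no substantive difference between the two arguments.
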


While the lower bound depends on the explicit statistical properties of the channels under consideration (i.e., their transition matrices), the sample complexities proved by our algorithms are functions of the suboptimality gaps. This discrepancy introduces a natural gap between the lower bound and the sample complexities stated above. Bounding the number of channel senses in terms of the suboptimality gaps $\suboptimalitygap$ from below would be a desired step but requires an upper bound on the quantity $\min_{\pi_y, x^\prime} \max_{x \in \symtx} \kl{W_\change(\pi_y \mid x^\prime)}{W_{\channelidx^\star}(\cdot \mid x)}$, and hence, on $\max_{x \in \symtx} \kl{W_{\channelidx^\prime}(\cdot\mid x)}{W_{\channelidx^\star}(\cdot\mid x)}$ for some permuted $W_\channelidx^\prime$ that minimizes the suboptimality gap. We detail in the following the lack of a direct relation between the maximal KL divergence and the suboptimality gap, and show that expressing our lower bound using suboptimality gap is impossible (at least, without additional assumptions). 

To that end, consider as the best among all DMCs in the candidate set $\mathcal{C}$ an approximation of a binary asymmetric channel (Z-channel) characterized by probability $q > 0$ and transition matrix $$\mathbf{W}_{\channelidx^\star} = \begin{pmatrix} 1-\varepsilon & \varepsilon \\ q & 1-q, \end{pmatrix},$$ where $\varepsilon \rightarrow 0$ is a vanishing constant. In addition, let channel with index $\channelidx \neq \channelidx^\star$ be a binary symmetric channel (BSC) with crossover probability $p > 0$ and transition matrix $$\mathbf{W}_{\channelidx} = \begin{pmatrix} p & 1-p \\ 1-p & p \end{pmatrix}.$$ Hence, when $\varepsilon \rightarrow 0$, the capacity of the best channel $\channelidx^\star$ approaches the capacity of a Z-channel, i.e., $\lim_{\varepsilon \rightarrow 0} \capacitychannel[\channelidx^\star] = \log\left(1 + (1-q)^{\frac{q}{1-q}}\right)  \in (0, \log(2)]$. Recall the capacity of the BSC $\channelidx$ being $\capacitychannel[\channelidx] = 1 - h_{\text{b}}(p) \in [0, \log(2))$. By the continuity of these two capacity formulas, for any $q$ and $\varepsilon \rightarrow 0$, there exists a $p$ parameterizing a suboptimal DMC $\channelidx$ with vanishing suboptimality gap, i.e., $\lim_{\varepsilon \rightarrow 0} \suboptimalitygap[\channelidx] = 0$ for some pair $(p, q)$. At the same time, the measure $\lim_{\varepsilon \rightarrow 0} \min_{\pi_y, x^\prime} \max_{x \in \symtx} \kl{W_\change(\pi_y \mid x^\prime)}{W_{\channelidx^\star}(\cdot \mid x)} = \infty$ goes to infinity.
Since naturally the upper bound on $\min_{\pi_y, x^\prime} \max_{x \in \symtx} \kl{W_\change(\pi_y \mid x^\prime)}{W_{\channelidx^\star}(\cdot \mid x)}$ should increase with $\suboptimalitygap$, the existence of such two channels contradicts the existence of a general bound that relates the divergence of individual rows in the transition matrices of the best and a suboptimal channel to the suboptimality gap of the latter. This finding raises the question of which figure of merit actually determines the difficulty of identifying channels with high capacity.

\begin{proof}[Proof of \cref{thm:lower_bound}]

We follow the proof method of \cite{kaufmann2016complexity}. First, recall that the action at time $t$ is $A_{t}=(X_{t},\channelidx_{t})$, i.e., the pair of a selected channel and an input symbol to be transmitted, and consider a certain $\change \in [k]$. We next define the following log-likelihood ratio of observing a certain set of actions and rewards under model $\dist$ and $\dist^\prime_\change$, where $\forall j \in [k]\setminus \{\change\}$, we set $W_j^\prime = W_j$:
\begin{align*}
    L_t &= L_t(A_1, \dots, A_t, Z_1, \dots, Z_t) \\
    &\define \sum_{\channelidx = 1}^\nchannels \sum_{\sampleidx=1}^t \sum_{x \in \symtx} \mathds{1}\{A_\sampleidx = (x, \channelidx)\} \log\left(\frac{\channel(Z_\sampleidx \vert x)}{\channel^\prime(Z_\sampleidx \vert x)}\right) \\
    &= \sum_{\channelidx = 1}^\nchannels \sum_{x \in \symtx} \sum_{\sampleidx=1}^{\nsenses} \log\left(\frac{\channel(\obssenses \vert x)}{\channel^\prime(\obssenses \vert x)}\right).
\end{align*}
The last equality holds from the i.i.d. assumption of the channel model considering $N_{x,j}$ as the number of times channel $\channelidx$ has been sensed with input symbol $x$ and $(\obssenses)$ as the sequence of subsequently observed symbols from that channel $\channelidx$. Next, applying Wald's Lemma \cite{wald1944cumulative} to the log-likelihood ratio $L_\sigma$ at stopping time $\sigma$ yields
\begin{align*}
    \mathbb{E}_\dist[L_t] &= \sum_{\channelidx = 1}^\nchannels \sum_{x \in \symtx} \mathbb{E}_\dist\left[\nsenses \right] \mathbb{E}_\dist\left[\log\left(\frac{\channel(\obssenses \vert x)}{\channel^\prime(\obssenses \vert x)}\right)\right] \\
    &= \sum_{\channelidx = 1}^\nchannels \sum_{x \in \symtx} \mathbb{E}_\dist\left[\nsenses \right] \kl{W_\channelidx(\cdot\mid x)}{W_\channelidx^\prime(\cdot\mid x)} \\
    &= \sum_{x \in \symtx} \mathbb{E}_\dist\left[\nsenses[\change] \right] \kl{W_\change(\cdot\mid x)}{W_\change^\prime(\cdot\mid x)},
\end{align*}
where the last equality holds since $W_j=W_j^\prime$ for all $j\neq\change$. 
To relate the expected value of the log-likelihood ratio considering the two different models $\dist$ and $\dist^\prime_\change$ to the corresponding probabilities of a certain event $\mathcal{E}$ upon stopping time $\sigma$, we make use of the following lemma from \cite{kaufmann2016complexity}:

\begin{lemma}[{\!\!\cite[Lemma 19]{kaufmann2016complexity}}]\label{lemma:llrevents}
    For every event $\mathcal{E} \in {\cF}_{\sigma}$, we have
    \begin{align*}
        \mathbb{E}_\dist[L_t] \geq d_b(\Pr_\dist(\mathcal{E}), \Pr_{\dist^\prime}(\mathcal{E})),
    \end{align*}
    where $d_b(x,y)$ is the binary relative entropy, and $\Pr_\dist(\mathcal{E})$ and $\Pr_{\dist^\prime}(\mathcal{E})$ are the probabilities of the event $\mathcal{E}$ under the models $\dist$ and $\dist^\prime$, respectively.
\end{lemma}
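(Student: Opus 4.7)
The plan is to identify $\mathbb{E}_\dist[L_\sigma]$ as a Kullback–Leibler divergence between two probability measures on a common sample space, and then reduce it to a binary KL via the data processing inequality. Let $\mathbb{P}_\dist$ and $\mathbb{P}_{\dist^\prime}$ denote the laws of the full interaction history $(A_1, Z_1, A_2, Z_2, \dots)$ under the two bandit models, restricted to $\cF_\sigma$. Since $W_j = W_j^\prime$ for every $j \neq \change$, and since the action $A_\sampleidx$ is a deterministic function of $\history_\sampleidx$, a telescoping argument along the filtration shows that the Radon–Nikodym derivative of $\mathbb{P}_\dist\vert_{\cF_\sigma}$ with respect to $\mathbb{P}_{\dist^\prime}\vert_{\cF_\sigma}$ coincides with $\exp(L_\sigma)$. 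Taking expectations then gives $\mathbb{E}_\dist[L_\sigma] = \kl{\mathbb{P}_\dist\vert_{\cF_\sigma}}{\mathbb{P}_{\dist^\prime}\vert_{\cF_\sigma}}$.

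The second step is to apply the data processing inequality for KL divergence: for any measurable map $f$ and probability measures $P,Q$, we have $\kl{P}{Q} \geq \kl{f_\# P}{f_\# Q}$, where $f_\# P$ denotes the pushforward. Since $\cE \in \cF_\sigma$ by hypothesis, the indicator $\mathds{1}_\cE$ is an $\cF_\sigma$-measurable binary random variable whose pushforwards under the two measures are $\mathrm{Bern}(\Pr_\dist(\cE))$ and $\mathrm{Bern}(\Pr_{\dist^\prime}(\cE))$. The KL divergence between these two Bernoulli laws is by definition $d_b(\Pr_\dist(\cE), \Pr_{\dist^\prime}(\cE))$, and chaining with the previous step yields
\begin{align*}
\mathbb{E}_\dist[L_\sigma] \;=\; \kl{\mathbb{P}_\dist\vert_{\cF_\sigma}}{\mathbb{P}_{\dist^\prime}\vert_{\cF_\sigma}} \;\geq\; d_b\bigl(\Pr_\dist(\cE),\Pr_{\dist^\prime}(\cE)\bigr),
\end{align*}
as claimed.

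The main obstacle is handling the random stopping time $\sigma$ carefully so that the identification of $L_\sigma$ with the log Radon–Nikodym derivative remains valid. Two technical points need attention: first, absolute continuity of $W_\change^\prime$ with respect to $W_\change$ for every input symbol, so that $L_\sigma$ is almost surely finite; under \cref{assumption:alt_model} the alternative $W_\change^\prime$ is explicitly chosen within $\prmeasure$ and close in KL to $W_\change$, which ensures this. Second, one must justify stopping at $\sigma$, which is cleanest done by first proving the identity on the deterministic events $\{\sigma = t\}$ via the chain rule of KL divergence along $\cF_1 \subset \cF_2 \subset \cdots$, and then summing over $t$ using $\Pr_\dist(\sigma < \infty) = 1$ (guaranteed because the algorithms under consideration terminate almost surely). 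The data processing step itself is then routine, and the binary KL identification is immediate from the definition of $d_b$.
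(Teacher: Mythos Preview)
The paper does not prove this lemma; it merely cites it as \cite[Lemma~19]{kaufmann2016complexity} and uses it as a black box in the proof of \cref{thm:lower_bound}. Your argument is the standard proof of this result (and is essentially the one given in Kaufmann--Capp\'e--Garivier): identify $\mathbb{E}_\dist[L_\sigma]$ with $\kl{\mathbb{P}_\dist\vert_{\cF_\sigma}}{\mathbb{P}_{\dist'}\vert_{\cF_\sigma}}$ via the chain rule along the filtration, then apply data processing with the $\cF_\sigma$-measurable map $\mathds{1}_\cE$ to land on the Bernoulli KL $d_b$. This is correct.

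One small slip: for $L_\sigma$ to be $\mathbb{P}_\dist$-a.s.\ finite you need $\mathbb{P}_\dist \ll \mathbb{P}_{\dist'}$ on $\cF_\sigma$, i.e., $W_\change(\cdot\mid x) \ll W_\change'(\cdot\mid x)$ for every $x$, which is the reverse of what you wrote. This does not affect the validity of the argument once the direction is corrected.
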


We use the result of \cref{lemma:llrevents} with the event of selecting the best channel $\channelidx^\star$ under $\dist$ as the best channel. By the requirements for every algorithm, the probability of this happening under model $\dist$ is bounded from below by $1-\delta$ and for $\dist^\prime_\change$ bounded from above by $\delta$. This is because, under the latter model, $j^\star$ is no longer the best channel. We start with bounding the number of channel senses of all suboptimal channels $\channelidx \in [k] \setminus \{j^\star\}$. Bounding the KL-divergence terms by the maximum over all $x \in \symtx$ to later obtain a bound on $\sum_{x \in \symtx} \mathbb{E}_\dist\left[\nsenses[\change] \right]$, we simplify to
\begin{align*}
        \sum_{x \in \symtx} &\mathbb{E}_\dist\left[\nsenses[\change] \right] \max_{x \in \symtx} \kl{W_\change(\cdot\mid x)}{W_\change^\prime(\cdot\mid x)} \\
        &\geq \sum_{x \in \symtx} \mathbb{E}_\dist\left[\nsenses[\change] \right] \kl{W_\change(\cdot\mid x)}{W_\change^\prime(\cdot\mid x)} \\
        &\geq \sup_{\mathcal{E}\in \cF_\sigma} d_b(\Pr_\dist(\mathcal{E}), \Pr_{\dist^\prime}(\mathcal{E})) \geq d_b(1-\delta, \delta) \geq \log(1/2.4\delta).
\end{align*}

Now using \cref{assumption:alt_model} and letting $\kldiff$ go to zero, we get for every suboptimal channel $\change \in [\nchannels]\setminus \channelidx^\star$ that
\begin{align*}
        \sum_{x \in \symtx} \mathbb{E}_\dist\left[\nsenses[\change] \right] &\geq \frac{\log(1/2.4\delta)}{\max_{x \in \symtx} \kl{W_\change(\cdot\mid x)}{W_\change^\prime(\cdot\mid x)}} \\
        &\geq \frac{\log(1/2.4\delta)}{\max_{x \in \symtx} \kl{W_\change(\cdot\mid x)}{W_{\channelidx^\star}(\cdot\mid x)} + \kldiff} \\
        &\overset{\kldiff \rightarrow 0}{=} \frac{\log(1/2.4\delta)}{\max_{x \in \symtx} \kl{W_\change(\cdot\mid x)}{W_{\channelidx^\star}(\cdot\mid x)}}.
\end{align*}

To tighten the lower bound, we can decouple the action $A_{t}=(X_{t},\channelidx_{t})$ from the assignment of an input symbol $x^\prime_t$ for the selected channel $\channelidx_{t}$, i.e., for each channel there exists a bijective function that maps the action $X_t$ to an input symbol $x^\prime_t$. We choose that function to maximize the lower bound, which is equivalent to considering all row permutations $\pi_x^\change$ of $W_\change$. Similarly, we can choose a mapping from observation $Y_t$ to an output symbol $Y_t^\prime$ to maximize the lower bound, which can be done by optimizing over all possible column permutations $\pi_y^\change$ of the channel matrix $W_\change$. Considering that we are optimizing over the choice of the input symbol, the optimization over a permutation of input symbols translates to optimizing over a specific choice of the symbol. For any choice of $\pi_x^\change$ and $\pi_y^\change$, which we detail in the sequel, we get
\begin{align}
        \sum_{x \in \symtx} \mathbb{E}_\dist\left[\nsenses[\change] \right] &\geq \frac{\log(1/2.4\delta)}{\max_{x \in \symtx} \kl{W_\change(\pi_y^\change \mid \pi_x^\change(x))}{W_{\channelidx^\star}(\cdot \mid x)}}. \label{eq:lb_part1}
\end{align}

Similarly to above, for the uniquely optimal channel $a = \channelidx^\star$, we consider the most impactful change of measure with respect to the pair of channel and corresponding input symbol that maximizes the KL-divergence compared with $\channelidx^\star$. This is captured by \cref{assumption:alt_model}. In a similar way to which the previous bound was derived, we obtain
\begin{align*}
        \sum_{x \in \symtx} &\mathbb{E}_\dist\left[\nsenses[j^\star] \right] \geq \frac{\log(1/2.4\delta)}{\max_{x \in \symtx} \kl{W_{j^\star} (\cdot\mid x)}{W_{j^\star}^\prime(\cdot\mid x)}}\\
        &\geq \frac{\log(1/2.4\delta)}{\max_{j \in [\nchannels] \setminus \channelidx^\star} \max_{x \in \symtx} \kl{W_{\channelidx^\star}(\cdot\mid x)}{W_{\channelidx}(\cdot\mid x)} + \kldiff} \\
        &\overset{\kldiff \rightarrow 0}{=} \frac{\log(1/2.4\delta)}{\max_{j \in [\nchannels] \setminus \channelidx^\star, x \in \symtx} \kl{W_{\channelidx^\star}(\cdot\mid x)}{W_{\channelidx}(\cdot\mid x)}}.
\end{align*}

To tighten this lower bound, we apply a similar argument as above. Considering any good choice of row and column permutations $\pi_x^\change$ and $\pi_y^\change$ of $W_\change$ for all $\change \in [\nchannels]\setminus \channelidx^\star$, we have the following bound:
\begin{align}
        \sum_{x \in \symtx} &\mathbb{E}_\dist\left[\nsenses[\channelidx^\star] \right] \geq \frac{\log(1/2.4\delta)}{\max_{\change \in [\nchannels] \setminus \channelidx^\star, x \in \symtx} \kl{W_{\channelidx^\star}(\cdot\mid x)}{W_{\channelidx}(\pi_y^\change \mid \pi_x^\change(x))}} \label{eq:lb_part2}
\end{align}
We combine the result for all suboptimal channels in \eqref{eq:lb_part1} with the result for the optimal channel in \eqref{eq:lb_part2} and obtain
\begin{align}
    \mathbb{E}_\dist [\sigma] \geq &\sum_{a \in [\nchannels]\setminus \channelidx^\star} \sum_{x \in \symtx} \mathbb{E}_\dist\left[\nsenses[\change] \right] + \sum_{x \in \symtx} \mathbb{E}_\dist\left[\nsenses[\channelidx^\star] \right]. \label{eq:lb_final}
\end{align}
It remains to find a good choice of the permutations $\{\pi_x^\change, \pi_y^\change\}_{\change \in [k]\setminus j^\star}$ that maximizes the overall lower bound. This can be done by maximizing \eqref{eq:lb_final} with respect to all possible permutations. In particular, we have
\begin{align*}
    \{\pi_x^{\change, \star}, \pi_y^{\change, \star}\}_{\change \in [k]\setminus j^\star} = \argmax_{\{\pi_x^\change, \pi_y^\change\}_{\change \in [k]\setminus j^\star}} & \frac{1}{\max_{\change \in [\nchannels] \setminus \channelidx^\star, x \in \symtx} \kl{W_{\channelidx^\star}(\cdot\mid x)}{W_{\channelidx}(\pi_y^\change \mid \pi_x^\change(x))}} \\
    & + \sum_{\change \in [k]\setminus j^\star}\frac{1}{\max_{x \in \symtx} \kl{W_\change(\pi_y^\change \mid \pi_x^\change(x))}{W_{\channelidx^\star}(\cdot \mid x)}}
\end{align*}
Using the above choice of $\{\pi_x^{\change, \star}, \pi_y^{\change, \star}\}_{\change \in [k]\setminus j^\star}$ yields the statement in \cref{thm:lower_bound}.

\end{proof}

\extend

\section{Conclusion}\label{sec:conclusion}
We considered the problem of identifying the channel with the maximal capacity \extstart out of $\nchannels$ DMCs by using their input-output samples. To this end, we established a confidence interval bound for capacity estimation using the minimax characterization of capacity. \extend
We formulated the problem as best arm identification in the MAB context, \extstart proposed algorithms in the fixed confidence setting, and analyzed their sample complexity both theoretically and through numerical simulations. We further introduced an information-theoretic explicit lower bound on the sample complexity of every best channel identification problem based on the channel properties. \extend Our proposed capacity estimation algorithm uniformly distributes the samples across the channel's input letters $\symtx$. Nonetheless, a small subset of the input alphabet may only support the capacity-achieving input distribution. %
By identifying this subset throughout the algorithm, the learner may instruct the transmitter to refrain from sampling input letters not in this subset. Analyzing such algorithms is challenging because the capacity-achieving input distribution is not necessarily unique, and standard capacity computation algorithms such as Blahut-Arimoto \cite{blahut1972computation,arimoto1972algorithm} tend to produce fully supported input distributions. Even more, the support of the capacity-achieving input distribution might not be stable with respect to estimation errors of the channel. This direction is thus left for future research. \extstart Further analysis of the figures of merit that reflect the problem's difficulty and finding a lower bound based on the suboptimality gaps to bridge the gap between the achieved sample complexities and the lower bound is also left for future investigations.

\appendix

In \cref{app:proof_kl_bound} we prove the result of \cref{lemma:kl_bound}, which was stated in \cref{sec:confidence} and is used in the proof of \cref{lem: capacity bias}. In \cref{app:proof_concentration_capacity}, we prove the confidence bound for capacity estimation stated in \cref{prop: Concentration for capacity}. In \cref{app:proof_req_samples}, we prove the result for the number of samples that suffice to achieve a certain confidence level when estimating the capacity of a DMC, which was stated in \cref{thm:req_samples} and is required to prove \cref{thm:correctness_complexity,thm:correctness-complexity-pac,thm:correctness-complexity-pac-median}. In \cref{app:proof_correctness}, we prove the correctness of \ourpolicy. In \cref{app:proof_lemma_tr,app:proof_lemma_trs,app:proof_lemma_trinf}, we provide the proofs for \cref{lemma:tr,lemma:trs,lemma:trinf}, which we used in \cref{sec:best_channel_id} to prove the sample complexity given in \cref{thm:correctness_complexity}. Lastly, in \cref{sec:app_thm3}, we prove the sample complexity of \medianpac.

\subsection{Proof of \cref{lemma:kl_bound}} \label{app:proof_kl_bound}
\extstart
\begin{proof}
We want to prove that for every $P_Y(y)$ in the simplex ${\cal P}(\symrx)$, there exists a distribution $Q_{Y}^{*}$ in the simplex ${\cal P}_{\eta}(\symrx)$ such that $\kl{P_{Y}}{Q_{Y}^{*}}$ is bounded by $2\card{\symrx} \eta \leq 1$. Let $Q_{Y}^{*}$ be a modified distribution of $P_Y(y)$ constructed as
\begin{align*}
    Q_{Y}^{*} &= \frac{1}{\xi} \max \{ P_Y(y), \eta \xi \}, %
\end{align*}
with a normalization constant $\xi$ that satisfies
\begin{align*}
    \xi = &\sum_{y \in \symrx} \max\{P_Y(y), \xi \eta\} \\
    &\leq \sum_{y \in \symrx} \left(P_Y(y) + \xi \eta\right) = 1 + \xi \eta \card{\symrx}.
\end{align*}
Consequently, we have $\xi \leq \frac{1}{1-\eta \card{\symrx}}$. Using that result and the above definition, we can write
    \begin{align*}
        \min_{Q_{Y}\in{{\cal P}_{\eta}({\cal Y})}} &\kl{P_{Y}}{Q_{Y}} \leq \kl{P_{Y}}{Q_{Y}^{*}} \\
        & = \sum_{y \in \symrx} P_Y (y) \log \frac{P_Y(y)}{Q_{Y}^{*}} \\
        & = \sum_{y \in \symrx} P_Y (y) \log \frac{\xi P_Y(y)}{\max\{P_Y(y), \xi \eta\}} \\
        & = \log(\xi) + \sum_{y \in \symrx} P_Y (y) \log \frac{P_Y(y)}{\max\{P_Y(y), \xi \eta\}} \\
        &= \log(\xi) + \sum_{y \in \symrx: \xi \eta > P_Y(y)} P_Y (y) \log \frac{P_Y(y)}{\xi \eta}\\
        &\leq \log(\xi) \leq \log \left(\frac{1}{1-\eta \card{\symrx}}\right) \leq \left(\frac{\eta \card{\symrx}}{1-\eta \card{\symrx}}\right) \\
        &\leq 2 \eta \card{\symrx}.
    \end{align*}
This concludes the proof.
\end{proof}
\extend

\subsection{Proof of \cref{prop: Concentration for capacity}} \label{app:proof_concentration_capacity}
To prove \cref{prop: Concentration for capacity} we will need
the following lemma, which is a standard bound on the empirical total
variation distance, and is derived here for completeness and convenience.
\begin{lemma}
\label{lem: empirical total variation bound}Let $Y_{i}\sim P_{Y}$
be independent and identically distributed, and $\hat{P}_{Y}^{n}$ be the empirical distribution based
on $n$ observed samples. %
Then, 
\[
d_{\text{\emph{TV}}}\left(\hat{P}_{Y}^{n},P_{Y}\right)\le \sqrt{\frac{4\extstart |{\cal Y}|\extend \log\frac{1}{\alpha}}{n}}
\]
with probability $1-\alpha$. 

\begin{proof} %
The total variation distance $d_{\text{TV}}(\hat{P}_{Y}^{n},P_{Y})$ satisfies
a bounded difference inequality with constant $1/n$ as a function
of $(Y_{1},\ldots,Y_{n})$, and so by McDiarmid's inequality \cite[Theorem 3.11]{van2014probability}
\[
\Pr\left[d_{\text{TV}}(\hat{P}_{Y}^{n},P_{Y})-\EE\left[d_{\text{TV}}(\hat{P}_{Y}^{n},P_{Y})\right]\geq\epsilon\right]\leq e^{-2n\epsilon^{2}},
\]
or 
\begin{align*}
0 & \leq d_{\text{TV}}\left(\hat{P}_{Y}^{n},P_{Y}\right)%
 \leq\EE\left[d_{\text{TV}}(\hat{P}_{Y}^{n},P_{Y})\right]+\sqrt{\frac{\log\frac{1}{\alpha}}{2n}}
\end{align*}
with probability $1-\alpha$. Furthermore, 
\extstart
\begin{align*}
 & \EE\left[d_{\text{TV}}(\hat{P}_{Y}^{n},P_{Y})\right] \\
 &=\EE\left[\sum_{y\in{\cal Y}}\left|\hat{P}_{Y}^{n}(y)-P_{Y}(y)\right|\right]\\
 & \overset{(a)}{\leq} \sum_{y\in{\cal Y}} \sqrt{\EE\left[\left|\hat{P}_{Y}^{n}(y)-P_{Y}(y)\right|^{2}\right]}\\
 & \overset{(b)}{=} \sum_{y\in{\cal Y}} \sqrt{\frac{1}{\nsamples} P_{Y}(y) \left(1- P_{Y}(y)\right) }\\
 & \overset{(c)}{\leq} \sqrt{ \left( \sum_{y\in{\cal Y}} \frac{P_{Y}(y)}{\nsamples} \right) \left( \sum_{y\in{\cal Y}} \left(1- P_{Y}(y)\right) \right) } \\
 & = \sqrt{ \frac{\card{\symrx} - 1}{\nsamples} } \leq \sqrt{\frac{\card{\symrx}}{\nsamples}},
\end{align*}
where $(a)$ follows from Jensen's inequality; $(b)$ follows from the variance of an unbiased estimator for a Bernoulli random variable; and $(c)$ follows from Cauchy-Schwarz inequality. %
The result then follows since $\sqrt{a}+\sqrt{b}\le\sqrt{2(a+b)}$ and so 
\begin{align*}
\sqrt{\frac{\log\frac{1}{\alpha}}{2n}}+\sqrt{\frac{|{\cal Y}|}{\nsamples}} & \leq\sqrt{\frac{\log\frac{1}{\alpha}+2\card{\symrx}}{\nsamples}}\\
 & \leq\sqrt{\frac{4\card{\symrx}\log\frac{1}{\alpha}}{n}}.
\end{align*}
\extend
\end{proof}
\end{lemma}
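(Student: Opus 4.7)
The plan is to decompose the bound into two pieces: a concentration step that controls the fluctuation of $d_{\text{TV}}(\hat{P}_Y^n, P_Y)$ around its expectation, and a direct upper bound on the expectation itself. Both pieces will scale like $1/\sqrt{n}$, so combining them and absorbing constants will yield the claim.

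For the concentration step, I would view $F(Y_1, \ldots, Y_n) := d_{\text{TV}}(\hat{P}_Y^n, P_Y) = \sum_{y \in \cY} |\hat{P}_Y^n(y) - P_Y(y)|$ as a function of the i.i.d.\ samples and observe that swapping a single coordinate $Y_i$ can alter $\hat{P}_Y^n$ only at (at most) two entries, each by exactly $1/n$. Hence $F$ satisfies a bounded-differences property with constant of order $1/n$, and McDiarmid's inequality yields $F \le \EE[F] + \sqrt{\log(1/\alpha)/(2n)}$ with probability at least $1-\alpha$.

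For the expectation bound, I would pull a square root inside the sum via Jensen's inequality to obtain $\EE[F] \le \sum_{y \in \cY}\sqrt{\mathrm{Var}(\hat{P}_Y^n(y))}$. Since $n\hat{P}_Y^n(y) \sim \mathrm{Binomial}(n, P_Y(y))$, each variance equals $P_Y(y)(1-P_Y(y))/n$. A naive triangle bound would then give an $O\bigl(\sqrt{|\cY|^2/n}\bigr)$ estimate; to recover the tighter $\sqrt{|\cY|/n}$ I would split the product $P_Y(y)(1-P_Y(y))$ between two factors and apply Cauchy--Schwarz, using $\sum_y P_Y(y)=1$ and $\sum_y (1-P_Y(y))=|\cY|-1$. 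This yields $\EE[F] \le \sqrt{(|\cY|-1)/n}\le \sqrt{|\cY|/n}$.

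Finally, adding the two pieces and using $\sqrt{a}+\sqrt{b}\le\sqrt{2(a+b)}$ collapses everything into the claimed $\sqrt{4|\cY|\log(1/\alpha)/n}$ bound. The only subtle step is the Cauchy--Schwarz application in controlling $\EE[F]$; without it the alphabet dependence deteriorates from $\sqrt{|\cY|}$ to $|\cY|$, which would substantially weaken the lemma. The McDiarmid step and the final $\sqrt{a}+\sqrt{b}$ combination are routine once the bounded-differences constant is identified.
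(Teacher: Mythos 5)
Your proposal is correct and follows essentially the same route as the paper: McDiarmid's inequality with bounded-differences constant $1/n$ for the concentration step, then Jensen plus the Binomial variance plus Cauchy--Schwarz to bound $\EE[d_{\text{TV}}(\hat{P}_Y^n,P_Y)]$ by $\sqrt{(|\cY|-1)/n}$, and finally $\sqrt{a}+\sqrt{b}\le\sqrt{2(a+b)}$ to merge the two terms. You correctly identified the Cauchy--Schwarz split as the step that keeps the alphabet dependence at $\sqrt{|\cY|}$ rather than $|\cY|$, which is exactly the point of the paper's step $(c)$.
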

\begin{proof}[Proof of \cref{prop: Concentration for capacity}]
The idea is to relate $\mathrm{C}(\hat{W})$ to the estimate $\mathrm{C}_{\eta}^{\nsamples}(\hat{W})$ of the pseudo-capacity for a properly chosen $\eta=1/\nsamples$. The value of $\eta$ controls the
bias-variance trade-off. The bias is large for large $\eta$, yet
the estimation variance is lower. 

To this end, our first goal is to obtain a high probability bound
on the estimation error of $\mathrm{C}_{\eta}(W)$ from 
\[
\hat{\mathrm{C}}_{\eta}^{\nsamples}(W)=\min_{Q_{Y}\in{\cal P}_{\eta}({\cal Y})}\max_{x\in{\cal X}}D(\hat{W}_{Y|X=x}\mid\mid Q_{Y}).
\]
\extstart
From Lemma \ref{lem: empirical total variation bound}, it holds for
any $x\in{\cal X}$ that $\nsamples/|{\cal X}|$ samples lead to 
\begin{equation}
d_{\text{TV}}\left(\hat{W}_{Y|X}(y|x),W_{Y|X}(y|x)\right)\leq\sqrt{\frac{4|{\cal X}||{\cal Y}|\log\frac{|{\cal X}|}{\alpha}}{\nsamples}}=:\kappa,\label{eq: upper bound on TV for conditional distribution}
\end{equation}
\extend
with probability larger than $1-\alpha/|{\cal X}|$. By the union
bound, the same holds simultaneously for all $x\in{\cal X}$ with
probability $1-\alpha$. We next assume that this event holds, and
evaluate the error $|\mathrm{C}_{\eta}(\hat{W})-\mathrm{C}_{\eta}(W)|$.
To begin, we note that
\begin{align*}
D(&W_{Y|X=x}\mid\mid Q_{Y}) \nonumber \\
&=-H(W_{Y|X=x}) -\sum_{y\in{\cal Y}}W_{Y|X=x}(y)\log Q_{Y}(y), %
\end{align*}
and so for any $x\in{\cal X}$ and $Q_{Y}\in{\cal Q}_{\eta} \define {\cal P}_{\eta}({\cal Y})$ the
triangle inequality implies
\begin{align}
 & \left|D(\hat{W}_{Y|X=x}\mid\mid Q_{Y})-D(W_{Y|X=x}\mid\mid Q_{Y})\right|\nonumber \\
 & \leq\left|H(\hat{W}_{Y|X=x})-H(W_{Y|X=x})\right|\nonumber \\
 & \hphantom{==}+\sum_{y\in{\cal Y}}\left|\hat{W}_{Y|X=x}(y)-W_{Y|X=x}(y)\right|\left|\log Q_{Y}(y)\right|.\label{eq: KL empirical vs population}
\end{align}
The first term in (\ref{eq: KL empirical vs population}) is an entropy
difference term, and can be bounded from the TV bound on the entropy
difference in \cite[Problem 3.10]{csiszar2011information}. In the following, we use for any $x,y$ the notation $x \vee y \define \max\{x, y\}$. Note the elementary inequality on the binary entropy
function
\begin{align}
h_{\text{b}}(t) & :=t\log\frac{1}{t}+(1-t)\log\frac{1}{(1-t)}\nonumber \\
 & \leq t\log\frac{1}{t}+t,\label{eq: elementary bound on binary entropy}
\end{align}
which holds for any $t\in[0,1]$. 
Using the bound in \cite{zhang2007estimating,audenaert2007sharp} (Equations (4) and (11), respectively)
implies that for any $\hat{W}_{Y|X},W_{Y|X}\in{\cal P}({\cal Y})$
\begin{align}
 & \left|H(\hat{W}_{Y|X})-H(W_{Y|X})\right|\nonumber \\
 & \begin{aligned}[t] \leq\frac{1}{2}&d_{\text{TV}}\left(\hat{W}_{Y|X},W_{Y|X}\right)\log(|{\cal Y}|-1) \\
 & +h_{\text{b}}\left(\frac{1}{2}d_{\text{TV}}\left(\hat{W}_{Y|X},W_{Y|X}\right)\right)\nonumber \end{aligned} \\
 & \overset{{\scriptstyle (a)}}{\leq}\frac{1}{2}d_{\text{TV}}\left(\hat{W}_{Y|X},W_{Y|X}\right)\log(|{\cal Y}|-1)\nonumber \\
 & \hphantom{==}+\frac{1}{2}d_{\text{TV}}\left(\hat{W}_{Y|X},W_{Y|X}\right)\log\frac{1}{\frac{1}{2}d_{\text{TV}}\left(\hat{W}_{Y|X},W_{Y|X}\right)}\nonumber \\
 & \hphantom{==}+\frac{1}{2}d_{\text{TV}}\left(\hat{W}_{Y|X},W_{Y|X}\right)\nonumber \\
 & \overset{{\scriptstyle (b)}}{\leq}\frac{\kappa}{2}\log(|{\cal Y}|-1)+\frac{\kappa}{2}\log\frac{2}{\kappa}+\frac{\kappa}{2}\nonumber \\
 & \extstart \overset{{\scriptstyle (c)}}{\leq}\frac{\kappa}{2}\log\left(\frac{2|{\cal Y}|}{\kappa}\right) + \frac{\kappa}{2}, \label{eq: empirical entropy error}\extend
\end{align}
\extstart
where $(a)$ follows from the bound in (\ref{eq: elementary bound on binary entropy}),
$(b)$ follows since $d_{\text{TV}}(\hat{W}_{Y|X},W_{Y|X})\leq\kappa$ defined in
(\ref{eq: upper bound on TV for conditional distribution}) and noticing that total variation is always bounded by 2.
\extend

\extstart
\begin{remark}
    A tighter bound could be found by considering the relation between local variation and total variation distance as established in \cite{sason2013entropy}. However, the benefits thereof are limited in the regime of interest for this paper. %
\end{remark}

\begin{remark}
    Alternatively to bound the entropy difference using a Fano-type inequality as above, similarly to \cite{shamir2010learning}, we could utilize McDiramid's inequality to bound the maximum change a single replacement in the measurements could carry to the entropy estimation. Taking into account the estimation bias, one can show that $\left|H(\hat{W}_{Y|X})-H({W}_{Y|X}) \right| \leq \frac{\kappa}{\sqrt{2}} \log(\nsamples/\card{\symtx}) + \frac{\kappa^2}{2}$ with probability at least $1-\alpha/\card{\symtx}$, which is a looser bound than the one in \eqref{eq: empirical entropy error}.
\end{remark}
\extend

Next, the second term in (\ref{eq: KL empirical vs population}) is
bounded for any $Q_{Y}\in{\cal Q}_{\eta}$ as 
\begin{align}
 & \sum_{y\in{\cal Y}}\left|\hat{W}_{Y|X=x}(y)-W_{Y|X=x}(y)\right|\left|\log Q_{Y}(y)\right|\nonumber \\
 & \leq d_{\text{TV}}\left(\hat{W}_{Y|X=x},W_{Y|X=x}\right)\log\frac{1}{\eta}\leq\kappa\log\frac{1}{\eta}.\label{eq: empirical expected log q error}
\end{align}
Hence, inserting (\ref{eq: empirical entropy error}) and (\ref{eq: empirical expected log q error})
to (\ref{eq: KL empirical vs population}) results 
\begin{align*}
 \Big|D(\hat{W}_{Y|X=x}\mid\mid Q_{Y})-&D(W_{Y|X=x}\mid\mid Q_{Y})\Big|\\
 & \leq\frac{\kappa}{2}\log\left(\frac{2 e|{\cal Y}|}{\eta^2\kappa}\right).
\end{align*}
Consequently, the pseudo-capacity estimation error is similarly bounded
as
\begin{align}
 & \left|\mathrm{C}_{\eta}(\hat{W})-\mathrm{C}_{\eta}(W)\right|\nonumber \\
 & \extstart = \left|\min_{Q_{Y}\in{\cal Q}_{\eta}}\max_{x\in{\cal X}} D({W}_{Y|X=x}\mid\mid Q_{Y}) - \min_{Q_{Y}\in{\cal Q}_{\eta}}\max_{x\in{\cal X}} D(\hat{W}_{Y|X=x}\mid\mid Q_{Y})\right| \extend \nonumber \\
 & \leq\extstart \max_{Q_{Y}\in{\cal Q}_{\eta}} \extend \max_{x\in{\cal X}}\left|D({W}_{Y|X=x}\mid\mid Q_{Y})-D(\hat{W}_{Y|X=x}\mid\mid Q_{Y})\right|\nonumber \\
 & \leq\frac{\kappa}{2}\log\left(\frac{2 e|{\cal Y}|}{\eta^2\kappa}\right).\label{eq: estimation error on pseudo-capacity}
\end{align}
\extstart Recalling that $\logfacinst \define \sqrt[5]{\frac{\card{\symrx} e^2}{\card{\symtx}\log\frac{|{\cal X}|}{\alpha}}}$, \extend by the triangle inequality 
\begin{align*}
 & \left|\mathrm{C}(\hat{W})-\mathrm{C}(W)\right|=\\
 & =\left|\mathrm{C}(\hat{W})-\mathrm{C}_{\eta}(\hat{W})+\mathrm{C}_{\eta}(\hat{W})-\mathrm{C}_{\eta}(W)+\mathrm{C}_{\eta}(W)-\mathrm{C}(W)\right|\\
 & \leq\left|\mathrm{C}(\hat{W})-\mathrm{C}_{\eta}(\hat{W})\right|+\left|\mathrm{C}_{\eta}(\hat{W})-\mathrm{C}_{\eta}(W)\right|+\left|\mathrm{C}_{\eta}(W)-\mathrm{C}(W)\right|\\
 & \extstart \overset{{\scriptstyle (a)}}{\leq} 2 \eta |{\cal Y}|+\frac{\kappa}{2}\log\left(\frac{2 e|{\cal Y}|}{\eta^2\kappa}\right) + 2 \eta |{\cal Y}|\\
 & \extstart \overset{{\scriptstyle (b)}}{\leq} \frac{5}{4}\kappa \log(\logfacinst \nsamples)+4 \frac{|{\cal Y}|}{\nsamples}\\
 & \extstart \overset{{\scriptstyle (c)}}{\leq} \frac{5}{4} \sqrt{\frac{4 |{\cal X}||{\cal Y}|\log\frac{|{\cal X}|}{\alpha}\log^{2}(\logfacinst \nsamples)}{\nsamples}} +\frac{4|{\cal X}||{\cal Y}|\log\frac{|{\cal X}|}{\alpha}}{\nsamples}, \extend
\end{align*}
where $(a)$ holds from (\ref{eq: estimation error on pseudo-capacity})
and utilization of Lemma \ref{lem: capacity bias} twice (for $V\equiv\hat{W}$
and then for $V=W$), $(b)$ follows by using the definition of $\kappa$
and setting $\eta=\frac{1}{\nsamples}$ resulting in 
\extstart \begin{align*}
&\frac{\kappa}{4}\log\left(\frac{4 e^2|{\cal Y}|^{2}}{\eta^{4}\kappa^{2}}\right) = \frac{\kappa}{4} \log\left(\frac{\nsamples \card{\symrx} e^2}{\eta^4|{\cal X}|\log\frac{|{\cal X}|}{\alpha}}\right) \\
&= \frac{\kappa}{4} \log\left(\frac{\nsamples^5 \card{\symrx} e^2}{|{\cal X}| \log\frac{|{\cal X}|}{\alpha}}\right) = \frac{5 \kappa}{4} \log\left(\nsamples \sqrt[5]{\frac{\card{\symrx} e^2}{|{\cal X}| \log\frac{|{\cal X}|}{\alpha}}}\right),
\end{align*}\extend and $(c)$ holds by replacing
the value of $\kappa$, \extstart and slightly relaxing the inequality using the fact that $\frac{4\card{\symrx}}{n} \leq \frac{\kappa^2}{2}$, which holds since $\card{\symtx} \geq 2$ and $\log(\card{\symtx}/\alpha) > 1$ by the choice of $\alpha$. \extend
\end{proof}

\subsection{Proof of \cref{thm:req_samples}} \label{app:proof_req_samples}
To prove \cref{thm:req_samples} and thereby bound the number of samples $\nsamples$ required to achieve a certain confidence for estimating the capacity associated with a channel, \extstart we need a refinement of Lemma 15 from \cite{weinberger2022multi}, which we state and prove in the following. \extend

\extstart
\begin{lemma}[{Refinement of \cite[Lemma 15]{weinberger2022multi}}] 
\label{lemma:sample_sufficiency}
    If $n \geq 15 \log^2 (\tmpconst /y)/y$, then it holds that $\frac{\log^2(\tmpconst n)}{n} \leq y$.
    \begin{proof}
        The proof is a refinement of \cite{weinberger2022multi} for the case where $r=2$. For $n \geq 1$, $\frac{\log^2(\tmpconst n)}{n} \leq y$ has a unique maximum of $\frac{4\tmpconst}{e^2}$ at $n = \frac{e^2}{\tmpconst}$. For $y \in [0, \frac{4\tmpconst}{e^2}]$, by using $n = 15 \log^2 (\tmpconst /y)/y$, we get
        \begin{align*}
            \frac{\log^2(\tmpconst n)}{n} &= y \cdot \frac{\log^2((\tmpconst /y) 15 \log^2 (\tmpconst /y))}{15 \log^2 (\tmpconst /y)} \\
            &\leq y \cdot \sup_{y \in [0, \frac{4\tmpconst}{e^2}]} \frac{\log^2((\tmpconst /y) 15 \log^2 (\tmpconst /y))}{15 \log^2 (\tmpconst /y)} \\
            &\!\!\!\!\!\! \overset{y^\prime \define y/\tmpconst}{\leq} y \cdot \sup_{y^\prime = \in [0, \frac{4}{e^2}]} \frac{\log^2((1 /y^\prime) 15 \log^2 (1/y^\prime))}{15 \log^2 (1 /y^\prime)} \\
            &\leq y,
        \end{align*}
        with the value of the supremum determined numerically. Note that this result can be generalized to arbitrary values of $r$ in $\frac{\log^r(\tmpconst n)}{n} \leq y$.
    \end{proof}
\end{lemma}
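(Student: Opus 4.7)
The plan is to exploit the structure of $f(n) := \log^2(\tmpconst n)/n$: it rises, attains a single maximum, and then decreases, so once $n$ lies past the maximum it suffices to check the inequality at the specific threshold $n_0 := 15 \log^2(\tmpconst/y)/y$. First I would differentiate, finding $f'(n)=0$ at $\log(\tmpconst n)=2$, i.e., $n^\star = e^2/\tmpconst$, with peak value $f(n^\star) = 4\tmpconst/e^2$. Hence for $y \geq 4\tmpconst/e^2$ the bound $f(n) \leq y$ is automatic for every $n$, and I may restrict attention to the regime $y \in (0, 4\tmpconst/e^2]$. In that regime $n_0 \geq n^\star$, and monotonicity of $f$ on $[n^\star,\infty)$ reduces the goal to verifying $f(n_0) \leq y$.

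Next I would introduce a dimensionless substitution to collapse the two parameters $\tmpconst, y$ into one. Writing $z := \tmpconst/y \geq e^2/4$, a direct computation gives $\tmpconst n_0 = 15\, z \log^2 z$ and
\[
f(n_0) \;=\; y \cdot g(z), \qquad g(z) \;:=\; \frac{\log^2\!\bigl(15\, z \log^2 z\bigr)}{15\, \log^2 z}.
\]
The statement thus reduces to the scale-free inequality $g(z) \leq 1$ for all $z \geq e^2/4$. Heuristically this is natural: $\log(15 z \log^2 z) = \log z + O(\log\log z)$, so $g(z) \to 1/15$ as $z \to \infty$, and the prefactor $15$ was chosen precisely to absorb the sub-leading $\log\log$ correction near the left endpoint.

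The main obstacle is turning this heuristic into a certified bound with the explicit constant $15$. One analytical path expands $\log(15 z \log^2 z) = \log z + \log 15 + 2 \log\log z$ and seeks the smallest $c \geq 1$ for which $\log z + \log 15 + 2\log\log z \leq c \log z$ holds on $z \geq e^2/4$, reducing the claim to $c^2 \leq 15$; the ratio $h(z) := (\log 15 + 2\log\log z)/\log z$ has a single critical point on the relevant half-line, so its maximum is attained either there or at $z = e^2/4$, and both can be evaluated in closed form. A cleaner alternative, which is what I would ultimately carry out, is to optimize $g$ directly: after the reduction to a single variable, $\max_{z \geq e^2/4} g(z)$ is a one-dimensional problem whose (finitely many) critical points can be located by setting $g'(z)=0$ and then checked numerically to confirm $g(z) \leq 1$.

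Finally, I expect the same scheme to extend to the generalized statement involving $\log^r(\tmpconst n)/n$ by repeating the argument with $15$ replaced by a sufficiently large $r$-dependent constant; the only change is that $\log\log$ corrections enter with an additional $r$-th power in $g$, so the numerical step has to be repeated once per value of $r$ but is otherwise structurally identical.
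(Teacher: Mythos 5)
Your proposal is correct and follows essentially the same route as the paper's proof: reduce to the regime $y\le 4\tmpconst/e^2$ via the peak value of $\log^2(\tmpconst n)/n$ at $n=e^2/\tmpconst$, evaluate at the threshold $n_0=15\log^2(\tmpconst/y)/y$, rescale to a single dimensionless variable, and certify the resulting one-dimensional supremum numerically. The only difference is cosmetic: you make explicit the monotonicity step ($n_0\ge e^2/\tmpconst$, so checking $n=n_0$ suffices for all $n\ge n_0$) and the triviality of the case $y>4\tmpconst/e^2$, both of which the paper leaves implicit.
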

\extend

\begin{proof}[Proof of \cref{thm:req_samples}]
To make sure the confidence bound holds, we require that each of the two summands in \cref{prop: Concentration for capacity} is lower or equal to $\varepsilon/2$. Recall that $\nfactor \define 4 \card{\symtx} \card{\symrx} \log(\card{\symtx}/\alpha)$. We first consider the left term, i.e., we require that
\extstart
\begin{align}
    \frac{5\sqrt{\nfactor}}{4} \frac{\log(\logfacinst \nsamples)}{\sqrt{\nsamples}} & \leq \frac{\epsci}{2} \nonumber \\
    &\!\!\!\!\!\!\!\!\!\!\!\!\text{(Equivalently)} \nonumber \\
    \frac{4 \epsci^2}{25 \nfactor} &\geq \frac{\log^2(\logfacinst n)}{n}. \!\! \label{eq:nonlin_samplebound}
\end{align}
Applying \cref{lemma:sample_sufficiency} to \eqref{eq:nonlin_samplebound} with $y = \frac{4 \epsci^2}{25 \nfactor}$ and $\tmpconst = \logfacinst$ leads to the first argument in \cref{thm:req_samples}, i.e., we require that $n \geq 
15 \cdot \frac{25 \nfactor}{4 \epsci^2}  \log^2 (\logfacinst \frac{25 \nfactor}{4 \epsci^2})$. \extend Imposing the same requirement on the right term, i.e., $\frac{\nfactor}{\nsamples} \leq \frac{\epsci}{2}$, yields the second argument in \cref{thm:req_samples}. Since both requirements have to hold, we take the maximum of the two results, which proves \cref{thm:req_samples}.
\end{proof}

\subsection{Proof of \cref{thm:correctness_complexity}} \label{app:proof_correctness}

The correctness of \cref{alg:bai} as stated in \cref{thm:correctness_complexity} is proved following the same lines as in \cite{karnin2013almost}. We provide it here  for completeness, while adapting it to our problem formulation. First, we need the following result that was proven in \cite{karnin2013almost} for a different type of concentration inequalities.
\begin{lemma} \label{lemma:not_eliminated}
    Assume a set of channels $\roundchannels$ that contains the best channel $\channelidx^\star$. Let now $\pacchannelround \in \roundchannels$ be an $(\epsround/2, \deltaround)$-best arm. Then, for estimations based on $\pullsround$ samples according to \eqref{eq:pullsround}, we have with probability at least $1-2\deltaround$ that
    \begin{align}
        \capacitychannelestr[\channelidx^\star] \geq \capacitychannelestr[\channelidx_\epsci] - \epsround. \label{eq:best_arm_stays}
    \end{align}
\end{lemma}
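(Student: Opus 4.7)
The plan is to reduce the inequality to two capacity-estimate concentration events, combined via a union bound, without needing to invoke the PAC guarantee on $\pacchannelround$ itself. The key observation is that the claim $\capacitychannelestr[\channelidx^\star] \geq \capacitychannelestr[\pacchannelround] - \epsround$ holds whenever both estimators $\capacitychannelestr[\channelidx^\star]$ and $\capacitychannelestr[\pacchannelround]$ lie within $\epsround/2$ of their true capacities, using only the fact that $\channelidx^\star$ maximizes $\capacitychannel$ over $[\nchannels]$ and hence over the subset $\roundchannels$.

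First, I would match the round-budget in~\eqref{eq:pullsround} to \cref{thm:req_samples} under the substitution $\epsci \leftarrow \epsround/2$ and $\alpha \leftarrow \deltaround$: plugging these values into the dominating term $\nsamples \geq \linfac \nfactor[\alpha]/\epsci^{2}\cdot \log^{2}(\logfacvar[\alpha] \nfactor[\alpha]/\epsci^{2})$ of \cref{thm:req_samples} reproduces exactly $4\linfac \nfactor[\deltaround]/\epsround^{2} \cdot \log^{2}(4\logfacvar[\deltaround] \nfactor[\deltaround]/\epsround^{2})$, which is the expression appearing in~\eqref{eq:pullsround}. Consequently, for any fixed channel $\channelidx \in \roundchannels$,
\[
\Pr\bigl[\,|\capacitychannelestr - \capacitychannel| \leq \epsround/2\,\bigr] \geq 1-\deltaround.
\]

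Next, I would instantiate this bound at $\channelidx = \channelidx^\star$ and at $\channelidx = \pacchannelround$ and apply a union bound: with probability at least $1-2\deltaround$, both $|\capacitychannelestr[\channelidx^\star] - \capacitychannel[\channelidx^\star]| \leq \epsround/2$ and $|\capacitychannelestr[\pacchannelround] - \capacitychannel[\pacchannelround]| \leq \epsround/2$ hold simultaneously. On this favorable event, since $\channelidx^\star$ is the global capacity maximizer and $\pacchannelround \in \roundchannels \subseteq [\nchannels]$, we have $\capacitychannel[\channelidx^\star] \geq \capacitychannel[\pacchannelround]$, and chaining the three inequalities yields
\[
\capacitychannelestr[\channelidx^\star] \geq \capacitychannel[\channelidx^\star] - \epsround/2 \geq \capacitychannel[\pacchannelround] - \epsround/2 \geq \capacitychannelestr[\pacchannelround] - \epsround,
\]
as claimed.

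The argument is essentially routine; the only subtle point is careful bookkeeping of the confidence budget. In particular, the $(\epsround/2, \deltaround)$-PAC property of $\pacchannelround$ is not used for this specific inequality — it will enter separately, in the round-wise analysis of \ourpolicy, to argue that the elimination rule discarding channels with $\capacitychannelestr < \capacitychannelestr[\pacchannelround] - \epsround$ never drops a genuinely $\epsround$-optimal competitor. The main ``obstacle'' is thus the algebraic identification of~\eqref{eq:pullsround} with the instantiated bound from \cref{thm:req_samples}, which guarantees that $\pullsround$ is large enough to support both concentration events at confidence $\deltaround$ each, so that only a factor of $2$ is lost in the union bound.
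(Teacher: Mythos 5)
Your proposal is correct and follows essentially the same route as the paper's own proof: both arguments use only the two concentration events from \cref{thm:req_samples} at accuracy $\epsround/2$ and confidence $\deltaround$ each (after checking that \eqref{eq:pullsround} is \cref{thm:req_samples} instantiated at $\epsci=\epsround/2$, $\alpha=\deltaround$), the trivial fact $\capacitychannel[\pacchannelround]\leq\capacitychannel[\channelidx^\star]$, and a union bound costing the factor $2$. Your side remark that the $(\epsround/2,\deltaround)$-PAC property of $\pacchannelround$ is not actually needed here is also consistent with the paper's proof, which likewise never invokes it.
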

\begin{proof}
    The proofs follows the same lines as in \cite{karnin2013almost}. We first observe that $\capacitychannel[\pacchannelround] \leq \capacitychannel[\channelidx^\star]$. Hence, from \cref{thm:req_samples} we have that $\capacitychannelestr[\pacchannelround] \leq \capacitychannel[\pacchannelround] + \epsround/2 \leq \capacitychannel[\channelidx^\star] + \epsci/2$ with probability at least $1-\deltaround$. Thus, $\Pr\inpara{\capacitychannelestr[\pacchannelround] \geq \capacitychannel[\channelidx^\star] + \epsround/2} \leq \deltaround$. From \cref{thm:req_samples}, we further have $\Pr\inpara{\capacitychannel[\channelidx^\star] \geq \capacitychannelestr[\channelidx^\star] + \epsround/2} \leq \delta$. Applying the union bound yields
    \begin{align*}
        \Pr\inpara{\capacitychannelestr[\pacchannelround] \geq \capacitychannelestr[\channelidx^\star] + \epsround/2} \leq 2\deltaround.
    \end{align*}
    Taking the complementary event concludes the proof.
\end{proof}

\begin{lemma}[{\cite[Lemma 3.4]{karnin2013almost}}] \label{lemma:decreasing_set}
    Consider \cref{alg:bai}, then $\Pr\inpara{\card{\channelpartround} > \frac{1}{8} \card{\channelpartround[\roundidx-1]}} \leq 24\deltaround$ for any $\roundidx \geq s \geq 1$.
\end{lemma}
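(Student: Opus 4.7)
The plan is to combine Markov's inequality with a per-channel union bound over concentration failures at round $\roundidx-1$. Conditioning on $\channelpartround[\roundidx-1]$ and applying Markov to the size of the surviving set gives
\begin{align*}
\Pr\bigl(|\channelpartround| > \tfrac{1}{8}|\channelpartround[\roundidx-1]| \mid \channelpartround[\roundidx-1]\bigr) \;\leq\; \frac{8}{|\channelpartround[\roundidx-1]|} \sum_{\channelidx\in\channelpartround[\roundidx-1]} \Pr\bigl(\channelidx \in \channelpartround \mid \channelpartround[\roundidx-1]\bigr),
\end{align*}
so it suffices to show that the per-channel survival probability is $\mathcal{O}(\deltaround)$ for every $\channelidx \in \channelpart\cap\roundchannels[\roundidx-1]$.

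To bound this survival probability for such a $\channelidx$, I will introduce three bad events at round $\roundidx-1$: (i) the PAC subroutine fails, i.e., $\suboptimalitygap[\pacchannelround[\roundidx-1]] > \epsround[\roundidx-1]/2$; (ii) the round-$(\roundidx-1)$ empirical capacity of $\pacchannelround[\roundidx-1]$ deviates below $\capacitychannel[\pacchannelround[\roundidx-1]]$ by more than $\epsround[\roundidx-1]/2$; and (iii) the round-$(\roundidx-1)$ empirical capacity of $\channelidx$ deviates above $\capacitychannel$ by more than $\epsround[\roundidx-1]/2$. By the $(\epsround[\roundidx-1]/2,\deltaround[\roundidx-1])$-PAC guarantee of the subroutine and by \cref{thm:req_samples} applied with the sample count $\pullsround[\roundidx-1]$ of \eqref{eq:pullsround}, each of (i)--(iii) holds with probability at most $\deltaround[\roundidx-1]$, so the union bound yields a total failure probability of at most $3\deltaround[\roundidx-1]$.

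Under the complement of the three bad events, I chain the estimates using $\capacitychannel = \capacitychannel[\channelidx^\star] - \suboptimalitygap$ and $\suboptimalitygap[\pacchannelround[\roundidx-1]] \leq \epsround[\roundidx-1]/2$ to show that the gap between the two empirical capacities at round $\roundidx-1$ is at least $\suboptimalitygap - \tfrac{3}{2}\epsround[\roundidx-1]$. Since $\suboptimalitygap \geq 2^{-s}$ for $\channelidx \in \channelpart$ while $\epsround[\roundidx-1] = 2^{-(\roundidx+1)}$ decays geometrically with $\roundidx$, for $\roundidx \geq s$ this gap exceeds $\epsround[\roundidx-1]$, which triggers the elimination rule of \cref{alg:bai} and rules out survival. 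Hence the survival event is contained in the union of (i)--(iii), giving $\Pr(\channelidx \in \channelpartround \mid \channelpartround[\roundidx-1]) \leq 3\deltaround[\roundidx-1]$; substituting into the Markov bound and absorbing the constant ratio $\deltaround[\roundidx-1]/\deltaround = (\roundidx/(\roundidx-1))^3 = \mathcal{O}(1)$ for $\roundidx \geq 2$ produces the claimed $24\deltaround$. The main technical step is the tight algebraic balancing of the three tolerances in (i)--(iii): the inequality $\suboptimalitygap > \tfrac{5}{2}\epsround[\roundidx-1]$ needed to guarantee deterministic elimination under good events becomes tightest at the boundary $\roundidx = s$, and it is precisely this constraint that fixes the constants $\epsround = 2^{-\roundidx}/4$ and the $(\epsround/2,\deltaround)$-PAC accuracy chosen in \cref{alg:bai}.
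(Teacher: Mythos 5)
Your overall skeleton is the right one --- and, for what it is worth, more informative than the paper's own proof, which simply asserts that the argument of \cite{karnin2013almost} holds verbatim because the subroutine is $(\epsround/2,\deltaround)$-PAC. The decomposition you propose (Markov's inequality on the conditional expected size of the surviving set, then a per-channel union bound over three bad events --- failure of the PAC subroutine and the two one-sided concentration failures from \cref{thm:req_samples} --- each of probability at most $\delta$, giving $8\times 3=24$) is precisely the imported argument, and the chaining $\hat{\mathrm{C}}^{n_r}(W_{j_r})-\hat{\mathrm{C}}^{n_r}(W_j)\ge \Delta_j-\tfrac32\varepsilon$ under the good events, hence elimination whenever $\Delta_j>\tfrac52\varepsilon$, is correct in form. (One hypothesis you use silently, as does the lemma as stated: the PAC guarantee $\Delta_{j_r}\le\varepsilon_r/2$ is measured against the best channel in the current candidate set, so the argument needs $\channelidx^\star\in\roundchannels$, exactly as assumed in \cref{lemma:not_eliminated}.)

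The step that fails is the round indexing, and it is not cosmetic. You attribute the elimination that produces $\roundchannels[\roundidx]$ from $\roundchannels[\roundidx-1]$ to round $\roundidx-1$, with tolerance $\varepsilon_{r-1}=2^{-(r-1)}/4=2^{-r-1}$ and confidence $\delta_{r-1}$. Then (i) at the boundary $r=s$ your elimination condition reads $\Delta_j>\tfrac52\varepsilon_{s-1}=\tfrac54\cdot 2^{-s}$, whereas membership in $\channelpart$ only guarantees $\Delta_j\ge 2^{-s}$, so deterministic elimination under the good events is \emph{not} implied --- the very boundary case you single out as the one that ``fixes the constants'' is the one where your inequality breaks; and (ii) your per-channel survival bound is $3\delta_{r-1}$, and $24\delta_{r-1}=24\,(r/(r-1))^3\,\delta_r$ is strictly \emph{larger} than the claimed $24\delta_r$ (by a factor of $8$ at $r=2$), so it cannot be ``absorbed'' into the stated constant. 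Both problems disappear once the indices are aligned as in \cite{karnin2013almost}: the elimination performed \emph{in} round $r$ (with $\varepsilon_r,\delta_r$) is the one compared against $\channelpart$, and then $\Delta_j\ge 2^{-s}\ge 2^{-r}=4\varepsilon_r>\tfrac52\varepsilon_r$ for all $r\ge s$, while Markov gives exactly $8\cdot 3\delta_r=24\delta_r$. Restate your argument for the pair $(\channelpartround[\roundidx+1],\channelpartround)$ --- equivalently, read $\roundchannels[\roundidx]$ in the lemma as the candidate set at the \emph{end} of round $\roundidx$ --- and the proof closes.
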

\begin{proof}
    The proof of \cite{karnin2013almost} holds verbatim, since our subroutine outputs an $(\epsround/2, \deltaround)$-best channel from either $\pacpolicy$ or \extstart \medianpac. \extend
\end{proof}

We are now ready to prove the correctness of \cref{alg:bai} stated in \cref{thm:correctness_complexity}.
\begin{proof}[Proof of \cref{thm:correctness_complexity}]
Consider \cref{alg:bai} with confidence level $1-\confidence$. 
We utilize \cref{thm:req_samples} with confidence level $\confci=\deltaround=\frac{\confidence}{50\roundidx^3}$ and a maximal deviation of $\epsci=\frac{\epsround}{2} = \frac{2^{-r}}{8}$, to obtain that 
\extstart
\begin{align*}
    \pullsround \geq \frac{4\linfac}{\epsround^2} \log^2\left(\frac{4 \logfac \nfactor[\deltaround]}{\epsround^2}\right).
\end{align*}
\extend
By that choice of $\pullsround$, we have by \cref{lemma:not_eliminated} that \cref{alg:bai} removes a best channel from the set $\roundchannels$ in any of the rounds with probability at most $2\deltaround$. From a union bound over all rounds $\roundidx$ with $\deltaround = \delta/(50\roundidx^2)$, the probability that the best channel gets eliminated in any of the rounds is at most $\sum_{\roundidx}^\infty 2\deltaround = \frac{2\pi^2 \confidence}{6 \cdot 50} \leq \confidence/5$. We apply \cref{lemma:decreasing_set} to prove a decreasing cardinality of $\roundchannels$ as $\roundidx$ grows. By union bound, the probability that the cardinality of $\channelpartround$ decreases by less than a factor of $8$ in all of the rounds $\roundidx>s$ is at most $\sum_{\roundidx=1}^\infty 24\deltaround = \frac{24 \pi^2 \confidence}{6\cdot 50 r^2} \leq 4\confidence/5$. Hence, the algorithm terminates and outputs $\channelidx^\star$ with probability at least $1-\confidence$, which concludes the proof.
\end{proof}

\subsection{Proof of \cref{lemma:tr}} \label{app:proof_lemma_tr}

For notational convenience, we denote the quantities $\tmpx \define 4 \card{\symtx} \card{\symrx}$, $\logfactor \define 50 \card{\symtx}$ and $\rootfactor \define 4 \sqrt[3]{\card{\symtx}} > \sqrt[3]{\logfactor}$. We then further denote $\logrcubed \define \log\inpara{\frac{\logfactor \roundidx^3}{\confidence}}$.
\extstart Throughout the analysis of the sample complexity, we use $\logfac \define \logfacvar[1] = \sqrt[5]{\frac{\card{\symrx}e^2}{\card{\symtx} \log \card{\symtx}}}$ as a constant upper bound for $\logfacvar$, i.e., $\logfac \leq \logfacvar$ for all $\alpha>0$.\extend
\begin{proof}[Proof of \cref{lemma:tr}]
We make use of the following inequality that holds for any $a,b \in \mathbb{R}$:
\begin{align}
    (a+b)^2\leq 2(a^2 + b^2). \label{eq:freshmans_dream}
\end{align}
\extstart
We first bound $\sum_{r=1}^{s-1} \pullsround$ as 
\begin{align}
    &\sum_{r=1}^{s-1} \pullsround = \sum_{r=1}^{s-1} \frac{4 \linfac \nfactor[\deltaround]}{\epsround^2} \log^2\left(\frac{4 \logfac \nfactor[\deltaround]}{\epsround^2}\right) \nonumber \\
    &\stackrel{(a)}{=} \sum_{r=1}^{s-1} 64 \linfac \tmpx \cdot 4^{r} \log\inpara{\frac{\logfactor\roundidx^3}{\confidence}} \log^2\left(64 \logfac \tmpx \cdot 4^{r} \log\inpara{\frac{\logfactor \roundidx^3}{\confidence}}\right) \nonumber \\
    &\stackrel{(b)}{\leq} \sum_{r=1}^{s-1} 64 \linfac \tmpx \cdot 4^{r} \logrcubed \cdot 2\left( r^2 \log^2(4) + \log^2\left(64 \logfac \tmpx \logrcubed\right) \right) \nonumber \\[-0.1cm]
    &\stackrel{(c)}{<} \sum_{r=1}^{s-1} 384 \linfac \tmpx \cdot 4^{r} \log\inpara{\frac{\rootfactor \roundidx}{\confidence}} \Bigg( r^2 \log^2(4) \nonumber \\[-0.3cm]
    &\hspace{3.5cm} + \log^2\left(192 \logfac \tmpx \log\inpara{\frac{\rootfactor \roundidx}{\confidence}} \right) \Bigg), \nonumber \\[-0.7cm] \nonumber
\end{align}
\extend
where $(a)$ is obtained by plugging the values of $\nfactor[\deltaround]$ and $\epsround$; $(b)$ is obtained by using \eqref{eq:freshmans_dream}, and $(c)$ follows from $\logrcubed = \log\inpara{\frac{\logfactor\roundidx^3}{\confidence}} \leq \log\inpara{\frac{\rootfactor \roundidx}{\confidence}}^3 = 3\log\inpara{\frac{\rootfactor \roundidx}{\confidence}}$ and yields \cref{lemma:tr}.
\end{proof}

\subsection{Proof of \cref{lemma:trs}} \label{app:proof_lemma_trs}

To prove the result of \cref{lemma:trs}, we need the definitions from \cref{app:proof_lemma_tr}, the results of \cref{lemma:not_eliminated,lemma:decreasing_set} and the following intermediate results.
\begin{claim} \label{prop:infsums}
    We have the following upper bounds
    \begin{align*}
        \sum_{r=0}^\infty \frac{r^2}{2^r} = 6 \text{, } \extstart \sum_{r=1}^\infty \frac{\log(r)}{2^r} \leq 1 \text{ and } \extend \sum_{r=1}^\infty \frac{r^2 \log(r)}{2^r} \leq 8.
    \end{align*}
    
\end{claim}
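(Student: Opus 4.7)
The plan is to handle each of the three sums using different but standard tools, with the third needing the most care.

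First, for $\sum_{r=0}^\infty r^2/2^r = 6$, I would use the textbook generating function identity $\sum_{r=0}^\infty r^2 x^r = x(1+x)/(1-x)^3$, valid for $|x|<1$, obtained by differentiating the geometric series twice, and evaluate at $x=1/2$ to obtain $(1/2)(3/2)/(1/2)^3 = 6$ exactly.

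Second, for $\sum_{r=1}^\infty \log(r)/2^r \leq 1$, I would invoke the elementary concavity-based inequality $\log(r) \leq r-1$ for $r\geq 1$ (equality at $r=1$), reducing the sum to $\sum_{r=1}^\infty (r-1)/2^r$. An index shift $s=r-1$ yields $\sum_{s=0}^\infty s/2^{s+1} = (1/2)\sum_{s=0}^\infty s/2^s = (1/2)\cdot 2 = 1$, where the auxiliary sum $\sum_{s=0}^\infty s/2^s=2$ comes from $\sum_{r=0}^\infty r x^r = x/(1-x)^2$ at $x=1/2$.

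Third, and the main obstacle, is proving $\sum_{r=1}^\infty r^2 \log(r)/2^r \leq 8$. A direct application of $\log r \leq r-1$ gives $\sum_{r=1}^\infty (r^3-r^2)/2^r = 26-6 = 20$, which is far too loose; the tangent-line bound $\log r \leq r/e$ (equality at $r=e$) yields $26/e \approx 9.56$, still above $8$. The numerical value of the sum is approximately $7.77$, so any workable argument must balance an exact evaluation on the range where $r^2\log r/2^r$ is largest (roughly $r=3,\ldots,7$) against a sharp tail estimate. My plan is to split the sum at a cutoff $R$ (e.g.\ $R=10$), compute $\sum_{r=2}^{R-1} r^2\log(r)/2^r$ term-by-term, and bound the tail by exploiting the fact that $r\mapsto \log(r)/r$ is decreasing for $r\geq e$: for $r\geq R$ we have $\log r \leq (\log R / R)\cdot r$, so that $\sum_{r\geq R} r^2 \log(r)/2^r \leq (\log R/R)\sum_{r\geq R} r^3/2^r$, and the latter tail can be evaluated from the closed form $\sum_{r=0}^\infty r^3 x^r = x(1+4x+x^2)/(1-x)^4$ evaluated at $x=1/2$ (giving $26$) minus the first $R-1$ terms. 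A cutoff near $R=10$ yields a finite part of about $7.19$ plus a tail bound of about $0.70$, comfortably below $8$. The key structural insight is that without splitting off the peak region explicitly, no single uniform inequality of the form $\log r \leq c\cdot r^{\alpha}$ is tight enough, so the two-part analysis is essentially forced.
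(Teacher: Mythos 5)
Your proposal is correct, and it is worth noting that the paper itself states this claim without any proof, so there is nothing to compare against on the paper's side; your argument fills that gap. All three steps check out: the generating-function identities $\sum_{r\ge 0} r^2 x^r = x(1+x)/(1-x)^3$ and $\sum_{r\ge 0} r^3 x^r = x(1+4x+x^2)/(1-x)^4$ give the exact values $6$ and $26$ at $x=1/2$; the bound $\log r \le r-1$ reduces the second sum to $\sum_{s\ge 0} s\,2^{-(s+1)} = 1$; and for the third sum your split at $R=10$ is sound, since $\log(r)/r$ is indeed decreasing for $r\ge e$ (its derivative is $(1-\log r)/r^2<0$), so $\sum_{r\ge 10} r^2\log(r)/2^r \le \frac{\log 10}{10}\bigl(26-\sum_{r=0}^{9} r^3/2^r\bigr)\approx 0.23\times 2.74\approx 0.63$, which together with the finite part $\approx 7.19$ stays below $8$. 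Your observation that no single uniform bound of the form $\log r\le \alpha r+\beta$ suffices is also accurate (the best tangent line to $\log$ yields $26/r_0+6\log r_0 - 6$, minimized at $r_0=13/3$ to about $8.8>8$), so the peak-plus-tail decomposition is genuinely needed. The only cosmetic caveat is that a fully rigorous write-up should replace the decimal evaluations of $\log 2,\ldots,\log 9$ by explicit rational upper bounds, but this is routine and does not affect correctness.
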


\begin{proposition} \label{prop:log_bound}
    Let $\roundidx, s$ be numbers in $\mathbb{N}^+$, $\rootfactor, \logfactor >0$ such that $\rootfactor > \sqrt[3]{\logfactor}$ and $0\leq \confidence\leq 1$, then
    \begin{align*}
    \logrcubed[\roundidx+s] = \log\inpara{\frac{\logfactor(\roundidx+s)^3}{\confidence}} &\leq 3 \log\left(\frac{\rootfactorsplit s}{\delta}\right) + 3 \log\left(\roundidx \right).
\end{align*}
\begin{proof} %
We bound the given quantity as
\begin{align*}
    \log\inpara{\frac{\logfactor(\roundidx+s)^3}{\confidence}} &\leq 3 \log\left(\frac{\rootfactor (\roundidx+s)}{\delta}\right) \nonumber   \leq 3 \log\left(\frac{2 \cdot \rootfactor \roundidx s}{\delta}\right),
\end{align*}
and split the terms to conclude the proof.    
\end{proof}
\end{proposition}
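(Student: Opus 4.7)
The plan is to chain three elementary inequalities, each of which rewrites one factor inside the logarithm. Throughout, monotonicity of $\log$ on $(0,\infty)$ is the only analytic tool needed, so the entire argument is a careful bookkeeping exercise.

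First I would replace the constant $\logfactor$ by $\rootfactor^3$, which is larger by the hypothesis $\rootfactor > \sqrt[3]{\logfactor}$. Simultaneously, since $0 \leq \delta \leq 1$, we have $\delta^3 \leq \delta$ and therefore $1/\delta \leq 1/\delta^3$. Multiplying these two bounds with the positive factor $(\roundidx+s)^3$ yields
\begin{align*}
    \frac{\logfactor(\roundidx+s)^3}{\delta} \leq \frac{\rootfactor^3 (\roundidx+s)^3}{\delta^3},
\end{align*}
so that, after taking logarithms and pulling the exponent $3$ outside,
\begin{align*}
    \log\!\left(\frac{\logfactor(\roundidx+s)^3}{\delta}\right) \leq 3\log\!\left(\frac{\rootfactor(\roundidx+s)}{\delta}\right).
\end{align*}

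Next I would replace $\roundidx+s$ by $2\roundidx s$ inside the logarithm. This uses the elementary inequality $\roundidx + s \leq 2 \roundidx s$, valid for all integers $\roundidx, s \geq 1$ because $2\roundidx s - \roundidx - s = \roundidx(s-1) + s(\roundidx - 1) \geq 0$. Monotonicity of $\log$ then gives
\begin{align*}
    3\log\!\left(\frac{\rootfactor(\roundidx+s)}{\delta}\right) \leq 3\log\!\left(\frac{2\rootfactor \roundidx s}{\delta}\right).
\end{align*}
Finally, splitting the logarithm of the product into a sum of logarithms, $\log(2\rootfactor \roundidx s/\delta) = \log(2\rootfactor s/\delta) + \log(\roundidx)$, and recalling the shorthand $\rootfactorsplit = 2\rootfactor$, concludes the proof.

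There is no serious obstacle. The only point to watch is the direction of the inequality $1/\delta \leq 1/\delta^3$, which relies critically on $\delta \in [0,1]$; if $\delta$ could exceed one the bound would fail. Similarly, the bound $\roundidx + s \leq 2\roundidx s$ fails if either index is zero, so the restriction to $\mathbb{N}^+$ is also essential.
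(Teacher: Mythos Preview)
Your proof is correct and follows essentially the same two-step approach as the paper: first bound $\log(\zeta(r+s)^3/\delta)$ by $3\log(\gamma(r+s)/\delta)$ using $\gamma^3>\zeta$ and $\delta\le 1$, then use $r+s\le 2rs$ for $r,s\ge 1$ and split the logarithm. You have simply made explicit the justifications (the role of $\delta\le 1$ and the verification of $r+s\le 2rs$) that the paper leaves implicit.
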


We are now ready to prove \cref{lemma:trs}.
\begin{proof}[Proof of \cref{lemma:trs}]
To bound $\sum_{r=0}^\infty (\frac{1}{8})^{r+1} \pullsround[\roundidx+s]$, we first split the sum into $\frac{1}{8} \pullsround[s]$ and $\sum_{r=1}^\infty (\frac{1}{8})^{r+1} \pullsround[\roundidx+s]$. We continue to bound the latter and will later bound the former.
\extstart
\begin{align}
    \sum_{r=1}^\infty \left(\frac{1}{8}\right)^{r+1} \pullsround[\roundidx+s] &= \sum_{r=1}^\infty \inpara{\frac{1}{8}}^{r+1} \frac{4 \linfac \tmpx \log(\card{\symtx}/\delta_{\roundidx+s})}{\varepsilon_{\roundidx+s}^2} \log^2\left(\frac{4 \logfac \tmpx \log(\card{\symtx}/\delta_{\roundidx+s})}{\varepsilon_{\roundidx+s}^2}\right) \nonumber \\
    &\stackrel{(a)}{=} \sum_{r=1}^\infty \inpara{\frac{1}{8}}^{r+1} 64 \linfac \tmpx \cdot 4^{r+s} \logrcubed[\roundidx+s] \log^2\left(4^{\roundidx+s} \cdot 64 \logfac \tmpx \logrcubed[\roundidx+s]\right) \nonumber \\
    &\stackrel{(b)}{\leq} \sum_{r=1}^\infty \frac{64 \linfac \tmpx 2^{-r} 4^s}{8} \logrcubed[\roundidx+s] 2 \left( \log^2\!\left(4^{r+s} \right) \! + \! \log^2\left(64 \logfac \tmpx \logrcubed[\roundidx+s]\right) \right) \nonumber \\
    &= \frac{4^s \cdot 64 \linfac \tmpx}{4} \sum_{r=1}^\infty  2^{-r} \logrcubed[\roundidx+s] \Big( \underbrace{(r+s)^2 \log^2\left(4 \right)}_{\text{term 1}}
    +\underbrace{\log^2\left(64 \logfac \tmpx \logrcubed[\roundidx+s]\right)}_{\text{term 2}} \Big), \label{intermediate_second_complexity}
\end{align}
where $(a)$ holds because $\frac{4^\roundidx}{8^{\roundidx+1}} = \frac{2^{2\roundidx}}{2^{3{\roundidx}} \cdot 8} = \frac{1}{8 \cdot 2^\roundidx}$\extend, and $(b)$ follows from \eqref{eq:freshmans_dream}.
\extstart Next, for ease of notation, we define $\tmpcrs \define 96 \linfac \tmpx$. \extend Then, term $1$ of \eqref{intermediate_second_complexity} is bounded using \cref{prop:log_bound} and \eqref{eq:freshmans_dream} as
\extstart
\begin{align}
    &\frac{4^s}{4} \cdot 64 \linfac \tmpx \sum_{r=1}^\infty  2^{-r} \cdot  \logrcubed[\roundidx+s] \left( (r+s)^2 \log^2\left(4 \right) \right) \nonumber \\
    &\leq \tmpcrs 4^s \left( \log\left(\frac{\rootfactorsplit s}{\delta}\right) \sum_{r=1}^\infty  \frac{(r+s)^2}{2^{r}} + \sum_{r=1}^\infty  \frac{\log\left(r\right) (r+s)^2}{2^{r}} \right) \nonumber \\
    &\stackrel{(a)}{\leq} 2 \tmpcrs 4^s \left( \log\left(\frac{\rootfactorsplit s}{\delta}\right) \sum_{r=1}^\infty  \left(\frac{r^2}{2^{r}} + \frac{s^2}{2^{r}}\right)\right. \left. \sum_{r=1}^\infty \left( \frac{\log\left(r\right) r^2}{2^{r}} + \frac{\log\left(r\right) s^2}{2^{r}}\right) \right)  \nonumber \\
    &\stackrel{(b)}{=} 2 \tmpcrs 4^s \left( \log\left(\frac{\rootfactorsplit s}{\delta}\right) \left(6 + s^2\right) + \left( 8 + s^2\right) \right) \nonumber \\
    &= \mathcal{O} \left(s^2 4^s \log\left(\frac{s}{\delta}\right)\right), \label{eq:first_term}
\end{align}
\extend
where $(a)$ follows from \eqref{eq:freshmans_dream} and $(b)$ follows using that $\sum_{x=r}^{\infty} \frac{1}{2^r} = 1$ and by applying the results of \cref{prop:infsums}.

To bound term $2$ in \eqref{intermediate_second_complexity}, we first establish two bounds. To this end, we will need the following intermediate result.

\begin{proposition} \label{prop:bound_logsq_rs} Let $c > 0$ be a constant and $\roundidx, s \in \mathbb{N}^+$, then we have for some $\rootfactor > \sqrt[3]{\logfactor}$ the following inequality:
    \begin{align*}
        \log^2\inpara{c \logrcubed[\roundidx+s]} &= \log^2 \left(c \log\inpara{\frac{\logfactor (\roundidx+s)^3}{\confidence}}\right) \\
        &< 2 \log^2 \left(3c \log\left(\frac{\rootfactorsplit s}{\delta}\right)\right) + 2 \roundidx^2
    \end{align*}
\end{proposition}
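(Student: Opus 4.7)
The plan is to reduce the bound to \cref{prop:log_bound} and then split the outer logarithm carefully before squaring, so that one resulting piece matches exactly $\log^2(3c\log(\rootfactorsplit s/\delta))$ while the other is absorbed into $2\roundidx^2$.

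First, I would apply \cref{prop:log_bound} to write $c\logrcubed[\roundidx+s]\leq X+Y$, where $X:=3c\log(\rootfactorsplit s/\delta)$ and $Y:=3c\log(\roundidx)$. Monotonicity of $\log$ then reduces the task to bounding $\log^2(X+Y)$. I would decouple $s$ from $\roundidx$ by writing $\log(X+Y)=\log X+\log(1+Y/X)$ and using $\log(1+t)\leq t$, which yields $\log(X+Y)\leq \log X+Y/X$. Squaring via \eqref{eq:freshmans_dream} gives $\log^2(X+Y)\leq 2\log^2 X+2(Y/X)^2$, and the first summand already matches the statement exactly.

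It then remains to show $(Y/X)^2\leq \roundidx^2$, i.e., $\log\roundidx\leq\roundidx\log(\rootfactorsplit s/\delta)$. This follows from $\log\roundidx\leq\roundidx$ for $\roundidx\geq 1$ together with $\log(\rootfactorsplit s/\delta)\geq 1$, where the latter uses $\rootfactor>\sqrt[3]{\logfactor}$ and $\logfactor=50\card{\symtx}\geq 100$ (so that $\rootfactorsplit>e$), combined with $s\geq 1$ and $0\leq\delta\leq 1$. Strictness of the final inequality is inherited from Freshman's dream (since $\log X \neq Y/X$ generically, and the $\roundidx=1$ edge case $Y=0$ makes the bound $\log^2 X < 2\log^2 X + 2$ trivially strict).

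The only delicate point, though not really an obstacle, is the choice of decomposition before squaring: a cruder bound such as $\log(X+Y)\leq \log 2+\log X+\log Y$ (valid when $X,Y\geq 1$) would produce a residual $\log^2 Y$ scaling as $\log^2\log\roundidx$ with $c$-dependent constants, which does not fit cleanly into the $2\roundidx^2$ term of the target. Using $\log X+Y/X$ instead is what keeps the residual as the pure ratio $Y/X=\log\roundidx/\log(\rootfactorsplit s/\delta)$ and makes it directly comparable with $\roundidx$.
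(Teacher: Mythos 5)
Your proof is correct and rests on the same two ingredients as the paper's own argument: \cref{prop:log_bound} to split off the $s$-dependent part, and \eqref{eq:freshmans_dream} to separate the square. The only real difference is the middle step. With your $X=3c\log(\rootfactorsplit s/\confidence)$ and $Y=3c\log(\roundidx)$, the paper absorbs the additive remainder \emph{multiplicatively}, bounding $X+Y\le X(1+\log \roundidx)\le X\roundidx$ (which uses $\log(\rootfactorsplit s/\confidence)\ge 1$) and then splitting $\log^2(X\roundidx)\le 2\log^2 X+2\log^2\roundidx\le 2\log^2 X+2(\roundidx-1)^2<2\log^2 X+2\roundidx^2$; you instead expand $\log(X+Y)=\log X+\log(1+Y/X)\le \log X+Y/X$ and bound the residual by $(Y/X)^2$. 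The two routes are nearly the same inequality in disguise ($1+\log\roundidx\le\roundidx$ is exactly $\log(1+t)\le t$ at $t=\roundidx-1$), but your residual $(\log\roundidx/\log(\rootfactorsplit s/\confidence))^2$ is marginally tighter, and your observation about why the cruder split $\log(X+Y)\le\log 2+\log X+\log Y$ would not work is apt. For strictness, you can replace the ``generic'' remark by the chain $(Y/X)^2\le\log^2\roundidx\le(\roundidx-1)^2<\roundidx^2$, valid for all $\roundidx\ge1$, which is exactly how the paper closes. One caveat you share with the paper: passing from $c\,\logrcubed[\roundidx+s]\le X+Y$ to $\log^2(c\,\logrcubed[\roundidx+s])\le\log^2(X+Y)$, and likewise squaring $\log(X+Y)\le\log X+Y/X$, silently requires the arguments to be at least one, since $u\mapsto\log^2 u$ is decreasing on $(0,1)$; ``monotonicity of $\log$'' alone does not justify the squared comparison. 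This is harmless where the proposition is invoked, because there $c=64\logfac\tmpx$ is a large constant, but as stated for arbitrary $c>0$ both proofs implicitly assume it.
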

\begin{proof} %
We use \cref{prop:log_bound} and the fact that $\log(a+b) = \log(a \cdot (1+b/a))$ and $\log(a)\leq a-1$. Hence, we obtain
\begin{align}
    &\log^2 \left(c \log\inpara{\frac{\logfactor (\roundidx+s)^3}{\confidence}}\right) \leq \log^2 \left(3c \log\left(\frac{\rootfactorsplit rs}{\delta}\right)\right) \nonumber \\
    &= \log^2 \left(3c \left( \log\left(\frac{\rootfactorsplit s}{\delta}\right) + \log\left(r\right)\right)\right) \nonumber \\
    &= \log^2 \left(3c \log\left(\frac{\rootfactorsplit s}{\delta}\right) \cdot \left(1+\log\left(\frac{r\delta}{\rootfactorsplit s}\right)\right)\right) \nonumber \\
    &\leq \log^2 \left(3c \log\left(\frac{\rootfactorsplit s}{\delta}\right) \cdot \left(1+\log\left(r\right)\right)\right) \nonumber \\
    &\leq \log^2 \left(3c \log\left(\frac{\rootfactorsplit s}{\delta}\right) \cdot r\right) \nonumber \\
    &\leq 2 \log^2 \left(3c \log\left(\frac{\rootfactorsplit s}{\delta}\right)\right) + 2 \log^2\left( r\right) \nonumber \\
    &< 2 \log^2 \left(3c \log\left(\frac{\rootfactorsplit s}{\delta}\right)\right) + 2 (r-1)^2,
\end{align}
and observing that $\roundidx-1 < \roundidx$ and $r\geq1$ concludes the proof. 
\end{proof}

We apply \cref{prop:bound_logsq_rs} with $c=64 \logfac \tmpx$ together with results form \cref{prop:infsums} to derive the relation 
\begin{align}
    &\log\left(\frac{\rootfactorsplit s}{\delta}\right) \sum_{r=1}^\infty  \frac{1}{2^{r}} \log^2\inpara{c \log\inpara{\frac{\logfactor (r+s)^3}{\delta}}} \nonumber \\
    &\leq \log\left(\frac{\rootfactorsplit s}{\delta}\right) \sum_{r=1}^\infty  \frac{1}{2^{r}} \left(2 \log^2 \left(3c \log\left(\frac{\rootfactorsplit s}{\delta}\right)\right) + 2 r^2\right) \nonumber \\
    &= \log\left(\frac{\rootfactorsplit s}{\delta}\right) \left(2 \log^2 \left(3c \log\left(\frac{\rootfactorsplit s}{\delta}\right)\right) \sum_{r=1}^\infty  \frac{1}{2^{r}}  + 2 \sum_{r=1}^\infty  \frac{r^2}{2^{r}} \right) \nonumber \\
    &\leq \log\left(\frac{\rootfactorsplit s}{\delta}\right) \left(2 \log^2 \left(3c \log\left(\frac{\rootfactorsplit s}{\delta}\right)\right) + 12 \right). \label{eq:bound_first_sum}
\end{align}

Again using the result from \cref{prop:bound_logsq_rs} with $c=64 \logfac \tmpx$ together with results from \cref{prop:infsums}, we have that
\begin{align}
    &\sum_{r=1}^\infty  \frac{\log(r)}{2^{r}} \log^2\left(c \log\inpara{\frac{\logfactor (r+s)^3}{\delta}}\right) \nonumber \\
    &\leq \sum_{r=1}^\infty  \frac{\log(r)}{2^{r}} \left(2 \log^2 \left(3c \log\left(\frac{\rootfactorsplit s}{\delta}\right)\right) + 2 r^2\right) \nonumber \\
    &= 2 \log^2 \left(3c \log\left(\frac{\rootfactorsplit s}{\delta}\right)\right) \sum_{r=1}^\infty  \frac{\log(r)}{2^{r}}  + 2 \sum_{r=1}^\infty  \frac{r^2 \log(r)}{2^{r}} \nonumber \\
    &\leq 2 \log^2 \left(3c \log\left(\frac{\rootfactorsplit s}{\delta}\right)\right) + 16. \label{eq:bound_second_sum}
\end{align}

With $c=64 \logfac \tmpx$ and using the results from \eqref{eq:bound_first_sum}, \eqref{eq:bound_second_sum} and \cref{prop:log_bound}, we obtain for the second term in \eqref{intermediate_second_complexity} that
\extstart
\begin{align}
    &\frac{4^s \cdot 64\linfac \tmpx}{4} \sum_{r=1}^\infty  2^{-r} \cdot  \logrcubed[\roundidx+s] \log^2\left(64 \logfac \tmpx \log\inpara{\frac{\logfactor (r+s)^3}{\delta}}\right) \nonumber \\
    &= \frac{4^s 192 \linfac \tmpx}{4} \left(\! \log\left(\!\frac{\rootfactorsplit s}{\delta}\!\right) \sum_{r=1}^\infty  \frac{1}{2^{r}} \log^2\left(\! 64 \logfac \tmpx \log\inpara{\frac{\logfactor (r+s)^3}{\delta}}\!\right) \right. \nonumber \\
    &\left.+ \sum_{r=1}^\infty  \frac{\log\left(r\right)}{2^{r}} \cdot \log^2\left(64 \logfac \tmpx \log\inpara{\frac{\logfactor (r+s)^3}{\delta}}\right) \right) \nonumber \\
    &\leq \begin{aligned}[t] 4^s \cdot 48 \tmpx \cdot &\left( \log\left(\frac{\rootfactorsplit s}{\delta}\right) \left(2 \log^2 \left(3c \log\left(\frac{\rootfactorsplit s}{\delta}\right)\right) + 12 \right) \right. \nonumber \\
    &\left.+ 2 \log^2 \left(3c \log\left(\frac{\rootfactorsplit s}{\delta}\right)\right) + 16 \right) \end{aligned} \\
    &= \mathcal{O} \left( 4^s \log\left(\frac{s}{\delta}\right)  \log^2\log\left(\frac{s}{\delta}\right) \right). \label{eq:second_term}
\end{align}
\extend
To conclude the proof, we bound $\frac{1}{8} \pullsround[s] \leq \pullsround[s]$ as 
\extstart
\begin{align}
    &\pullsround[s] = 64 \linfac \tmpx \cdot 4^{s} \log\inpara{\frac{\logfactor s^3}{\confidence}} \log^2\left(64 \logfac \tmpx \cdot 4^{r} \log\inpara{\frac{\logfactor s^3}{\confidence}}\right) \nonumber \\
    &\leq 64 \linfac \tmpx \cdot 4^{s} \logrcubed[s] \cdot 2\left( s^2 \log^2(4) + \log^2\left(64 \logfac \tmpx \logrcubed[s] \right) \right) \nonumber \\[-0.1cm]
    &< 384 \tmpx \cdot 4^{s} \log\inpara{\frac{\rootfactor s}{\confidence}} \Bigg( s^2 \log^2(4) \nonumber \\[-0.3cm]
    &\hspace{3.5cm} + \log^2\left(192 \logfac \tmpx \log\inpara{\frac{\rootfactor s}{\confidence}} \right) \Bigg) \nonumber \\[-0.1cm]
    &= \mathcal{O} \left( s^2 4^s \log\left(\frac{s}{\delta}\right) + 4^s \log\left(\frac{s}{\delta}\right) \log^2\log\left(\frac{s}{\delta}\right) \right). \label{eq:r0}
\end{align}
\extend
Putting \eqref{eq:first_term}, \eqref{eq:second_term} and \eqref{eq:r0} together, we obtain the result stated in \cref{lemma:trs}. %
\end{proof}

\subsection{Proof of \cref{lemma:trinf}} \label{app:proof_lemma_trinf}

\begin{proof}
    
We choose $\pullsround$ to fulfill \eqref{eq:req_samples} with equality. Given that, we bound $\sum_{r=1}^{\infty} \inpara{\frac{1}{8}}^{r-1} \pullsround$ from above as 
\extstart
\begin{align}
    &\sum_{r=1}^{\infty} \inpara{\frac{1}{8}}^{r-1} \!\!\! \pullsround = \sum_{r=1}^{\infty} \inpara{\frac{1}{8}}^{r-1} \frac{4 \linfac \nfactor[\deltaround]}{\epsround^2} \log^2\left(\frac{4 \logfac \nfactor[\deltaround]}{\epsround^2}\right) \nonumber \\
    &\stackrel{(a)}{=} \sum_{r=1}^{\infty} \inpara{\frac{1}{8}}^{r-1} \!\!\!\!\! 64 \linfac \tmpx \cdot 4^{r} \log\inpara{\frac{\logfactor\roundidx^3}{\confidence}} \log^2\left(64 \logfac \tmpx \cdot 4^{r} \log\inpara{\frac{\logfactor \roundidx^3}{\confidence}}\right) \nonumber \\
    &\stackrel{(b)}{\leq} \sum_{r=1}^{\infty} \frac{8}{2^\roundidx} \cdot 64 \linfac \tmpx \logrcubed \cdot 2\left( r^2 \log^2(4) + \log^2\left(64 \logfac \tmpx \logrcubed\right) \right) \nonumber \\[-0.1cm]
    &\stackrel{(c)}{<} \sum_{r=1}^{\infty} \frac{8}{2^\roundidx} \cdot 384 \linfac \tmpx  \log\inpara{\frac{\rootfactor \roundidx}{\confidence}} \Bigg( r^2 \log^2(4) \nonumber \\[-0.3cm]
    &\hspace{3.5cm} + \log^2\left(192 \logfac \tmpx \log\inpara{\frac{\rootfactor \roundidx}{\confidence}} \right) \Bigg), \nonumber \\[-0.7cm] \nonumber
    \\
    &\stackrel{(d)}{<} \sum_{r=1}^{\infty} \frac{8}{2^\roundidx} \cdot 384 \linfac \tmpx  \log\inpara{\frac{\rootfactor \roundidx}{\confidence}} \Bigg( r^2 \log^2(4) \nonumber \\[-0.3cm]
    &\hspace{3.5cm} + 2 \log^2 \left(192 \logfac \tmpx \log\left(\frac{\rootfactor}{\delta}\right)\right) + 2 \roundidx^2 \Bigg) \nonumber \\[-0.7cm] \nonumber
    \\
    &\stackrel{(e)}{=} \mathcal{O} \left( \log\left(\frac{1}{\delta}\right)  \log^2\log\left(\frac{1}{\delta}\right)\right), \nonumber
\end{align}
\extend
where $(a)$ is obtained by plugging the values of $\nfactor[\deltaround]$ and $\epsround$, $(b)$ is obtained by using that \extstart $\frac{4^\roundidx}{8^{\roundidx-1}} = \frac{8 \cdot 2^{2\roundidx}}{2^{3{\roundidx}}} = \frac{8}{2^\roundidx}$ \extend and \eqref{eq:freshmans_dream}, and $(c)$ follows from $\logrcubed = \log\inpara{\frac{\logfactor\roundidx^3}{\confidence}} \leq \log\inpara{\frac{\rootfactor \roundidx}{\confidence}}^3 = 3\log\inpara{\frac{\rootfactor \roundidx}{\confidence}}$, $(d)$ is obtained by the same arguments as the proof of \cref{prop:bound_logsq_rs}, and $(e)$ follows from \cref{prop:infsums}. This concludes the proof.
\end{proof}

\extstart

\subsection{Proof of \cref{thm:correctness-complexity-pac-median}}
\label{sec:app_thm3}
\begin{proof}%

The proof of correctness follows similar lines to the proof in \cite{evendar2006action}, and holds since our sampling strategy guarantees the required target probabilities and worst-case estimation errors. This suffices to assure the correctness of the algorithm. 
To prove the sample complexity stated in \cref{thm:correctness-complexity-pac-median}, we require the simplification of the following function $\logsq$.
\begin{align*}
    &\logsq = \log^2\left(\frac{4 \logfac \nfactor[\deltaround/3]}{\epsround^2}\right) = \log^2\left(\frac{4 \logfac  \nfactor[\delta/3/2^\roundidx]}{\left((3/4)^{\roundidx-1} \epsci/4\right)^2}\right) \\
    &= \left(\log\left( (16/9)^{\roundidx-1}\right) + \log\left( 16 \cdot 4 \logfac \nfactor[\delta/3/2^\roundidx]/\epsci^2 \right)\right)^2 \\
    &\leq 2 \left( (\roundidx-1)^2 \log^2\left(16/9\right) + \log^2\left( 16 \cdot 4 \logfac \nfactor[\delta/3/2^\roundidx] / \epsci^2 \right) \right) \\
    &\leq C_5 (\roundidx-1)^2 + 2 \log^2\left( \frac{256 \logfac \card{\symtx} \card{\symrx}}{\epsci^2} \log(2^\roundidx 3\card{\symtx}/\delta) \right) \\
    &\leq C_5 (\roundidx-1)^2 + 2\log^2\left( \frac{C_3}{\epsci^2} \left(\log(2^\roundidx) + \log(3\card{\symtx}) + \log\left(\frac{1}{\delta}\right) \right) \right) \\
    &\leq C_5 (\roundidx-1)^2 + 2 \log^2\left( \frac{C_3}{\epsci^2} \log\left(\frac{1}{\delta}\right) \left(\roundidx C_1^\prime + C_1 \right) \right) \\
    &\leq C_5 (\roundidx-1)^2 + 4\log^2\left( \frac{C_3}{\epsci^2} \log\left(\frac{1}{\delta}\right) \right) + 4\log^2\left( \roundidx C_1^\prime + C_1 \right) \\
    &\leq C_2 (\roundidx-1)^2 + 4\log^2\left( \frac{C_3}{\epsci^2} \log\left(\frac{1}{\delta}\right) \right) + 8\log^2\left( \roundidx C_1^\prime\right) + C_4 \\
    &\leq C_2 (\roundidx-1)^2 + 4\log^2\left( \frac{C_3}{\epsci^2} \log\left(\frac{1}{\delta}\right) \right) + 16 ( \roundidx-1 )^2 + C_6 \\ %
    &\leq C_5 (\roundidx-1)^2 + 4\log^2\left( \frac{C_3}{\epsci^2} \log\left(\frac{1}{\delta}\right) \right) + C_6,
\end{align*}
where $C_2 = 2\log^2\left(16/9\right)$, $C_3 = 256 \logfac \card{\symtx} \card{\symrx}$, $C_4 = 8\log^2(C_1)$, $C_5 = 16 + C_2$, $C_6 = 16 \log^2(C_1^\prime) + C_4$.
Now assuming that $\epsci \leq 4$ such that $\forall \roundidx: \epsround \leq 1$, we have
\begin{align*}
    &\sum_{\roundidx=1}^{\log_2(\nchannels)} \frac{k}{2^{\roundidx-1}} \frac{4 \linfac \nfactor[\deltaround/3]}{\epsround^2} \cdot \logsq \\
    &= \sum_{\roundidx=1}^{\log_2(\nchannels)} \frac{k}{2^{\roundidx-1}} \frac{4 \linfac \nfactor[\delta/3/2^\roundidx]}{\left((3/4)^{\roundidx-1} \epsci/4\right)^2} \cdot  \logsq \\
    &= \frac{64 \linfac \nchannels}{\epsci^2} \sum_{\roundidx=1}^{\log_2(\nchannels)} \left(\frac{8}{9}\right)^{\roundidx-1} \nfactor[\delta/3/2^\roundidx] \cdot  \logsq \\
    &= \frac{64 \linfac \nchannels}{\epsci^2} \sum_{\roundidx=1}^{\log_2(\nchannels)} \left(\frac{8}{9}\right)^{\roundidx-1} 4 \card{\symtx} \card{\symrx} \log(2^\roundidx3\card{\symtx}/\delta) \cdot \logsq \\
    &= \frac{C \cdot \nchannels}{\epsci^2} \sum_{\roundidx=1}^{\log_2(\nchannels)} \left(\frac{8}{9}\right)^{\roundidx-1} \left(\roundidx \log(2) + \log(3\card{\symtx}) + \log\left(\frac{1}{\delta}\right)\right) \logsq \\
    &\leq \begin{aligned}[t] \frac{C \cdot \nchannels}{\epsci^2} \sum_{\roundidx=1}^{\log_2(\nchannels)} &\left(\frac{8}{9}\right)^{\roundidx-1} \left(\roundidx C_1^\prime + C_1 + \log\left(\frac{1}{\delta}\right)\right) \cdot \\
    &\left(C_5 (\roundidx-1)^2 + 4\log^2\left( \frac{C_3}{\epsci^2} \log\left(\frac{1}{\delta}\right) \right) + C_6\right) \end{aligned} \\
    &= \mathcal{O}\left( \frac{\nchannels}{\epsci^2} \log\left(\frac{1}{\delta}\right) \sum_{\roundidx=1}^{\infty} \left(\frac{8}{9}\right)^{\roundidx-1} \left( \roundidx^3 + r \log^2\left( \frac{1}{\epsci^2} \log\left(\frac{1}{\delta}\right) \right)  \right) \right) \\
    &= \mathcal{O}\left( \frac{\nchannels}{\epsci^2} \log\left(\frac{1}{\delta}\right) \left( \log^2\left( \frac{1}{\epsci^2}\log\left(\frac{1}{\delta}\right) \right) \right) \right),
\end{align*}
where $C=256 \linfac \card{\symtx} \card{\symrx}$, $C_1 = \log(3\card{\symtx})$ and $C_1^\prime = \log(2)$. This concludes the proof.
\end{proof}
\extend

\bibliographystyle{IEEEtran}
\bibliography{references}

\end{document}